\newlist{ealph}{enumerate}{1}
\setlist[ealph]{label=(\Alph*)}
\newcommand{\nc}{\newcommand}
\nc{\look}{\marginpar{\makebox(0,0){\raisebox{22pt}{$\bullet$}}}}
\newtheorem{theo}{Theorem}[section]
\newtheorem{ddef}[theo]{Definition}
\newtheorem{llem}[theo]{Lemma} 
\newtheorem{oobs}[theo]{Observation} 
\newtheorem{rrem}[theo]{Remark} 
\newtheorem{prop}[theo]{Proposition} 
\newtheorem{ccor}[theo]{Corollary} 
\newtheorem{claim}[theo]{Claim} 
\newtheorem{fact}[theo]{Fact} 
\newtheorem{pprov}[theo]{Proviso}
\newtheorem{qquest}[theo]{Question}
\newtheorem{eexam}[theo]{Example} 
\newtheorem{exercise}{Exercise}[section] 
\nc{\bT}{\begin{theo}}
\nc{\eT}{\end{theo}}
\nc{\bD}{\begin{ddef} \rm }
\nc{\eD}{\end{ddef}}
\nc{\bExe}{\begin{exercise} \rm }
\nc{\eExe}{\end{exercise}}
\nc{\bC}{\begin{ccor}}
\nc{\eC}{\end{ccor}}
\nc{\bCl}{\begin{claim}}
\nc{\eCl}{\end{claim}}
\nc{\bL}{\begin{llem}}
\nc{\eL}{\end{llem}}
\nc{\bP}{\begin{prop}}
\nc{\eP}{\end{prop}}
\nc{\bR}{\begin{rrem}}
\nc{\eR}{\end{rrem}}
\nc{\bO}{\begin{oobs}}
\nc{\eO}{\end{oobs}}
\nc{\bF}{\begin{fact}}
\nc{\eF}{\end{fact}}
\nc{\bProv}{\begin{pprov}}
\nc{\eProv}{\end{pprov}}
\nc{\bQ}{\begin{qquest}}
\nc{\eQ}{\end{qquest}}
\nc{\bE}{\begin{eexam} \rm }
\nc{\eE}{\end{eexam}}
\nc{\prf}{\begin{proof}}
\nc{\eprf}{\end{proof}}
\renewcommand{\phi}{\varphi}
\renewcommand{\geq}{\geqslant}
\renewcommand{\leq}{\leqslant}
\renewcommand{\subset}{\subseteq}
\nc{\cutout}[1]{\mbox{}}
\nc{\seelabel}[1]{\protect\label{#1}}
\newenvironment{romanenumerate}%
{\begin{list}{{\rm (\roman{enumi})}}{\usecounter{enumi}
\setlength{\labelwidth}{2cm}
\setlength{\itemindent}{0pt}
\setlength{\itemsep}{0.5\itemsep}
\setlength{\topsep}{\itemsep}
\setlength{\parsep}{0pt}
}}{\end{list}}
\nc{\bre}{\begin{romanenumerate}}
\nc{\ere}{\end{romanenumerate}}
\newenvironment{alphaenumerate}%
{\begin{list}{{\rm (\alph{enumii})}}{\usecounter{enumii}
\setlength{\labelwidth}{2cm}
\setlength{\itemindent}{0pt}
\setlength{\itemsep}{0.5\itemsep}
\setlength{\topsep}{\itemsep}
\setlength{\parsep}{0pt}
}}{\end{list}}
\nc{\bae}{\begin{alphaenumerate}}
\nc{\eae}{\end{alphaenumerate}}
\newenvironment{numenumerate}%
{\begin{list}{{\rm (\arabic{enumiii})}}{\usecounter{enumiii}
\setlength{\labelwidth}{2cm}
\setlength{\itemindent}{0pt}
\setlength{\itemsep}{0.5\itemsep}
\setlength{\topsep}{\itemsep}
\setlength{\parsep}{0pt}
}}{\end{list}}
\nc{\bne}{\begin{numenumerate}}
\nc{\ene}{\end{numenumerate}}
\newenvironment{fll}%
{\btab{@{}l}}%
{\etab}%
\newenvironment{tfll}%
{\btab[t]{@{}l}}%
{\etab}%
\nc{\barr}{\begin{array}}
\nc{\earr}{\end{array}}
\nc{\btab}{\begin{tabular}}
\nc{\etab}{\end{tabular}}
\nc{\bfll}{\begin{fll}}
\nc{\efll}{\end{fll}}
\nc{\btfll}{\begin{tfll}}
\nc{\etfll}{\end{tfll}}
\nc{\LLL}{\mbox{{$\cal L$}}}
\nc{\PL}{\mbox{\rm PL}}
\nc{\MSOL}{\mbox{\rm MSOL}}
\nc{\msol}{{\rm MSOL}}
\nc{\LTL}{\mbox{\rm LTL}}
\nc{\CTL}{\mbox{\rm CTL}}
\nc{\Lmu}{\mbox{${\rm L}_\mu$}}
\nc{\PML}{\mbox{\rm PML}}
\nc{\FOL}{\textsf{\upshape FO}}
\nc{\LL}{\textsf{\upshape L}}
\nc{\LC}{\textsf{\upshape C}}
\nc{\ISO}[1][]{\textsc{ISO}\ifthenelse{\equal{#1}{}}{}{(#1)}}
\nc{\IL}[1][]{\textsc{IL}\ifthenelse{\equal{#1}{}}{}{(#1)}}
\nc{\CONT}[1]{\textsc{Cont}(#1)}
\nc{\COMP}[1]{\textsc{Comp}(#1)}
\nc{\comp}{\textsc{Comp}}
\nc{\MATCH}[1]{\textsc{Match}(#1)}
\nc{\BISO}[1][]{\textsc{B-ISO}\ifthenelse{\equal{#1}{}}{}{(#1)}}
\nc{\NE}{{\mathbb X}}
\nc{\EV}{{\mathbb F}}
\nc{\AL}{{\mathbb G}}
\nc{\UN}{{\mathbb U}}
\nc{\PI}{{\bf I}}
\nc{\PII}{{\bf II}}
\nc{\str}[1]{{\mathcal{#1}}}
\nc{\FTS}{\mbox{\rm FTS}}
\nc{\TS}{\mbox{\rm TS}}
\nc{\WS}{\mbox{\rm WS}}
\nc{\FTSY}{\mbox{\rm FS}}
\nc{\TSY}{\mbox{\rm S}}
\nc{\T}{\mbox{\rm T}}
\nc{\Ptime}{\mbox{\sc P}}
\nc{\NP}{\mbox{\sc NP}}
\renewcommand{\Bar}[1]{{\bf #1}}
\nc{\abar}{\Bar{a}}
\nc{\bbar}{\Bar{b}}
\nc{\cbar}{\Bar{c}}
\nc{\dbar}{\Bar{d}}
\nc{\nbar}{\Bar{n}}
\nc{\mbar}{\Bar{m}}
\nc{\rbar}{\Bar{r}}
\nc{\sbar}{\Bar{s}}
\nc{\tbar}{\Bar{t}}
\nc{\ubar}{\Bar{u}}
\nc{\vbar}{\Bar{v}}
\nc{\wbar}{\Bar{w}}
\nc{\xbar}{\Bar{x}}
\nc{\ybar}{\Bar{y}}
\nc{\zbar}{\Bar{z}}
\nc{\Ibar}{\Bar{I}}
\nc{\Pbar}{\Bar{P}}
\nc{\Ubar}{\Bar{U}}
\nc{\Vbar}{\Bar{V}}
\nc{\Wbar}{\Bar{W}}
\nc{\Xbar}{\Bar{X}}
\nc{\Ybar}{\Bar{Y}}
\nc{\Zbar}{\Bar{Z}}
\nc{\starred}[1]{#1^{\ast}}
\nc{\ustarred}[1]{#1_{\ast}}
\nc{\strich}{\rule{0em}{1ex}'}
\nc{\nothing}{\rule{0em}{1ex}}
\nc{\nt}{\nothing}
\nc{\highnothing}{\rule{0em}{3ex}}
\nc{\hnt}{\highnothing}
\nc{\vhnt}{\rule{0em}{4ex}}
\nc{\vvhnt}{\rule{0em}{6ex}}
\nc{\s}[1]{\mbox{\boldmath $#1$}}
\nc{\restr}{\!\restriction\!}
\nc{\ssc}{\scriptscriptstyle}
\nc{\dom}{\mathrm{dom}}
\nc{\range}{{\rm im}}
\nc{\tp}{\mathrm{tp}}
\nc{\sktp}{\mathrm{tp}}
\nc{\atp}{\mathrm{atp}}
\nc{\etp}{\mathrm{etp}}
\nc{\bequiv}{\equiv_{\ssc \mathrm{bool}}}
\nc{\brck}[1]{[\![#1]\!]}
\nc{\rep}[2]{[\![#1^{\ssc \str{#2}}]\!]}
\nc{\unt}[1]{\underline{#1}}
\nc{\integers}{{\mathbb Z}}
\nc{\card}[2]{\#_{#1} \bigl( #2 \bigr)}
\nc{\Card}[2]{\#_{#1} \Bigl( #2 \Bigr)}
\nc{\calA}{\mbox{$\cal A$}}
\nc{\calO}{\mbox{$\cal O$}}
\nc{\calQ}{\mbox{$\cal Q$}}
\nc{\calI}{\mbox{$\cal I$}}
\nc{\calF}{\mbox{$\cal F$}}
\nc{\calK}{\mbox{$\cal K$}}
\nc{\calC}{\mbox{$\cal C$}}
\nc{\calP}{\mbox{$\cal P$}}
\nc{\ealm}{\exists^{\klgeq m}}
\nc{\eals}{\exists^{\klgeq s}}
\nc{\eem}{\exists^{\kleq m}}
\nc{\ees}{\exists^{\kleq s}}
\nc{\diag}{D}
\nc{\fracsimeq}{\approx}
\nc{\fracsimeqbool}{\approx_{\ssc \mathrm{bool}}}
\nc{\ins}[1]{\begin{quotation} \sloppy \noindent \bf #1 \end{quotation}}
\newcounter{menum}
\nc{\rightreach}{\rightsquigarrow}
\nc{\leftrightreach}{\leftrightsquigarrow}
\nc{\E}{\exists}
\nc{\A}{\forall}
\nc{\N}{{\mathbb N}}
\nc{\B}{{\mathbb B}}
\nc{\Q}{{\mathbb Q}}
\nc{\R}{{\mathbb R}}
\nc{\Z}{{\mathbb Z}}
\renewcommand{\epsilon}{\varepsilon}
\newcounter{rbcounter}
\renewcommand{\vec}[1]{\mathbf{#1}}
\newcommand{\CA}{\str{A}}
\newcommand{\CB}{\str{B}}
\newcommand{\CG}{\str{G}}
\newcommand{\CH}{\str{H}}
\newcommand{\CK}{\str{K}}
\newcommand{\CT}{\str{T}}
\newcommand{\CX}{\str{X}}
\newcommand{\hCX}{\widehat{\str{X}}}
\newcommand{\hX}{\widehat{X}}
\newcommand{\conc}{\,\widehat{\ }\,}
\newcommand{\IFF}{\,\Leftrightarrow\,}
\begin{document}

\title{Pebble Games and Linear Equations}

\author{Martin Grohe\\
\normalsize RWTH Aachen University
\and 
Martin Otto\\
\normalsize Technische Universit\"at Darmstadt}

\maketitle

\begin{abstract}
We give a new, simplified and detailed account of the 
correspondence between levels of the Sherali--Adams relaxation of 
graph isomorphism and levels of pebble-game equivalence with
counting (higher-dimensional Weisfeiler--Lehman colour refinement).
The correspondence between basic colour refinement and fractional
isomorphism,
due to Tinhofer~\cite{tin86,tin91} and Ramana, Scheinerman and Ullman~\cite{RamanaScheinermanUllman}, 
is re-interpreted as the base level of Sherali--Adams  
and generalised to higher levels in this sense by Atserias and 
Maneva~\cite{AtseriasManeva} and Malkin~\cite{mal14}, who prove that the two resulting 
hierarchies interleave. 
In carrying this analysis further, we here give
(a) a precise 
characterisation of the level~$k$ Sherali--Adams relaxation in terms
of a modified counting pebble game;
(b) a variant of the Sherali--Adams levels that precisely match
the $k$-pebble counting game; 
(c) a proof that the interleaving between 
these two hierarchies is strict.
We also investigate the variation based on boolean arithmetic instead 
of real/rational arithmetic and obtain 
analogous correspondences and separations for 
plain $k$-pebble equivalence (without counting). Our
results are driven by considerably simplified accounts of the
underlying combinatorics and linear algebra.
\end{abstract}

\tableofcontents 

\section{Introduction}
We study a surprising connection between equivalence in
finite variable logics and a linear programming approach to the graph
isomorphism problem. This connection has recently been uncovered by
Atserias and Maneva~\cite{AtseriasManeva} and, independently,
Malkin~\cite{mal14},
building on earlier work 
of Tinhofer~\cite{tin86,tin91} and Ramana, Scheinerman and Ullman~\cite{RamanaScheinermanUllman} that just
concerns the 2-variable case.

Finite variable logics play a central role in finite model theory. Most
important for this paper are finite variable logics with counting, which have
been specifically studied in connection with the question for a logical
characterisation of polynomial time and in connection with the graph
isomorphism problem (e.g.\
\cite{caifurimm92,graott93,gro10,immlan90,lau10,ott97}). Equivalence in finite variable logics can be
characterised in terms of simple combinatorial games known as pebble games.
Specifically, $\LC^k$-equivalence can be characterised by the bijective
  $k$-pebble game introduced by Hella~\cite{hel92}. Cai, F\"urer and
Immerman \cite{caifurimm92} observed that $\LC^k$-equivalence exactly
corresponds to indistinguishability by the $k$-dimensional Weisfeiler-Lehman
(WL) algorithm,\footnote{The dimensions of the WL algorithm are counted
  differently in the literature; what we call ``$k$-dimensional'' here is
  sometimes called ``$(k-1)$-dimensional''.} a combinatorial graph
isomorphism algorithm that goes back to work of
Weisfeiler and Lehman in the 1970s (for example, \cite{wei76}; see
\cite{caifurimm92} for an account of the history of the algorithm).
The 2-dimensional version of the WL
algorithm precisely corresponds to an even simpler isomorphism algorithm known
as colour refinement.  

The isomorphisms between two graphs can be described by the integral
solutions of a system of linear equations and inequalities. If we have two graphs with
adjacency matrices $A$ and $B$, then each isomorphism from the first
to the second corresponds to a permutation matrix $X$ such that
$X^tAX=B$, or equivalently
\begin{equation}
  \label{eq:iso-lp}
  AX=XB.
\end{equation}
If we view the entries of $X$ as variables, this equation
corresponds to a system of linear equations. We can add inequalities
that force $X$ to be a permutation matrix and obtain a system $\ISO$ of linear
equations and inequalities whose integral solutions correspond to the
isomorphisms between the two graphs.
In particular, the system $\ISO$ has
an integral solution if and only if the two graphs are isomorphic.

What happens if we drop the integrality constraints, that is, if we admit
arbitrary real solutions of the system $\ISO$? 
We can ask for doubly stochastic matrices $X$ satisfying 
equation \eqref{eq:iso-lp}. (A real matrix is
\emph{doubly stochastic} if its entries are non-negative and all row sums
and column sums are one.) Tinhofer~\cite{tin86,tin91} proved a
beautiful result that establishes a connection between linear algebra
and logic: the
system $\ISO$ has a real solution if, and only if, the colour refinement
algorithm does not distinguish the two graphs with
adjacency matrices $A$ and $B$. Recall that the latter is equivalent
to the two graphs being $\LC^2$-equivalent.

To bridge the gap between integer linear programs and their
LP-relaxations, researchers in combinatorial optimisation often add
additional constraints to the linear programs to bring them closer to
their integer counterparts. The Sherali--Adams hierarchy
\cite{sheada90} of relaxations
gives a systematic way of doing this. For every integer linear program
$\IL$ in $n$ variables and every positive integer $k$, there is
a \emph{rank-$k$
  Sherali--Adams relaxation} $\IL[k]$ of $\IL$, such that $\IL[1]$ is the
standard LP-relaxation of $\IL$ where all integrality constraints are
dropped and $\IL[n]$ is equivalent to $\IL$. There is a considerable body
of research studying the strength of the various levels of this and
related 
hierarchies (e.g.~\cite{bieozb04,burgalhoo+03,charmakmak09,matsin09,schtretul07,sch08}). 

Quite surprisingly, Atserias and Maneva~\cite{AtseriasManeva} and
Malkin~\cite{mal14} were
able to lift Tinhofer's result, which we may now restate as an
equivalence between $\ISO[1]$ and $\LC^2$-equivalence, to a close
correspondence between the higher levels of the Sherali--Adams
hierarchy for $\ISO$ and the logics $\LC^k$. They proved for every $k\ge 2$:
\begin{enumerate}
\item if $\ISO[k]$ has a (real) solution, then the two graphs 
are $\LC^k$-equivalent;
\item if the two graphs are $\LC^k$-equivalent, then $\ISO[k-1]$ has a solution.
\end{enumerate}
Atserias and Maneva~\cite{AtseriasManeva} used these results to transfer results about the
logics $\LC^k$ to the world of polyhedral combinatorics and combinatorial optimisation, and
conversely, results about the Sherali--Adams hierarchy to logic.

Atserias and Maneva~\cite{AtseriasManeva} left open the question whether the
interleaving between the levels of the Sherali--Adams hierarchy and the
finite-variable-logic hierarchy is strict or whether either the correspondence
between $\LC^k$-equivalence and $\ISO[k]$ or the correspondence between
$\LC^{k}$-equivalence and $\ISO[k-1]$ is exact. Note that for $k=2$ the
correspondence between $\LC^{k}$-equivalence and $\ISO[k-1]$ is exact by
Tinhofer's theorem. We prove that for all $k\ge3$ the
interleaving is strict. However, we can prove an exact correspondence between
$\ISO[k-1]$ and a variant of the bijective $k$-pebble game that
characterises $\LC^k$-equivalence. This variant, which we call the weak
bijective $k$-pebble game, is actually equivalent to a game called
$(k-1)$-sliding game by Atserias and Maneva.

Furthermore, we prove that a natural combination of
equalities from $\ISO[k]$ and $\ISO[k-1]$ gives a linear
program $\ISO[k-1/2]$ that characterises $\LC^k$-equivalence
exactly. Malkin~\cite{mal14} gives an alternative characterisation of
$\LC^k$-equivalence in terms of the Sherali-Adams relaxations of a
different polytope. Building on our work, yet another algebraic characterisation of 
$\LC^k$-equivalence has recently been given in
\cite{bergro15}.

To obtain these results, we give simple new, and arguably simpler proofs of the theorems of
Tinhofer and of Atserias and Maneva. 
In fact, the linear algebra we use is so simple that much of it can be
carried out not only over the field of real numbers, but over
arbitrary semirings. By using similar algebraic arguments over the
boolean semiring (with disjunction as addition and conjunction as
multiplication), we obtain analogous results to those for
$\LC^k$-equivalence for the ordinary $k$-variable logic $\LL^k$, characterising
$\LL^k$-equivalence, i.e., $k$-pebble game equivalence without counting, 
by systems of `linear' equations over the boolean semiring.

\section{Finite variable logics and pebble games}
\label{sec:logic}
We assume the reader to be familiar with the basics of first-order logic
\FOL. We almost exclusively consider first-order logic over finite
graphs, which we view as finite relational structures with one binary
relation. We
assume graphs to be undirected and loop-free.
For every positive integer $k$, we let $\LL^k$ be the fragment of
$\FOL$ consisting of all formulae that contain at most $k$ distinct
variables. 

We write $\str{A}\equiv_\LL^k \str{B}$ to denote that two structures
$\str{A},\str{B}$ are \emph{$\LL^k$-equivalent}, that is, satisfy the same
$\LL^k$-sentences. $\LL^k$-equivalence can be characterised in terms
of the \emph{$k$-pebble game}, played by two players on a pair $\str{A},\str{B}$
of structures. A \emph{play} of the game consists of a (possibly
infinite) sequence of \emph{rounds}. In each round, player \PI\
picks up one of his pebbles and places it on an element of one of the
structures $\str{A},\str{B}$.
Player \PII\ answers by picking up her pebble with the same label
and placing it on an element of the other structure.

Note that after each round $r$ there is a subset $p\subseteq \str{A}\times
\str{B}$ consisting of the at most $k$ pairs of elements on which the pairs
of corresponding pebbles are placed. We call $p$ the \emph{position}
after round $r$. Player \PII\ wins the play if every position that
occurs is a local isomorphism, that is, a local mapping from $\str{A}$
to $\str{B}$ that is injective and preserves membership and non-membership
in all relations (adjacency and non-adjacency if $\str{A}$ and $\str{B}$
are graphs). 

\begin{theo}[Barwise~\cite{bar77}, Immerman~\cite{imm82}]
  $\str{A}\equiv_\LL^k \str{B}$ if, and only 
if, player \PII\ has a winning strategy
  for the $k$-pebble game on $\str{A},\str{B}$.
\end{theo}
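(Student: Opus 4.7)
The plan is to prove both directions by structural / back-and-forth arguments, exploiting the match between the $k$ pebbles and the $k$ variables available in any $\LL^k$-formula. For the forward direction, assume $\str{A}\equiv_\LL^k\str{B}$. I would let \PII\ maintain the invariant that after each round the current position $p=\{(a_{i_1},b_{i_1}),\dots,(a_{i_r},b_{i_r})\}$ has the property that the tuples $\abar,\bbar$ of pebbled elements satisfy exactly the same $\LL^k$-formulae in the free variables $x_{i_1},\dots,x_{i_r}$. The invariant holds initially since $\str{A}\equiv_\LL^k\str{B}$. To preserve it when \PI\ places pebble~$j$ on $a\in A$, I would argue that the set of $\LL^k$-formulae $\phi(x_{i_1},\dots,x_{i_r},x_j)$ satisfied by $(\abar,a)$ in $\str{A}$ projects via $\exists x_j$ to $\LL^k$-formulae satisfied in $\str{A}$, hence by the invariant also in $\str{B}$; on finite structures that set is captured by a single $\LL^k$-formula (its $\LL^k$-type), so some $b\in B$ witnesses all of them simultaneously, and any such $b$ restores the invariant. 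Since atomic formulae belong to $\LL^k$, every reachable position is a local isomorphism and \PII\ wins.

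For the converse, fix a winning strategy $\sigma$ for \PII. I would prove by induction on the construction of $\phi(x_{i_1},\dots,x_{i_r})\in\LL^k$ that for every position $p$ arising in a play consistent with $\sigma$, with corresponding tuples $\abar,\bbar$, one has $\str{A}\models\phi(\abar)\Iff\str{B}\models\phi(\bbar)$. Atomic cases follow from the local-isomorphism condition guaranteed by $\sigma$; Boolean cases are immediate from the IH applied at the same position. The quantifier case $\exists x_j\,\psi$ is crucial: if $\str{A}\models\exists x_j\,\psi(\abar)$ with witness $a\in A$, let \PI\ move pebble~$j$ onto $a$; \PII's response according to $\sigma$ supplies some $b\in B$, yielding a new position still consistent with $\sigma$, so by the IH $\str{B}$ satisfies $\psi$ at the updated tuple, which gives $\str{B}\models\exists x_j\,\psi(\bbar)$. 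The symmetric argument, with \PI\ playing in $\str{B}$, handles the reverse implication. Applying the statement at the empty initial position yields $\str{A}\equiv_\LL^k\str{B}$.

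The main obstacle is a subtle point in the forward direction: one must know that ``same $\LL^k$-type'' really enjoys the back-and-forth property in finitary $\LL^k$. This hinges on summarising a full $\LL^k$-type by a single $\LL^k$-formula, which fails for arbitrary structures but holds on finite ones because, up to logical equivalence, there are only finitely many $\LL^k$-formulae realised over a given finite structure with any bounded free-variable set. Alternatively, one may pass to $\LL^k_{\infty\omega}$, for which the type-as-formula step is automatic, and note that over finite structures this does not change the equivalence relation.
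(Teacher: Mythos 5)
The paper cites this theorem (to Barwise and Immerman) as background and gives no proof of its own, so there is nothing to compare against. Your argument is the standard and correct one: a back-and-forth strategy based on preservation of $\LL^k$-types for the forward direction, and an induction on formula structure that reads a winning strategy as an elementary equivalence witness for the converse. You are also right to flag the only genuinely delicate point, namely that the forward direction needs the $\LL^k$-type of a tuple to be pinned down by a single $\LL^k$-formula; this is where finiteness of the structures is used (or, equivalently, where one passes to $\LL^k_{\infty\omega}$ and observes that it coincides with $\LL^k$ over finite structures). One small point worth making explicit if you wrote this out in full: in both directions the pebble being moved may already be on the board, so the position first shrinks and then grows; the type-restriction step this requires is harmless but should be mentioned so that the invariant is stated over positions of all sizes $\le k$, not just extensions.
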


We extend
$\LL^k$-equivalence to structures with distinguished
elements. For tuples $\vec a$ and $\vec b$ of the same length $\ell\le
k$ we let $\str{A},\vec a\equiv_\LL^k\str{B},\vec b$ if $\str A,\vec
a$ and $\str{B},\vec b$ satisfy the same $\LL^k$-formulae $\phi(\vec
x)$ with $\ell$ free variables $\vec x$. The pebble game
characterisation extends: $\str{A},\vec a\equiv_\LL^k\str{B},\vec b$
if, and only if, player \PII\ has a winning strategy for the
$k$-pebble game on $\str{A},\str{B}$ starting with pebbles on $\vec a$
and the corresponding pebbles on $\vec b$. The \emph{$\LL^k$-type} of
a tuple $\vec a$ in a structure $\str A$ is the $\equiv_\LL^k$-equivalence
class of $\str A,\vec a$. More syntactically, we may also view the $\LL^k$-type of $\vec
a$ as the set of all $\LL^k$-formulae $\phi(\vec
x)$ satisfied by $\str A,\vec a$.

Let us turn to the $k$-variable counting logics. It is convenient to start with the (syntactical) extension $\LC$ of $\FOL$ by \emph{counting quantifiers}
$\exists^{\ge n}$. The semantics of these counting quantifiers is the
obvious one: $\exists^{\ge n}x\,\phi$ means that there are at least $n$
elements $x$ such that $\phi$ is satisfied. Of course this can be
expressed in $\FOL$, but only by a formula that uses at least $n$
variables. For all positive integers $k$, we let $\LC^k$ denote the
$k$-variable fragment of $\LC$. Whereas $\LC$ and $\FOL$ have the same
expressive power, $\LC^k$ is strictly more expressive than
$\LL^k$. 

We write $\str{A}\equiv_\LC^k \str{B}$ to indicate that structures $\str{A}$ and $\str{B}$ are
$\LC^k$-equivalent. $\LC^k$-equivalence can be characterised in terms
of the \emph{bijective $k$-pebble game},\label{bijective-game} which, like the $k$-pebble
game, is played by two players by placing $k$ pairs of pebbles on a
pair of structures $\str{A},\str{B}$. The rounds of the bijective game are as
follows. Player \PI\ picks up one of his pebbles, and player \PII\
picks up her corresponding pebble. Then player \PII\ chooses a
bijection $f$ between $\str{A}$ and $\str{B}$ (if no such bijection exists, that
is, if the structures have different cardinalities, player \PII\
immediately loses). Then player \PI\ places his pebble on an element
$a$ of $\str{A}$, and player \PII\ places her pebble on $f(a)$.
Again, player \PII\ wins a play if all positions are local
isomorphisms.

\begin{theo}[Hella~\cite{hel92}]
  \label{theo:hel}
  $\str{A}\equiv_\LC^k \str{B}$ if, and only if, player \PII\ has a winning strategy
  for the bijective $k$-pebble game on $\str{A},\str{B}$.
\end{theo}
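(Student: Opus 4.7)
The plan is to prove both directions by essentially reducing the game semantics to a strategy tree indexed by $\LC^k$-types, arguing generically by simultaneous induction on tuple length $\ell \le k$ and on formula complexity. It is convenient to prove the theorem in the standard strengthened form for partial tuples: for all $\ell \le k$ and $\ell$-tuples $\vec a$ in $\str A$, $\vec b$ in $\str B$, one has $\str A,\vec a \equiv_{\LC}^k \str B,\vec b$ if, and only if, player \PII{} has a winning strategy for the bijective $k$-pebble game starting from the position $\{(a_i,b_i) : i \le \ell\}$. The case $\ell=0$ recovers the stated theorem.

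For the direction from equivalence to a winning strategy, I would let \PII{} maintain as an invariant that after each round the $\LC^k$-types of the pebbled tuples agree. Suppose the current position is $(\vec a,\vec b)$ with $\str A,\vec a \equiv_{\LC}^k \str B,\vec b$ and \PI{} picks up pebble $i$. Let $\vec a^-$, $\vec b^-$ denote the resulting $(\ell-1)$-tuples; equivalence is preserved for these since any $\LC^k$-formula $\phi(\vec x^-)$ can be treated as one with an unused variable. Partition $A$ by the $\LC^k$-types realized over $\vec a^-$, and partition $B$ by those realized over $\vec b^-$. For every $\LC^k$-formula $\psi(\vec x^-, y)$ the counting quantifier $\exists^{\ge n} y\,\psi$ forces the number of elements of a given type class in $A$ to equal the number of elements of the corresponding class in $B$. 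Hence \PII{} can choose any bijection $f : A \to B$ that maps each $A$-type class onto the $B$-type class with the same $\LC^k$-type over $(\vec a^-,\vec b^-)$, and then for any move $a$ by \PI{} the new position $(\vec a^- a, \vec b^- f(a))$ again satisfies the invariant. Since every position under this invariant is a local isomorphism (preservation of atomic formulae is built into $\LC^k$-equivalence), the strategy wins.

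For the converse, I would proceed by induction on $\phi(\vec x) \in \LC^k$ and show that any winning-strategy position $(\vec a,\vec b)$ satisfies $\str A,\vec a \models \phi \Leftrightarrow \str B,\vec b \models \phi$. Atomic and boolean cases are immediate from the local-isomorphism condition and induction. For a counting quantifier $\phi = \exists^{\ge n} x_i\,\psi$ (with $i$ the index of some variable possibly already pebbled, which is accommodated by picking up pebble $i$ first), consider the bijection $f$ supplied by \PII's strategy after \PI{} picks up pebble $i$. If $a_1,\dots,a_n \in A$ witness $\psi$, then $f(a_1),\dots,f(a_n)$ are $n$ distinct elements of $B$, each of which witnesses $\psi$ in $\str B$ by the inductive hypothesis applied to the new positions $(\vec a^- a_j, \vec b^- f(a_j))$, from which \PII{} still has a winning strategy. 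This yields the $\ge n$ witnesses in $\str B$, and symmetry gives the converse.

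The main obstacle is the bijection construction in the first direction: one must argue that matching $\LC^k$-type classes in $\str A$ and $\str B$ really do have equal cardinality. This is precisely where counting quantifiers enter, and the subtlety is that the relevant types are $\LC^k$-types with one more free variable than the current tuple has pebbles—so the induction parameter in the global argument must be set up to allow this incremental use of one extra variable slot, which is exactly what having $k$ pebbles and $k$ variables makes possible. Once the cardinality matching is in place, both directions fit together cleanly.
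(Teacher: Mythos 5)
The paper does not prove this theorem; it is stated with a citation to Hella~\cite{hel92} as background. Your proposed argument is, however, essentially the standard textbook proof, and it is correct in outline. The strengthened induction over all partial positions $(\vec a,\vec b)$ with $\ell\le k$ pebbles is the right framework, the bijection construction in the forward direction is the right idea, and the formula-induction in the converse (with the bijection $f$ converting $n$ witnesses to $n$ distinct witnesses, and $f^{-1}$ giving the reverse inequality) is sound. The one place you should be more explicit is the step you yourself flag as ``the main obstacle'': the claim that matching type classes have equal cardinality. This rests on the background fact that, over finite structures, each $\LC^k$-type of an $\ell$-tuple ($\ell\le k$) is defined by a single characteristic $\LC^k$-formula $\phi_\tau(\vec x^-,y)$; only with such a $\phi_\tau$ in hand do the counting quantifiers $\exists^{\ge n}y\,\phi_\tau$ force $|\{a : \tp(\vec a^- a)=\tau\}| = |\{b : \tp(\vec b^- b)=\tau\}|$ from $\str A,\vec a^-\equiv_\LC^k\str B,\vec b^-$. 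This definability of types is standard (it is what makes the Weisfeiler--Lehman refinement stabilise and is also implicit in the paper's treatment of $k$-WL), but since it is exactly the mechanism by which counting quantifiers yield the cardinality match, it deserves to be named rather than absorbed into the sentence about $\exists^{\ge n}y\,\psi$. With that spelled out, both directions close cleanly.
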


As with $\LL^k$-equivalence, we extend $\LC^k$-equivalence to structures
with distinguished elements, writing $\str A,\vec a\equiv_\LC^k\str
B,\vec b$. Again, the pebble-game characterisation of the equivalence
extends. We define $\LC^k$-types analogously to $\LL^k$-types.

The \emph{colour refinement} algorithm is a simple combinatorial
heuristic for testing whether two graphs are isomorphic. Given two
graphs $\str{A}$ and $\str{B}$, which we assume to be disjoint, it
computes a colouring of their vertices by the following iterative
procedure: Initially, all vertices have the same colour. Then in each
round, the colouring is refined by assigning different colours to
vertices that have a different number of neighbours of at least one
colour assigned in the previous round. Thus after the first round, two
vertices have the same colour if, and only if, they have the same
degree. After the second round, two vertices have the same colour if,
and only if, they have the same degree and for each $d$ the same
number of neighbours of degree $d$. The algorithm stops if no further
refinement is achieved; this happens after at most
$|\str{A}|+|\str{B}|$ rounds. We call the resulting colouring of
$\str{A}\cup \str{B}$ the \emph{stable colouring} of
$\str{A},\str{B}$. If the stable colouring differs on the two graphs,
that is, for some colour $c$ the graphs have a different number of
vertices of colour $c$, then we say that colour refinement
\emph{distinguishes} the graphs.

\begin{theo}[Immerman and Lander~\cite{immlan90}]\label{theo:immlan}
  $\str{A}\equiv_\LC^2 \str{B}$ if, any only if, 
colour refinement does not distinguish
  $\str{A}$ and $\str{B}$.
\end{theo}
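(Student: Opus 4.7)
The plan is to establish both directions separately, using the bijective pebble game characterisation of Theorem~\ref{theo:hel} for the backward direction. For the forward direction I will show that each round of colour refinement is definable in $\LC^2$; for the backward direction I will construct a winning strategy for Player \PII\ out of the stable colouring and then invoke Hella's theorem.

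For $(\Rightarrow)$, assume $\str{A}\equiv_\LC^2\str{B}$. I would define by induction on $i$, for each colour $c$ appearing at round $i$ of colour refinement, an $\LC^2$-formula $\phi^i_c(x)$ in one free variable such that $\str{G}\models\phi^i_c(v)$ iff $v$ receives colour $c$ at round~$i$. Take $\phi^0(x)\equiv\top$. For the inductive step, colour $c$ at round $i{+}1$ is determined by a previous colour $c'$ together with counts $n_{c''}$ of colour-$c''$ neighbours at round $i$, so set
\[
\phi^{i+1}_c(x)\;\equiv\;\phi^i_{c'}(x)\;\wedge\;\bigwedge_{c''}\E^{=n_{c''}} y\bigl(E(x,y)\wedge \phi^i_{c''}(y)\bigr),
\]
where $\phi^i_{c''}(y)$ is obtained from $\phi^i_{c''}(x)$ by the usual swap of the two variables $x,y$ available in $\LC^2$. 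Since $\str{A}\equiv_\LC^2\str{B}$, the sentence $\E^{=n}x\,\phi^i_c(x)$ has the same truth value on both graphs for every $n,i,c$, so the colour-class sizes agree at every round; in particular the stable colourings match.

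For $(\Leftarrow)$, assume colour refinement does not distinguish the graphs, and let $\chi$ be the stable colouring, with $|c^{\str A}|=|c^{\str B}|$ for every colour $c$. I would describe a strategy for \PII\ in the bijective $2$-pebble game that maintains the invariant that the current position is a local isomorphism pairing vertices of equal stable colour. When \PI\ picks up one of the two pebbles and the remaining pair is $(a',b')$ with $\chi(a')=\chi(b')$ (or no pair remains), \PII\ must exhibit a bijection $f\colon\str{A}\to\str{B}$ with $f(a')=b'$ that preserves $\chi$ and such that neighbours of $a'$ of each colour $c$ map bijectively onto neighbours of $b'$ of colour $c$, and likewise for non-neighbours. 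The fixed-point property of $\chi$ gives $|N_{\str A}(a')\cap c^{\str A}|=|N_{\str B}(b')\cap c^{\str B}|$ from $\chi(a')=\chi(b')$, and the hypothesis $|c^{\str A}|=|c^{\str B}|$ then forces equality of the non-neighbour counts too; so $f$ can be assembled block by block on this refined partition. Whichever vertex \PI\ places the pebble on, the new position is again colour-matched and a local isomorphism, so \PII\ wins and Theorem~\ref{theo:hel} yields $\str{A}\equiv_\LC^2\str{B}$.

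The main obstacle is the size accounting for the bijection $f$ in the backward direction: one must simultaneously respect the fixed pair $(a',b')$, the colour partition of both sides, and the split of each colour class into neighbours and non-neighbours of the fixed vertex; both hypotheses (stability of $\chi$ and equality of total colour-class sizes) are essential to make all these block sizes match. The forward direction is more routine; the only subtlety there is the variable swap in $\phi^i_{c''}(y)$, which is harmless because $\LC^2$ permits free reuse of its two variables.
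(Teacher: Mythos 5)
The paper cites this as a result of Immerman and Lander and gives no proof of its own, so there is no in-paper argument to compare against. Your proof is nevertheless correct and follows the standard route: for the forward direction, observe that each round of colour refinement is $\LC^2$-definable and that $\LC^2$-equivalence therefore forces agreement on every colour-class size at every round; for the backward direction, build a colour- and neighbourhood-respecting bijection anchored at the single remaining pebble pair, using stability of $\chi$ for the neighbour counts and equality of total class sizes for the non-neighbour counts, and invoke Hella's theorem (Theorem~\ref{theo:hel}). Two small points worth making explicit: (i) colour refinement is run on the disjoint union $\str A\,\dot\cup\,\str B$, but because the union is disjoint the colour a vertex receives depends only on its own component, so agreement of class sizes across all rounds in the two graphs is exactly what ``does not distinguish'' means; (ii) when the fixed pebble pair $(a',b')$ itself lies in the colour class $c$ under consideration, you want the bijection restricted to $c$ to map $a'\mapsto b'$, which is consistent with your neighbour/non-neighbour split precisely because the graphs are loop-free, so $a'$ is a non-neighbour of itself. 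It may also interest you that the backward direction of your argument is essentially the proof the paper does spell out for the closely related Lemma~\ref{lem:c2stable} (equivalence of $\LC^2$-equivalence with the existence of equivalent stable partitions), whose forward direction in turn relies on Theorem~\ref{theo:immlan} as a black box.
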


The \emph{$k$-dimensional Weisfeiler-Lehman algorithm} (for short:
$k$-WL) is a generalisation of the colour
refinement algorithm, which instead of vertices colours $k$-tuples of vertices.
Given two structures $\str{A}$ and $\str{B}$, which we assume to be disjoint, $k$-WL
iteratively computes a colouring of $\str{A}^k\cup \str{B}^k$.  Initially,
two tuples $\vec a=(a_1,\ldots,a_k),\vec b=(b_1,\ldots,b_k)\in \str{A}^k\cup \str{B}^k$ get the same colour if the
mapping defined by $p(a_i)=b_i$ 
is a local isomorphism. In each round of the algorithm, the
colouring is refined by assigning different colours to tuples that for
some $j\in[k]$ and some colour $c$ have different numbers of
$j$-neighbours of colour $c$ in their respective graphs. Here we call
two $k$-tuples \emph{$j$-neighbours} if they differ only in their
$j$th component.
The algorithm
stops if no further refinement is achieved; this happens after at most
$|\str{A}|^k+|\str{B}|^k$ rounds. If after the refinement
process the colourings of the two graphs differ, that is, for some
colour $c$ the graphs have a different number of $k$-tuples of colour $c$, then
we say that $k$-WL \emph{distinguishes} the
graphs. 

\begin{theo}[Cai, F\"urer, and Immerman~\cite{caifurimm92}]
  $\str{A}\equiv_\LC^k \str{B}$ if, and only if, 
$k$-WL does not distinguish $\str{A}$ and $\str{B}$.
\end{theo}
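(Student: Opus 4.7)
The plan is to prove by induction on $r \ge 0$ the following refined claim: two $k$-tuples $\vec{a}\in\str{A}^k$ and $\vec{b}\in\str{B}^k$ receive the same $k$-WL colour after $r$ refinement rounds if and only if they satisfy the same $\LC^k$-formulae $\phi(x_1,\ldots,x_k)$ of quantifier rank at most $r$. Since the refinement stabilises at some finite round and only finitely many $\LC^k$-types arise over a fixed pair of finite structures, the stable colour classes will then coincide with the complete $\LC^k$-types of $k$-tuples. Passing from $k$-tuple types to $\LC^k$-sentences is standard: every $\LC^k$-sentence can be translated into a count over $k$-tuple types, and such counts are determined by the multiset of stable $k$-WL colours.

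For the base case $r=0$, the initial colour of a $k$-tuple encodes its atomic diagram (which coordinates coincide and which pairs are adjacent), which exactly matches the information carried by quantifier-free $\LC^k$-formulae. For the inductive step, assume the claim at round $r$. Over the fixed finite structures there are only finitely many rank-$r$ types, so each round-$r$ colour $c$ is represented by a single $\LC^k$-formula $\phi_c(\vec{x})$ of rank $r$. Two tuples $\vec{a},\vec{b}$ then receive the same colour at round $r+1$ if and only if, for every coordinate $j\in[k]$, every round-$r$ colour $c$, and every $n\in\N$, either both tuples satisfy $\exists^{\ge n}x_j\,\phi_c(\vec{x})$ or neither does. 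Under boolean combinations, the family of formulae generated this way is exactly the set of rank-$(r+1)$ $\LC^k$-formulae in $k$ free variables, closing the induction.

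The sentence-level direction then follows easily. If $k$-WL produces identical stable colour multisets on $\str{A}$ and $\str{B}$, then the two structures have the same multiset of $\LC^k$-types of $k$-tuples, and the truth of every $\LC^k$-sentence is a count over unions of such type classes, which therefore agrees on both sides. Conversely, if some stable colour class has different sizes in $\str{A}$ and $\str{B}$, one can choose thresholds so that an appropriate nested counting sentence $\exists^{\ge n_1}x_1\cdots\exists^{\ge n_k}x_k\,\phi_c(\vec{x})$ distinguishes the two structures.

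The main obstacle lies in the induction step: one has to argue cleanly that the single-counting-quantifier closures of the rank-$r$ type formulae, together with boolean combinations, really exhaust the rank-$(r+1)$ $\LC^k$-formulae in $k$ free variables. Reasoning via complete types rather than individual formulae is the cleanest route, since only finitely many types exist on the fixed pair of structures and each $k$-WL round then corresponds to precisely one layer of counting quantifiers (here the convention that $\exists^{\ge n}$ contributes depth one, not $n$, to the quantifier rank is essential for the round-to-rank correspondence to be exact).
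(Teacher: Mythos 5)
The paper cites Cai, F\"urer and Immerman for this theorem without reproducing a proof, so there is no in-paper argument to compare against; I am assessing your argument on its own terms. Your strategy --- an induction matching round-$r$ $k$-WL colours with $\LC^k$-formulae of quantifier rank at most $r$ --- is the standard direct proof and is sound in outline, and you correctly identify the one nontrivial technical hinge (that complete rank-$r$ types are mutually exclusive, so counting quantifiers distribute additively over their disjunctions).

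However, your inductive ``iff'' is not correct as written: you claim that $\vec a$ and $\vec b$ receive the same round-$(r{+}1)$ colour if and only if they agree on every formula $\exists^{\ge n} x_j\,\phi_c(\vec x)$. The ``if'' direction fails, because two $k$-tuples can have identical multisets of $j$-neighbour colours for every $j$ while themselves having \emph{different} round-$r$ colours. Concretely, take $k=2$ on an edgeless graph with $n\ge 2$ vertices: $(a,a)$ and $(b,b')$ with $b\neq b'$ have different atomic (round-$0$) types, yet for each $j\in\{1,2\}$ both have exactly one diagonal and $n-1$ off-diagonal $j$-neighbours, so a ``counts only'' update would merge them at round $1$ and never separate them again, while $x_1=x_2$ distinguishes them in $\LC^2$. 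The paper's definition says the colouring is \emph{refined} each round, which means the old colour is carried along; so your characterisation must also require equality of round-$r$ colours, and correspondingly the boolean combinations that generate the round-$(r{+}1)$ classes must include the formulae $\phi_c$ themselves (rank $r\le r+1$), not only $\exists^{\ge n} x_j\,\phi_c$. With that repair the induction closes as you intend. Finally, the sentence-level step would benefit from the remark that the initial colouring encodes the equality type, so the stable colour of the diagonal tuple $(a,\dots,a)$ determines and is determined by the $\LC^k$-type of $a$; given that, your telescoping-sum argument showing that a discrepancy in some stable colour count must surface in some nested counting sentence $\exists^{\ge n_1}x_1\cdots\exists^{\ge n_k}x_k\,\phi_c(\vec x)$ does go through.
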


More significantly, Cai, F\"urer, and Immerman~\cite{caifurimm92}
proved that for all $k$ there are nonisomorphic graphs $\str{A}_k,\str{B}_k$ of size $O(k)$ such
that $\str{A}\equiv_\LC^k \str{B}$.

Note that the previous two theorems imply that colour refinement and
$2$-WL distinguish the same graphs.

There are also `boolean'
versions of the two algorithms 
characterising $\LL^k$-equivalence (see \cite{ott97}).

\section{Basic combinatorics and linear algebra}%
\label{comlinalgsec}
We consider matrices with entries 
in $\B = \{0,1\}$, $\Q$ or $\R$. A matrix $X \in \R^{m,n}$ with
$m$ rows and $n$ columns has entry $X_{ij}$ in
row $i \in [m] = \{ 1,\ldots, m\}$ and 
column $j \in [n] = \{ 1,\ldots, n \}$.
We write $E_n$ for the $n$-dimensional unit matrix.

We write $X \geq 0$ to
say that (the real or rational) matrix $X$ has only non-negative entries,
and $X > 0$ to say that all entries are strictly positive. We also
speak of \emph{non-negative} or \emph{strictly positive matrices} 
in this sense. For a boolean matrix $X$, strict positivity $X > 0$ means
that all entries are $1$.

A square $n\!\times\!n$-matrix 
is \emph{doubly stochastic} 
if its entries are non-negative and 
if the sum of entries across every row and column is 
$1$. \emph{Permutation matrices} are doubly stochastic matrices 
over $\{0,1\}$, with precisely one $1$ in every row and in every
column. Permutation matrices are orthogonal, 
i.e., $P P^t = P^t P = E_n$ for every permutation matrix $P$. 
The permutation $p \in S_n$ associated with a
permutation matrix $P \in \R^{n,n}$ is such that 
$P \mathbf{e}_j = \mathbf{e}_{p(j)}$, i.e., it describes the  
permutation of the standard basis vectors $\mathbf{e}_j$ that is effected by
$P$. 
We also say that $P$ represents $p$. 
The permutation matrices form a subgroup 
of the general linear groups. The doubly stochastic matrices 
do not form a subgroup, but are closed under transpose and product.

\medskip
It will be useful to have the shorthand notation 
\[
X_{D_1D_2} = 0
\]
for the assertion that $X_{d_1d_2} = 0$ for all $d_1 \in D_1$, $d_2 \in D_2$.
If $p$ and $q$ are permutations in $S_n$ represented by permutation 
matrices $P$ and $Q$, then 
\[
(P^tXQ)_{D_1D_2} = 0 \quad\mbox{ iff }\quad 
X_{p(D_1)q(D_2)} = 0.  
\]
So, if $X_{D_1D_2} = 0$ and $P$ and $Q$ are chosen such that $p^{-1}(D_1)$ and 
$q^{-1}(D_2)$ are final and initial segments of $[n]$, respectively, then 
$P^t X Q$ has a null block of dimensions $|D_1| \times |D_2|$ in the 
upper right-hand corner. 

\subsection{Decomposition into irreducible blocks}

\bD
\label{irreddef}
With $X \in \R^{n,n}$ %
associate the directed graph 
\[
\mathrm{G}(X) := ([n], \{ (i,j) \colon X_{ij} \not= 0\}).
\] 
The strongly connected components of $\mathrm{G}(X)$ induce a 
partition of the set $[n] = \{ 1,\ldots, n\}$ of rows/columns of $X$. 
$X$ is called \emph{irreducible} if this partition has just 
the set $[n]$ itself. 
\eD

Note that $X$ is irreducible iff
$P^t X P$ is irreducible for every permutation matrix $P$. 

\bO
\label{posobs}
Let $X \in \R^{n,n} \geq 0$ with 
strictly positive diagonal entries. 
If $X$ is irreducible, then all powers
$X^\ell$ for $\ell \geq n-1$ have non-zero 
entries throughout.
Moreover, if $X$ is irreducible, then so is $X^\ell$ for all $\ell \geq 1$.
\eO

\prf
It is easily proved by induction on  $\ell \geq 1$
that $(X^\ell)_{ij} \not= 0$ if, and only if 
there is a directed path of length $\ell$ from vertex $i$ to vertex $j$
in $G(X)$. For $X$ with positive diagonal entries, $G(X)$ 
has loops in every vertex, and therefore there is a  
path of length $\ell$ from vertex $i$ to vertex $j$ if, and only if, there is  
path of length $M'$ for every $\ell' \geq \ell$ from $i$ to $j$.  
If $G(X)$ is also strongly connected, then any two vertices 
are linked by a path of length up to $n-1$. 
\eprf

Let us call two matrices $Z,Z' \in \R^{n,n}$ \emph{permutation-similar}
or \emph{$S_n$-similar}, $Z \sim_{S_n} Z'$, 
if $Z' = P^t Z P$ for some permutation matrix $P$, i.e., if one is obtained 
from the other by simultaneously permuting rows and columns with the
same permutation. %

\bL
\label{symdecomplem}
Every symmetric $Z \in \R^{n,n} \geq 0$ 
is per\-mu\-ta\-tion-similar 
to some block diagonal matrix $\mathrm{diag}(Z_1,\ldots,Z_s)$ 
with irreducible blocks $Z_i \in \R^{n_i,n_i}$.

The permutation matrix $P$ corresponding to the row- and
column-permutation $p \in S_n$ that puts $Z$ into block diagonal 
form $P^t Z P = \mathrm{diag}(Z_1,\ldots,Z_s)$  with irreducible blocks, 
is unique up to an outer permutation that 
re-arranges the block intervals $([k_i+1,k_i+n_i])_{1\leq i \leq s}$
where $k_i = \sum_{j < i} n_j$,
and a product of inner permutations within each one of these 
$s$ blocks. 

The underlying partition 
$[n] = \dot{\bigcup}_{1 \leq i \leq s} D_i$ where $D_i :=
p([k_i+1,k_i+n_i])$
for $k_i = \sum_{j < i} n_j$, is uniquely determined by $Z$.%
\footnote{Here we regard two partitions as identical if they have 
the same partition sets, i.e., we ignore their indexing/enumeration.}
\eL

In the following we refer to the \emph{partition induced by 
a symmetric matrix $Z$}.

\prf
Obvious, based on the partition of the vertex set $[n]$ of $G(Z)$ 
into connected components (note that symmetry of $Z$ is preserved under 
similarity, and strong connectivity is plain connectivity in $G(Z)$ 
for symmetric $Z$).  
\eprf

\bO
\label{decomppartobs}
In the situation of Lemma~\ref{symdecomplem},
the partition $[n] = \dot{\bigcup}_i D_i$
induced by the symmetric matrix $Z$ is the 
partition of $[n]$ into the vertex sets of the connected components of $G(Z)$. 
Then, for every pair $i \not= j$, $Z_{D_i D_{j}} = 0$, 
while all the minors $Z_{D_iD_i}$ are irreducible.%
\footnote{Note that this does not depend on the enumeration of the 
partition set $D_i$, because irreducibility is invariant under 
permutation-similarity.}

If, moreover, $Z$ has strictly positive 
diagonal entries, then the partition 
induced by $Z$ is the same as that induced by $Z^\ell$, for any $\ell \geq 1$; 
for $\ell \geq n-1$, 
the diagonal blocks $(Z^\ell)_{D_iD_i}$ have
non-zero entries throughout: $(Z^\ell)_{D_iD_i} > 0$ . 
\eO

The last assertion says that 
for a symmetric $n\!\times\!n$ matrix $Z$ with non-negative entries 
and no zeroes on the diagonal, all powers $Z^\ell$ for $\ell \geq n-1$ 
are \emph{good symmetric} in the sense of the following definition.

\bD
\label{goodsymmatrixdef}
Let $Z \geq 0$ be symmetric with strictly positive diagonal. 
Then $Z$ is called \emph{good symmetric} if 
w.r.t.\ the partition $[n] = \dot{\bigcup}_i D_i$ induced by $Z$,
all $Z_{D_iD_i} > 0$.

More generally, a not necessarily symmetric matrix 
$X \geq 0$ without null rows or columns is \emph{good} 
if $Z = XX^t$ and $Z' = X^tX$ are good in the above sense. 
\eD

The importance of this notion lies in the fact that, as observed
above, for an arbitrary symmetric $n\!\times\!n$ matrix $Z \geq 0$  
without zeroes on the diagonal, the partition induced by $Z$ 
is the same as that induced by the good symmetric matrix $\hat{Z} := Z^{n-1}$;
and, as for any good matrix, this partition  
can simply be read off from $\hat{Z}$: 
$i,j \in [n]$ are in the same partition set if, and only if, 
$\hat{Z}_{ij} \not= 0$.

\bD
\label{relpartdef}
Consider partitions $[n] = \dot{\bigcup}_{i \in I} D_i$ and 
$[m] = \dot{\bigcup}_{i \in I} D_i'$ of the sets $[n]$ and $[m]$
with the same number of partition sets. We say that
these two partitions are \emph{$X$-related}
for some matrix $X \in \R^{n,m}$ if 
\bre
\item
$X \geq 0$ has no null rows or columns, and 
\item
$X_{D_i {D_{j}}\!\!\nt'} = 0$ for every pair of distinct indices 
$i,j \in I$.
\ere
\eD

Note that partitions that are $X$-related 
are $X^t$-related in the opposite direction.
More importantly, each one of the
$X/X^t$-related partitions can be recovered from 
the other one through $X$ according to
\[
\barr{rcl}
D_i' &=& \{ d' \in [m] \colon X_{dd'} > 0 \mbox{ for some } d \in D_i
\},
\\
\hnt
D_i &=& \{ d \in [n] \colon X_{dd'} > 0 
\mbox{ for some } d' \in D_i'\}.
\earr
\]
 
For a more algebraic treatment, we associate 
with the partition sets $D_i$ of a
partition $[n] = \dot{\bigcup}_{i \in I} D_i$ the
\emph{characteristic vectors} $\mathbf{d}_i$
with entries $1$ and $0$ according to whether
the corresponding component belongs to $D_i$:
\[
\textstyle \mathbf{d}_i = \sum_{d \in D_i} \mathbf{e}_d, 
\]
where $\mathbf{e}_d$ is the $d$-th standard basis vector. 
In terms of these characteristic vectors
$\mathbf{d}_i$ for  $[n] = \dot{\bigcup}_{i \in I} D_i$ 
and $\mathbf{d}_i'$  for $[m] = \dot{\bigcup}_{i \in I} D_i'$,
the $X/X^t$-relatedness of these partitions means that 
\[
\barr{rcl}
D_i' 
&=&
\{ d' \in [m] \colon ( X^t \mathbf{d}_i)_{d'} > 0 \},
\\
\hnt
D_i 
&=&
\{ d \in [n] \colon (X \mathbf{d}_i')_{d} > 0 \}.
\earr
\]

\bL
\label{stochrelatedpartlem}
If two partitions $[n] = \dot{\bigcup}_{i \in I} D_i$ and 
$[n] = \dot{\bigcup}_{i \in I} D_i'$ of the same set $[n]$ 
are $X$-related for some doubly stochastic matrix $X \in \R^{n,n}$,
then $|D_i| = |D_i'|$ for all $i \in I$, and for the characteristic
vectors $\mathbf{d}_i$ and $\mathbf{d}_i'$ of the partition sets
$D_i$ and $D_i'$ even
\[
\mathbf{d}_i = X \mathbf{d}_i'
\quad
\mbox{ and }
\quad
\mathbf{d}_i' = X^t \mathbf{d}_i.
\]
\eL

\prf
Observe that for all $d\in[n]$ we have $0\le (X\vec d_i')_d=\sum_{d'\in
  D_i'}X_{dd'}\le 1$. It follows immediately from the
definition of $X$-relatedness that $(X\vec d_i')_d =0$ for all
$d\not\in D_i$.
Therefore, 
\[
|D_i|\geq \sum_{d\in D_i}(X\vec d_i')_d=\sum_{d\in[n]}(X\vec d_i')_d
=\sum_{d'\in D_i'}\sum_{d\in[n]}X_{dd'}=|D_i'|.
\]

Similarly, $0\le (X^t\vec d_i)_{d'}\le 1$ for $d'\in[n]$, and
$|D_i'|\ge \sum_{d'\in D_i'}(X^t\vec d_i)_{d'}=|D_i|$.
Together, we obtain
\[
|D_i|=\sum_{d\in D_i}(X\vec d_i')_d=|D_i'|=\sum_{d'\in D_i'}(X^t\vec
d_i)_{d'}.
\]

As all summands are bounded by $1$, this implies $(X\vec d_i')_d=1$ for all
$d\in D_i$ and $(X^t\vec d_i)_{d'}=1$ for all $d'\in D_i$.  
\eprf

\bL
\label{newdecomplem}
Let $X \geq 0$ be an $m\!\times\!n$ matrix without 
null rows or columns. Then the $m\!\times\!m$ matrix 
$Z := X X^t$ and the $n\!\times\!n$ matrix 
$Z' := X^t X$ are symmetric with positive entries 
on their diagonals. Moreover, the (unique)
partitions of $[m]$ and $[n]$ that are induced 
by $Z$ and $Z'$, respectively, are $X/X^t$-related.%
\footnote{As $X/X^t$-relatedness refers to partitions presented with
an indexing of the partition sets, we need to allow a suitable 
re-indexing for at least one of them, so as to match the other one.}
\eL

\prf
It is obvious that $Z$ and $Z'$ are symmetric with positive diagonal
entries.
Let partitions $[m] = \dot{\bigcup}_{i\in I} D_i$ and 
$[n] = \dot{\bigcup}_{i\in I'} D_i'$ be 
obtained from decompositions of $Z$ and $Z'$ into irreducible blocks.
We need to show that the non-zero entries in $X$ 
give rise to a 
bijection between the index sets 
$I$ and $I'$ of the two partitions, in the sense that 
partition sets $D_i$ and $D_j'$ are related if, and only if, some 
pair of members $d \in D_i$ and $d' \in D_j'$ have a positive entry $X_{dd'}$.
Then a re-numbering of one of these partitions will make them
$X$-related in the sense of Definition~\ref{relpartdef}. 
Recall from Observation~\ref{decomppartobs} 
that the $D_i$ are the vertex sets of the connected components 
of $G(XX^t)$ on $[m]$, while  the $D_i'$ the are the
vertex sets of the connected components 
of $G(X^tX)$ on $[n]$. 

Consider the uniformly directed bipartite graph $G(X)$ 
on $[m] \,\dot{\cup}\, [n]$ with an edge from 
$i \in [m]$ to $j \in [n]$ if $X_{ij} > 0$.
In light of the symmetry of the whole situation w.r.t.\
$X$ and $X^t$, it just remains to argue for instance 
that no $i \in [m]$ can have edges into 
two distinct sets of the partition $[n] = \dot{\bigcup}_{i\in I'} D_i'$.
But any two target nodes of edges from one and the same  
$i \in [n]$ are in the same connected component of $G(X^t X)$, hence 
in the same partition set.  
\eprf

In the situation of Lemma~\ref{newdecomplem}, 
powers of $Z$ induce the same partitions as $Z$, and 
the partitions induced by $(Z^\ell X)(Z^\ell X)^t = Z^{2\ell+1}$
are $X/X^t$-related as well as 
$Z^\ell X/X^tZ^\ell$-related, for all $\ell \geq 1$.

For $\ell \geq n/2 - 1$, the matrix $Z^\ell X$ 
has no null rows or columns: 
else $Z^\ell X (Z^\ell X)^t = Z^{2\ell+1}$ would have to have a zero entry
on the diagonal, contradicting the fact 
that this symmetric matrix is good symmetric 
in the sense of Definition~\ref{goodsymmatrixdef}.
The same reasoning shows that $Z^\ell X$ is itself good  
in the sense of Definition~\ref{goodsymmatrixdef}.

\bC
\label{goodlinkcor}
Let $X \geq 0$ be an $m\!\times\!n$ matrix without null rows or
columns, $Z = XX^t$, $Z' = X^tX$ the associated 
symmetric matrices with non-zero entries on the diagonal. 
Then for $\ell \geq m-1, n-1$, 
the matrix $\hat{X} := Z^\ell X = X (Z')^\ell$ and its 
transpose $\hat{X}^t = X^t Z^\ell = (Z')^\ell X^t$ 
are good and relate the partitions $[m] = \dot{\bigcup}_i D_i$ and 
$[n] = \dot{\bigcup}_i D_i'$ induced by $Z$ and $Z'$, respectively.%
\addtocounter{footnote}{-1}\footnotemark\ 
Moreover, 
\bre
\item
$\hat{X}_{D_iD_i'}>0$ for all $i$, and 
\item
$\hat{X}_{D_iD_j'} = 0$  for all $i \not= j$.
\ere
\eC

\prf
$Z^\ell X$ is good symmetric by the above reasoning. So
$(Z^\ell)_{D_iD_i} > 0$  for all
$i$, while $(Z^m)_{D_i D_j} = 0$ for all $j \not= i$. It follows that  
$(Z^\ell X)_{D_i D_i'} = (Z^\ell)_{D_iD_i} X_{D_i D_i'}$ 
has only non-zero entries because $X_{D_i D_i'}$ 
does not have null columns. This proves~(i). 
Assertion~(ii) is clear as, for $i \not= j$, 
$(Z^\ell X)_{D_i D_j'} = (Z^\ell)_{D_iD_j} X_{D_j D_j'} = 0 \,
X_{D_jD_j'} = 0$.
\eprf

\subsubsection*{Aside: boolean vs.\ real arithmetic}
Looking at matrices with $\{0,1\}$-entries, we may not only treat them as 
matrices over $\R$ as we have done so far, but also over other fields, 
or as matrices over the boolean semiring $\B = \{ 0,1\}$ with the
logical operations of $\vee$ for addition and $\wedge$ for
multiplication. Though not even forming a ring, boolean arithmetic 
yields a very natural interpretation in the context where 
we associate non-negative entries with edges, as we did in passage 
from $X$ to $G(X)$ 
(cf.\ Definition~\ref{irreddef} and 
Observation~\ref{posobs}). 
The `normalisation map'
$\chi \colon \R_{\geq 0} \rightarrow \{ 0, 1\}$, $x \mapsto 1$ iff $x > 0$,
relates the arithmetic of reals $x,y \geq 0$
to boolean arithmetic in
\[
\chi(x + y) = \chi(x) \vee \chi(y) 
\quad
\mbox{ and }
\quad 
\chi(x y) = \chi(x) \wedge \chi(y). 
\]

This is the `logical' arithmetic that supports, for instance, 
arguments used in Observation~\ref{posobs}:
for any real $n\!\times\!n$ matrix $X \geq 0$,  
$(XX)_{ij} = \sum_{k} X_{ik} X_{kj} \not= 0$ iff there is at 
least one $k \in [n]$ for which $X_{ik} \not=0$ and $X_{kj} \not= 0$ 
iff $\bigvee_{k\in[n]} (\chi(X_{ik}) \wedge \chi(X_{kj})) = 1$. 
It is no surprise, therefore, that several of the considerations 
apparently presented for real non-negative matrices above, have
immediate analogues for boolean arithmetic -- in fact, one could
argue, that the boolean interpretation is closer to the combinatorial essence. 
We briefly sum up these analogues with a view to their use in the
analysis of $\LL^k$-equivalence, while the real versions are related to 
$\LC^k$-equivalence. Note also that the boolean analogue of a doubly stochastic
matrix with non-negative real entries is a matrix without null rows or 
columns.

Also note that Definitions~\ref{irreddef} (irreducibility) 
and~\ref{relpartdef}  ($X$-relatedness)
are applicable to boolean matrices without any changes.
Observations~\ref{posobs} and~\ref{decomppartobs} go through 
(as just indicated), and so does Lemma~\ref{symdecomplem}. For 
Lemma~\ref{stochrelatedpartlem}, one may look at $X$-related 
partitions of sets $[m]$ and $[n]$, where not necessarily 
$n=m$, by any boolean matrix $X$ without null rows or columns
and obtains the relationship between the characteristic vectors 
as stated there, now in terms of boolean arithmetic --
but of course we do not get any numerical equalities
between the sizes of the partition sets. 
Lemma~\ref{newdecomplem}, finally,
applies to boolean arithmetic, exactly as stated,
and also Corollary~\ref{goodlinkcor} translates accordingly. 

\bL
\label{summaryboollem} 
In the sense of boolean arithmetic for matrices with entries 
in $\B = \{ 0,1 \}$:
\bae
\item
Any symmetric $Z \in \B^{n,n}$ induces a unique partition of 
$[n]$ for which the diagonal minors induced by the partition sets  
are irreducible and the remaining blocks null; 
$d,d' \in [n]$ are in the same
partition set if, and only if, 
$(Z^\ell)_{dd'} = 1$ for any/all $\ell \geq n-1$. 
\item
If two partitions (not necessarily of the same set) with the same
number of partition sets are related by some boolean 
matrix $X \in \B^{m,n}$, then
the characteristic vectors $(\mathbf{d}_i)_{i \in I}$
and $(\mathbf{d}_i')_{i \in I}$
of the partitions are related by
$\mathbf{d}_i = X \mathbf{d}_i'$ and 
$\mathbf{d}_i' = X^t \mathbf{d}_i$.
\item
For any matrix $X \in \B^{m,n}$ without 
null rows or columns, the symmetric boolean matrices 
$Z = XX^t$ and $Z' = X^tX$ have diagonal entries $1$ 
and induce partitions that are 
$X/X^t$-related, and agree with the partitions induced by 
higher powers of $Z$ and $Z'$ or on the basis of 
$Z^\ell X$ and $X(Z')^\ell$ for any $\ell \in \N$.
For $\ell \geq m-1,n-1$, the partition blocks in $Z$ and $Z'$ have 
entries $1$ throughout, and $Z^\ell X$ and $X (Z')^\ell$ have entries 
$1$ in all positions relating elements from matching partition sets.
\eae
\eL

\bO
\label{inducedpartboolobs}
For a symmetric boolean matrix $Z \in \B^{n,n}$
with $Z_{dd} = 1$ for all $d \in [n]$,
the characteristic vectors $\mathbf{d}_i$ of the partition 
$[n] = \dot{\bigcup}_{i\in I} D_i$ 
induced by $Z$ satisfy the following `eigenvector' equation in 
terms of boolean arithmetic:
\[
Z \mathbf{d}_i = \mathbf{d}_i \quad \mbox{\rm (boolean), \; for all $i \in I$.}
\]
\eO

\subsection{Eigenvalues and -vectors}

\bL
\label{evallem}
If $Z \in \R^{n,n}$ is doubly stochastic, then it has eigenvalue $1$.
If $Z$ is doubly stochastic and irreducible with strictly 
positive diagonal entries, then the eigenspace for eigenvalue $1$ 
has dimension~$1$ and is spanned by the vector 
$\mathbf{d} := (1,\ldots,1)^t$. 
\eL

\prf
Clearly $Z \mathbf{d} = \mathbf{d}$ for any stochastic matrix $Z$.

If $Z$ is moreover irreducible with positive diagonal entries, then 
by Observation~\ref{posobs},
$\hat{Z} := Z^{n-1}$ has strictly positive entries and,  
being doubly stochastic, therefore entries 
strictly between $0$ and $1$. 

If $\mathbf{v}$ is an eigenvector 
for eigenvalue $1$ of $Z$, then it also is an eigenvector 
for eigenvalue $1$ of $\hat{Z}$.%
If $\mathbf{v} = (v_1,\ldots,v_n)$, this is equivalent to
\[
\textstyle
v_i = \sum_j 
\hat{Z}_{ij} v_j\quad \mbox{ for all } i \in [n].
\] 

Looking at an index $i$ for which $v_j \leq v_i$ for all $j$, we see
that the maximal $v_i$ is a convex combination of the 
$v_j$ to which every $v_j$ contributes. 
This implies that all $v_j = v_i$, so that 
$\mathbf{v}$ is a scalar multiple of $\mathbf{d}$ as claimed.
\eprf

\bC
\label{evalcor} 
\bae
\item
Let $Z \in \R^{n,n}$ be doubly stochastic 
with positive diagonal,  
and $[n] = \dot{\bigcup}_i D_i$ a partition with
$Z_{D_iD_j} = 0$ for $i \not= j$ and such that the minors  
$Z_{D_iD_i}$ are irreducible for all $i$. Then the eigenspace
for eigenvalue~$1$ of $Z$ is the direct sum of the 
$1$-dimensional subspaces spanned by the characteristic 
vectors $\mathbf{d}_i$ of the partition sets $D_i$. 
\item
If $Z = X^tX \in \R^{n,n}$ 
for some doubly stochastic matrix $X$, then 
the eigenspace for eigenvalue~$1$ is the direct sum of 
the spans of the characteristic vectors $\mathbf{d}_i$ 
from the unique partition 
$[n] = \dot{\bigcup}_i D_i$ of $[n]$ induced by 
$Z$ according to Lemma~\ref{symdecomplem}.
\eae
\eC

\prf
Towards~(a), it is clear that $Z \mathbf{d}_i= \mathbf{d}_i$, 
so that each $\mathbf{d}_i$ is an eigenvector with eigenvalue~$1$. 
Let $V_i := \mathrm{span}(\mathbf{e}_d \colon d
\in D_i)$; then $\R^n = \bigoplus_i V_i$ is a direct sum
decomposition, and 
$Z_{D_jD_i} = 0$ for $j \not= i$ implies that $Z$ maps $V_i$ to itself. 
Therefore any eigenvector $\mathbf{v}$ with eigenvalue~$1$ decomposes 
as $\mathbf{v} = \sum_i \mathbf{v}_i$, where 
$\mathbf{v}_i \in V_i$, in such manner that $Z \mathbf{v}_i =
\mathbf{v}_i$. Since the restriction of $Z$ to $V_i$ is 
irreducible with positive diagonal, $\mathbf{v}_i \in
\mathrm{span}(\mathbf{d}_i)$ by Lemma~\ref{evallem}, as claimed.

Statement~(b) is s direct consequence, since $Z$ is symmetric with
positive diagonal.  
\eprf

\subsection{Stable partitions}

\bD
\label {stablepartdef}
Let $A \in \R^{n,n}$, $[n] = \dot{\bigcup}_{i\in I} D_i$ be a partition.
We call this partition a \emph{stable partition for $A$} if 
there are numbers $(s_{ij})_{i,j \in I}$ and $(t_{ij})_{i,j \in I}$
such that for all $i,j \in I$:
\[
d \in D_i \quad \Rightarrow \quad \sum_{d' \in D_j} A_{dd'} = s_{ij}
\quad \mbox{ and }\quad
\sum_{d' \in D_j} A_{d'\!d} = t_{ij}.
\] 

If there are $s_{ij}$ such that $\sum_{d' \in D_j} A_{dd'} = s_{ij}$
for all $d \in D_i$, we call the partition \emph{row-stable};
similarly, for $t_{ij}$ such that $\sum_{d' \in D_j} A_{d'\!d} = t_{ij}$
for all $d \in D_i$, \emph{column-stable}. 
\eD

For symmetric $A$, column- and row-stability are equivalent
(with $t_{ij} = s_{ij}$).

Note that the row and column sums in the definition are
the $D_i$-components of $A \mathbf{d}_j$ and 
of $\mathbf{d}_j^t A = (A^t \mathbf{d}_j)^t$, respectively. 
So, for instance, row stability precisely says that 
\[
A \mathbf{d}_j 
= \sum_i s_{ij} \mathbf{d}_i \in \bigoplus_i \mathrm{span}(\mathbf{d}_i).
\]

\bL
\label{commstablelem}
Let $A \in \R^{n,n}$ commute with some symmetric 
matrix of the form $Z = XX^t \in \R^{n,n}$
for some doubly stochastic $X \in \R^{n,n}$. 
Then the partition $[n] = \dot{\bigcup}_i D_i$
of $[n]$ induced by $Z$ according to Lemma~\ref{symdecomplem} is stable for $A$. 
\eL

\prf
We use the characteristic vectors 
$\mathbf{d}_i$ of the partition sets. 
By Corollary~\ref{evalcor},
the eigenspace for eigenvalue $1$
of $Z$ is the direct sum of the 
spans of the vectors $\mathbf{d}_i$. 

Now 
$Z A \mathbf{d}_i = AZ \mathbf{d}_i = A \mathbf{d}_i$ shows that 
$A \mathbf{d}_i$ 
is an eigenvector of $Z$ 
with eigenvalue $1$, whence\[
A \mathbf{d}_i 
\in \bigoplus_i \mathrm{span}(\mathbf{d}_i).
\]

It follows that the partition $[n] = \dot{\bigcup}_i D_i$ is row-stable.

Note again that $(A \mathbf{d}_j)_d = \sum_{d' \in D_j} A_{d d'}$
and $A \mathbf{d}_j \in \bigoplus_i \mathrm{span}(\mathbf{d}_i)$
precisely means that this value $(A \mathbf{d}_j)_d$ only
depends on the partition set $D_i$ to which $d$ belongs.
I.e., $\sum_{d'\in D_j} A_{d d'} = s_{ij}$ 
for all $d \in D_i$. 

As $Z = XX^t = Z^t$, $A^t$ commutes with $Z$ if $A$ does:
$A^t Z = A^t Z^t = (ZA)^t = (AZ)^t = Z^t A^t = Z A^t$. 
The above reasoning therefore shows that the partition into the
$D_i$ is row-stable for $A^t$ as well, hence column stable for
$A$. Hence it is stable for $A$.
\eprf

NB: symmetry of $A$ is not required here. It is essential 
for deriving commutation of $A$ (and $A^t$) with $Z = XX^t$
from an equation of the form $AX = XB$, as we shall see below.
But first a corollary from the  argument just given.

\bC
\label{innersymcor}
Let $A$ commute with $Z = XX^t$ 
and $B$ commute with $Z' = X^t X$, where 
$X$ is doubly stochastic (cf.~Lemma~\ref{commstablelem}). 
Then the partitions induced by 
$Z$ and $Z'$, which are $X$-related by Lemma~\ref{newdecomplem}, 
are stable for $A$ and $B$, respectively.
\eC

\subsubsection*{Aside: boolean arithmetic}

We give a separate elementary proof of the analogue of 
Lemma~\ref{commstablelem} 
for boolean arithmetic.  
Here the definition of a 
\emph{boolean} stable partition is this natural analogue
of Definition~\ref{stablepartdef}.

\bD
\label{stablepartbooldef}
A partition $[n] = \dot{\bigcup}_{i \in I} D_i$ is \emph{boolean stable}
for $A \in \B^{n,n}$ if, in the sense of boolean arithmetic,
$\sum_{d' \in D_j} A_{dd'}$ and $\sum_{d' \in D_j} A_{d'd}$
only depend on the partition set $i$ for which $d \in D_i$. 
\eD

Note that boolean stability implies that, for the characteristic
vectors $\mathbf{d}_i$ of the partition, 
$(A \mathbf{d}_j)_d = \sum_{d' \in D_j} A_{dd'}$ is the same for all
$d \in D_i$, so that also here $A \mathbf{d}_j$ is a boolean linear 
combination of the characteristic vectors $\mathbf{d}_i$.

\bL
\label{commstableboollem}
Let $A \in \B^{n,n}$ commute, in the sense of boolean arithmetic,  
with some symmetric 
matrix of the form $Z = XX^t \in \B^{n,n}$
with entries $Z_{dd} = 1$ for all $d \in [n]$.
Then the partition $[n] = \dot{\bigcup}_i D_i$
induced by $Z$ according to Lemma~\ref{summaryboollem}
is boolean stable for $A$. 
\eL

\prf
Recall from Observation~\ref{inducedpartboolobs} that the 
characteristic vectors $\mathbf{d}_i$ of the induced partition 
behave like eigenvectors with eigenvalue $1$ for boolean arithmetic:
$Z \mathbf{d}_i = \mathbf{d}_i$. Moreover, we may assume that 
$Z_{dd'} = 1$ iff $d$ and $d'$ are in the same partition set
(after passage to $Z^{n-1}$ if necessary). Let us write 
$\brck{\ell \in D_j}$ for the boolean truth value 
of the assertion $\ell \in D_j$. Then, for $d \in D_i$,
\[
\barr{rcl} 
\sum_{d' \in D_j} A_{dd'} &=&  (A \mathbf{d}_j)_d 
=  (A Z \mathbf{d}_j)_d  
\\
&=& (Z A \mathbf{d}_j)_d = 
\sum_{k,\ell} Z_{dk} A_{k\ell} \, \brck{\ell \in D_j}
=
\sum_{k \in D_i,\ell \in D_j} A_{k\ell}
\earr
\]
does indeed not depend on $d \in D_i$, whence 
the partition is boolean row-stable. Column-stability  
again follows from similar considerations based on 
commutation of $Z = Z^t$ with $A^t$.
\eprf

\section{Fractional isomorphism}

\subsection{$\LC^2$-equivalence and linear equations} 
\label{fracisorealsec}
The \emph{adjacency matrix} of graph $\str A$ is the square matrix $A$ with
rows and columns indexed by vertices of $\mathcal A$ and entries $A_{aa'}=1$ if
$aa'$ is an edge of $\str A$ and $A_{aa'}=0$ otherwise. 
By our assumption that graphs are undirected and simple, $A$ is a 
symmetric square matrix with null diagonal. It will
be convenient to assume that our graphs always have an initial segment
$[n]$ of the positive integers as their vertex set. Then the adjacency
matrices are in $\mathbb B^{n,n}\subseteq\mathbb R^{n,n}$. Throughout this
subsection, we assume that $\str A$ and $\str B$ are graphs with
vertex set $[n]$ and with adjacency matrices $A,B$,
respectively. It will be notationally suggestive to
 denote typical indices of matrices $a,a',\ldots \in [n]$ 
when they are to be interpreted as vertices of $\str{A}$, and 
$b,b',\ldots \in [n]$ when they are to be interpreted as 
vertices of $\str{B}$.

Recall (from the discussion in the introduction) that two graphs $\str A,\str
B$ are isomorphic if, and only if, there is a permutation
matrix $X$ such that $AX=XB$. We can rewrite this as the following integer
linear program in the variables $X_{ab}$ for $a,b\in[n]$.
\begin{center}
\framebox{\begin{minipage}{\textwidth-3em-4mm}
$\ISO$
\begin{align*}
  \displaystyle\sum_{b'\in[n]}X_{ab'}&=\sum_{a'\in[n]}X_{a'b}=1,\\
  \displaystyle\sum_{a'\in[n]}A_{aa'}X_{a'b}&=\sum_{b'\in[n]}X_{ab'}B_{b'b},\\
  X_{ab}&\ge 0&\text{for all }a,b\in[n].
  \end{align*}
 \end{minipage}} 
\end{center}
Then $\str A$ and $\str B$ are isomorphic if, and
only if, $\ISO$ has an integer solution.

\begin{ddef}
  Two graphs $\str A,\str B$ are \emph{fractionally isomorphic}, $\str
  A\approx\str B$,
  if, and only if, the system $\ISO$ has a real solution.
\end{ddef}

Observe that graphs are fractionally isomorphic if, and only if, there is a doubly
stochastic matrix $X$ such that $AX=XA$.

Note that fractionally isomorphic graphs necessarily have the same 
number of vertices (this will be different for the boolean analogue, which
cannot count). 

The established theorem on fractional isomorphism, by
Tinhofer~\cite{tin86,tin91} and Ramana, Scheinerman
and Ullman from~\cite{RamanaScheinermanUllman,ScheinermanUllman},
relates fractional isomorphis to the colour refinement algorithm (`iterated
degree sequences' in~\cite{ScheinermanUllman}) introduced in Section~\ref{sec:logic}
and stable partitions (`equitable partitions' in~\cite{ScheinermanUllman}). 

A \emph{stable partition} of the vertex set of an undirected graph is a stable
partition $[n] = \dot{\bigcup}_{i \in I} D_i$ for its adjacency matrix in the
sense of Definition~\ref{stablepartdef}. Reading that definition for the
(symmetric) adjacency matrix $A$ of a graph on $[n]$, and thinking of the
partition sets $D_i$ as vertex colours, stability means that the colour of any
vertex determines the number of its neighbours in every one of the colours.
This is stability in the sense of colour refinement; it means that the colour
refinement algorithm produces the coarsest stable partition.

The characteristic %
parameters for a stable partition $[n] = \dot{\bigcup}_{i \in I} D_i$
for $A$ are 
the numbers $s_{ij}=s_{ij}^A$ such that $s_{ij}=\sum_{d' \in D_j} A_{dd'}$ for
all $d\in D_i$. (As $A$ is symmetric, the parameters $t_{ij}$ of
Definition~\ref{stablepartdef} are equal to the $s_{ij}$.) We call two stable
partitions $\dot{\bigcup}_{i \in I} D_i$ for a matrix $A$ and $\dot{\bigcup}_{i
  \in J} D_i'$ for a matrix $B$ \emph{equivalent} if $I=J$ and $|D_i|=|D_i'|$
for all $i\in I$ and 
$s_{ij}^A=s_{ij}^B$ and for all $i,j\in I$.

\begin{llem}\label{lem:c2stable}
  $\str A$ and $\str B$ are $\LC^2$-equivalent if, and only if, there
  are equivalent stable partitions $\dot{\bigcup}_{i \in I} D_i$ for $A$ and
  $\dot{\bigcup}_{i \in I} D_i'$ for $B$.
\end{llem}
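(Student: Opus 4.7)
The plan is to reduce the equivalence to the characterisation of $\LC^2$-equivalence by colour refinement supplied by Theorem~\ref{theo:immlan}. Viewed partition-theoretically, colour refinement applied to the disjoint union $\str{A}\,\dot\cup\,\str{B}$ computes the coarsest partition of its vertex set that is stable (in the sense of Definition~\ref{stablepartdef}) for the block-diagonal adjacency matrix with blocks $A$ and $B$, starting from the trivial one-class partition. By definition, non-distinguishability is the statement that every class of this stable partition meets $\str{A}$ and $\str{B}$ in the same number of vertices. The lemma therefore amounts to showing that such a class-balanced stable partition on the disjoint union exists if, and only if, there are equivalent stable partitions for $A$ and $B$ in the sense of the statement.

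For the forward implication, assume $\str{A}\equiv_\LC^2\str{B}$, let $\{C_i\}_{i\in I}$ be the stable partition produced by colour refinement on $\str{A}\,\dot\cup\,\str{B}$, and set $D_i := C_i \cap \str{A}$ and $D_i' := C_i \cap \str{B}$. Since the adjacency matrix of the disjoint union is block-diagonal, the neighbour count of a vertex of $\str{A}$ in any $C_j$ coincides with its count in $D_j$, and analogously for $\str{B}$. Hence $\{D_i\}$ is stable for $A$ and $\{D_i'\}$ is stable for $B$, with common parameters $s_{ij}^A = s_{ij}^B$ read off from the $\{C_i\}$-stability; non-distinguishability forces $|D_i| = |D_i'|$ for all $i$, yielding equivalent stable partitions.

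For the backward implication, given equivalent stable partitions $\{D_i\}_{i\in I}$ for $A$ and $\{D_i'\}_{i\in I}$ for $B$, form the combined partition $\{C_i := D_i \,\dot\cup\, D_i'\}_{i \in I}$ on the disjoint union. The matching parameters $s_{ij}^A = s_{ij}^B$ make $\{C_i\}$ stable for the block-diagonal matrix, and the matching sizes $|D_i| = |D_i'|$ make every $C_i$ meet the two sides equally. The coarsest stable partition produced by colour refinement is a coarsening of $\{C_i\}$, and any coarsening of a class-balanced partition remains class-balanced. Hence colour refinement does not distinguish $\str{A}$ from $\str{B}$, and Theorem~\ref{theo:immlan} delivers $\LC^2$-equivalence.

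The only mild subtlety worth flagging is that equivalence of stable partitions in the paper's sense presupposes a matched indexing: in the forward direction this indexing is furnished naturally by the classes of the stable partition on the disjoint union, while in the backward direction it is part of the hypothesis. No nontrivial linear algebra from Section~3 is needed; the argument is a plain transfer between stability on the disjoint union and stability on each side, combined with the triviality that coarsening preserves class-balance.
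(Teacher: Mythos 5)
Your proof is correct, and the backward direction takes a genuinely different route from the paper's. Both directions in your argument run through Theorem~\ref{theo:immlan} (colour refinement) exclusively: you observe that stability on the disjoint union with the block-diagonal adjacency matrix translates precisely into a pair of stable partitions on the two sides with matching parameters $s_{ij}$, that non-distinguishability is exactly class-balance, and then for the converse you invoke the partition-lattice fact that the colour-refinement partition is the \emph{coarsest} stable partition and therefore coarsens the class-balanced stable partition $\{D_i\,\dot\cup\,D_i'\}$; since coarsening preserves class-balance, the colour-refinement output is class-balanced as well. The paper instead proves the backward direction by a direct appeal to the bijective $2$-pebble game (Theorem~\ref{theo:hel}), exhibiting an explicit invariant for player \PII. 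Your route is more uniform (one theorem cited, purely partition-theoretic reasoning) and makes the role of the coarsest-stable-partition fact visible; the paper's route is slightly more hands-on with the game and foreshadows the style of argument used for the higher-$k$ cases, which is the main reason to prefer it in context. One small point worth making explicit in your write-up is that the coarsest-stable-partition claim (``colour refinement produces a coarsening of every stable partition refining the trivial one'') is being used as a background fact; it is standard but not stated in the paper, and a one-line inductive justification (if $\pi_t$ coarsens $\{C_i\}$ then the stability of $\{C_i\}$ forces $\pi_{t+1}$ to coarsen it too) would close that loop cleanly.
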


\begin{proof}
  The forward direction follows from Theorem~\ref{theo:immlan}, because the
  colour refinement algorithm computes equivalent stable partitions of
  $\mathcal A$ and $\mathcal B$. 

  To establish the converse implication, we use the bijective $2$-pebble game, which
  characterises $\LC^2$-equivalence by 
  Theorem~\ref{theo:hel}. Suppose we have equivalent stable partitions $\dot{\bigcup}_{i \in I} D_i$ of $A$ and
  $\dot{\bigcup}_{i \in J} D_i'$ of $B$. Then it is a winning strategy for
  player \PII\ to maintain the following invariant for every position $p$ of
  the game: $p$ is a local isomorphism (that is, if $\dom(p)=\{a,a'\}$ then
  $a=a'$ if, and only if, $p(a)=p(a')$, and $a$ and $a'$ are adjacent in $\str A$ if, and only if, $p(a)$ and $p(a')$ are
  adjacent in $\str B$), and if $a\in\dom(p)\cap D_i$ then $p(a)\in D_i'$. It
  follows easily from the definition of stable partitions that player \PII\
  can indeed maintain this invariant.
\end{proof}

\bT[Tinhofer] 
\label{simplegraphthm}
Two graphs are $\LC^2$-equivalent if, and only if, they
are fractionally isomorphic.
\eT 

\prf
In view of Lemma~\ref{lem:c2stable}, it suffices to prove that $\str A$ and
$\str B$ have equivalent stable partitions if, and only if, they are
fractionally isomorphic.

For the forward direction, suppose that we have equivalent stable partitions
$\dot{\bigcup}_{i \in I} D_i$ for $A$ and $\dot{\bigcup}_{i \in J} D_i'$ for
$B$. 
For all $a\in D_i,b\in D_j'$ we let 
\[
X_{ab}:=\delta(i,j)/n_i,
\] 
where
$n_i:=|D_i|=|D_i'|$. (Here and elsewhere we use Kronecker's $\delta$
function defined by $\delta(i,j)=1$ if $i=j$ and $\delta(i,j)=0$
otherwise.) 
An easy calculation
shows that this defines a doubly stochastic matrix $X$ with $AX=XB$, that is,
a solution for $\ISO$.

For the converse direction, suppose that $X$ is a doubly stochastic matrix
such that $AX = XB$.
Since $A$ and $B$ are symmetric, also
$X^t A = B X^t$, and 
\[
A XX^t = XB X^t = XX^t A \quad \mbox{ and } \quad 
B X^tX = X^tA X = X^t X B,
\]
show that $A$ commutes with $Z := XX^t$ and $B$ with $Z' := X^tX$.

From Lemma~\ref{commstablelem} and Corollary~\ref{innersymcor},
the partitions 
$[n] = \dot{\bigcup}_{i \in I} D_i$ and 
$[n] = \dot{\bigcup}_{i \in I} D_i'$ that are
induced by the symmetric matrices $Z$ and $Z'$ 
are $X$-related and stable for 
$A$ and for $B$, respectively. 
We need to show that $|D_i| = |D_i'|$ and that  
the partitions also agree w.r.t.\ the parameters $s_{ij}$.

By Lemma~\ref{stochrelatedpartlem} we have $|D_i| = |D_i'|$ and
\begin{equation}
\label{parttransleqn}
\mathbf{d}_i = X \mathbf{d}_i'
\quad \mbox{ and } \quad
\mathbf{d}_i' = X^t \mathbf{d}_i,
\end{equation}
where $\vec d_i$ and $\vec d_i'$ for $i\in I$ are the characteristic vectors
of the two partitions.
Thus for all $i,j\in I$,
\[
(\mathbf{d}_i')^t B \mathbf{d}_j' 
= 
(X^t \mathbf{d}_i)^t B X^t \mathbf{d}_j
= 
\mathbf{d}_i^t X B X^t \mathbf{d}_j
=
\mathbf{d}_i^t A X X^t \mathbf{d}_j
=
\mathbf{d}_i^t A Z \mathbf{d}_j
=
\mathbf{d}_i^t A \mathbf{d}_j,
\]
where the last equality follows from the fact that $\vec d_j$ is an
eigenvector of $Z$ with eigenvalue $1$ by 
Corollary~\ref{evalcor}.

Note that $\mathbf{d}_i^t A \mathbf{d}_j$ is the number of edges of $\str A$
from $D_i$ to $D_j$. By stability of the partition, we have
$s_{ij}^A=\mathbf{d}_i^t A \mathbf{d}_j/|D_i|$ and similarly
$s_{ij}^B=(\mathbf{d}_i')^t B \mathbf{d}_j'/|D_i'|$, so that $s_{ij}^A=s_{ij}^B$.
\eprf

\subsection{$\LL^2$-equivalence and boolean linear equations}
\label{boolfracisosec}
W.r.t.\ an adjacency matrix $A \in \B^{n,n}$, 
a boolean stable partition $[n] = \dot{\bigcup}_{i \in I} D_i$ 
has as parameters just the boolean values 
\[
\iota_{ij}^A  = \left\{
\barr{ll}
0 & \mbox{if } A_{D_iD_j} = 0, 
\\
\hnt
1 & \mbox{else.} 
\earr
\right.
\]
Boolean (row-)stability of the partition for $A$ implies that
$\iota_{ij}^A = 1$ if, and only if, for each individual $d \in D_i$ 
there is at least one $d' \in D_j$ such that $A_{dd'} = 1$, and
similarly for column stability. 

To capture the situation of $2$-pebble game equivalence, though, 
we now need to work with similar partitions that are stable both
w.r.t.\ $A$ and w.r.t.\ to the adjacency matrix $A^c$ of the
complement of the graph with adjacency matrix $A$. Here the complement of
a graph $\str A$ is the graph $\str A^c$ with the same vertex set as
$\str A$ obtained by replacing edges by non-edges
and vice versa. Hence $A^c_{aa'}=1$ if $A_{aa'}=0$ and $a\neq a'$,
and $A^c_{aa'}=0$ otherwise. 
While a partition in the sense 
of real arithmetic is stable for $A$ if, and only if, it is stable for
$A^c$, this is no longer the case for boolean arithmetic.
Let us call a partition that is boolean stable for both $A$ and $A^c$,
\emph{boolean bi-stable} for $A$.

Then the following captures 
the situation of two graphs that are $2$-pebble game
equivalent. We note that $2$-pebble equivalence is a very coarse notion
of equivalence, if we look at just simple undirected graphs -- but the
concepts explored here 
form the basis for the analysis of $k$-pebble 
equivalence, which is non-trivial even for simple undirected graphs. 

$\LL^2$-equivalence of two graphs does not imply that the graphs have
the same size. In the following, we always assume that $\mathcal
A,\mathcal B$ are graphs with vertex sets $[m],[n]$ respectively and
that $A\in\mathbb B^{m,m}$ and $b\in\mathbb B^{n,n}$ are their
adjacency matrices. 
We call two bi-stable
partitions $[m]=\dot{\bigcup}_{i \in I} D_i$ for $A$ (and $A^c$) and $[n]=\dot{\bigcup}_{i
  \in J} D_i'$ for $B$ (and $B^c$) \emph{b-equivalent} if $I=J$ and 
$\iota_{ij}^A=\iota_{ij}^B$ and $\iota_{ij}^{A^c}=\iota_{ij}^{B^c}$
for all $i,j\in I$. Note that b-equivalence does not imply that
$|D_i|=|D_i'|$.

\begin{llem}\label{lem:l2stable}
  $\str A$ and $\str B$ are $\LL^2$-equivalent if, and only if, there
  are b-equivalent bi-stable partitions $[m]=\dot{\bigcup}_{i \in I}
  D_i$ for $A$ and $[n]=\dot{\bigcup}_{i \in J} D_i'$ for $B$.
\end{llem}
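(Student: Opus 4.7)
The plan is to mirror the proof of Lemma~\ref{lem:c2stable}, replacing counting colour refinement by its boolean $\LL^2$ analogue from \cite{ott97} and replacing the bijective $2$-pebble game by the plain $2$-pebble game.

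For the forward direction I would invoke the $\LL^2$-analogue of Theorem~\ref{theo:immlan}: partition the vertices of $\mathcal A$ (resp.\ $\mathcal B$) by their $\LL^2$-$1$-type; $\LL^2$-equivalence of $\mathcal A$ and $\mathcal B$ ensures that exactly the same types are realised, so the index sets match up as $I = J$. Boolean row-stability of the $\mathcal A$-partition for $A$ is automatic: for any types $\tau_i,\tau_j$, the formula $\exists y\,(E(x,y) \wedge \tau_j(y))$ is in $\LL^2$, hence its truth at $x$ depends only on the $\LL^2$-type of $x$, so the boolean row sum $\sum_{d' \in D_j} A_{dd'}$ is constant across $d \in D_i$. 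The analogous argument applied to $\exists y\,(\neg E(x,y) \wedge y \neq x \wedge \tau_j(y))$ gives boolean row-stability for $A^c$, and symmetry of $A$ and $A^c$ upgrades both to full stability, yielding bi-stability. The parameters $\iota_{ij}^A$ and $\iota_{ij}^{A^c}$ agree with their counterparts over $\mathcal B$ because they are witnessed by the $\LL^2$-sentences $\exists x\,(\tau_i(x) \wedge \exists y\,(E(x,y) \wedge \tau_j(y)))$ and its non-edge variant, both of which hold in $\mathcal A$ iff they hold in $\mathcal B$.

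For the converse, given b-equivalent bi-stable partitions, I would exhibit a winning strategy for Player~\PII\ in the $2$-pebble game that maintains the invariant that every pebbled pair $(a,b)$ satisfies $a \in D_i \iff b \in D_i'$ and that the entire position is a local isomorphism. Suppose Player~\PI\ picks up one of his pebbles, leaving a single remaining pair $(a_0,b_0)$ with $a_0 \in D_{i_0}$ and $b_0 \in D_{i_0}'$, and then places his pebble on $a \in D_i$ in $\mathcal A$. Player~\PII\ must find $b \in D_i'$ with $a = a_0 \Leftrightarrow b = b_0$ and $a \sim_{\mathcal A} a_0 \Leftrightarrow b \sim_{\mathcal B} b_0$. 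The case $a = a_0$ is answered by $b := b_0$. If $a \sim_{\mathcal A} a_0$, then $\iota_{i i_0}^A = 1$, so by b-equivalence $\iota_{i i_0}^B = 1$, and boolean row-stability of $B$ evaluated at $b_0$ supplies some $b \in D_i'$ adjacent to $b_0$ (and $b \neq b_0$ automatically, as $\mathcal B$ is loop-free). If $a \not\sim_{\mathcal A} a_0$ and $a \neq a_0$, the parallel argument for $A^c$ and $B^c$ produces $b \in D_i'$ with $b \neq b_0$ and $b \not\sim_{\mathcal B} b_0$. The case in which Player~\PI\ plays on $\mathcal B$ is symmetric, and the starting position with no pebbles is trivial.

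The one genuinely delicate point is the adjacency-vs-non-adjacency split in the converse direction: this is precisely where one needs boolean stability for \emph{both} $A$ and $A^c$, i.e.\ where bi-stability is essential. This marks the essential contrast with Lemma~\ref{lem:c2stable}, since in the real/counting setting stability for $A$ automatically entails stability for $A^c$ so a single partition suffices. I expect this to be the main conceptual point worth flagging; the resulting case verification itself is just the standard stable-colouring argument carried out once for edges and once for non-edges.
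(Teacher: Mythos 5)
Your proposal is correct and takes precisely the approach the paper intends: it mirrors the proof of Lemma~\ref{lem:c2stable}, establishing the forward direction via $\LL^2$-types (the boolean analogue of colour refinement) and the converse via the same invariant in the plain $2$-pebble game, with bi-stability used exactly where the paper's terse proof flags it --- to cover non-adjacency and inequality when responding in the game. The paper's own proof is just a one-sentence appeal to analogy plus the bistability remark, so yours simply fills in the details it leaves implicit.
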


\begin{proof}
  The proof is analogous to the proof of Lemma~\ref{lem:l2stable}. For
  the backward direction, we need bi-stability to guarantee that player
  \PII\ can maintain positions $p$ that preserve adjacency,
  non-adjacency, and (in)equality. Stability alone would only enable
  her to maintain adjacency and equality.  
\end{proof}

\bD
\label{fracisobooldef}
$\str{A}$ and $\str{B}$ are \emph{boolean
  isomorphic},
$\str{A} \fracsimeqbool \str{B}$, if 
there is some boolean matrix $X$ without null rows or columns such that 
$A X = X B$ and $A^c X = X B^c$. 
\eD

That is, $\str{A}$ and $\str{B}$ are boolean isomorphic if they
satisfy the following system of linear equation over the boolean
semiring. 

\begin{center}
\framebox{\begin{minipage}{\textwidth-3em-4mm}
$\BISO$
\begin{align*}
  \displaystyle\sum_{b'\in[n]}X_{ab'}&=\sum_{a'\in[n]}X_{a'b}=1,\\
  \displaystyle\sum_{a'\in[n]}A_{aa'}X_{a'b}&=\sum_{b'\in[n]}X_{ab'}B_{b'b},\\
  \displaystyle\sum_{a'\in[n]}A^c_{aa'}X_{a'b}&=\sum_{b'\in[n]}X_{ab'}B^c_{b'b} &\text{for all }a,b\in[n].
  \end{align*}
 \end{minipage}} 
\end{center}

\bT
\label{simplegraphboolthm}
Two graphs are $\LL^2$-equivalent if, and only if, they are boolean
isomorphic. 
\eT

\prf For the forward direction, suppose that $A \equiv_\LL^2 B$, and
let $[m] = \dot{\bigcup}_{1 \leq i \leq s} D_i$ and $[n] =
\dot{\bigcup}_{1 \leq i \leq s} D_i'$ be the similar boolean
bi-stable partitions.  For all $a\in D_i, b \in D_j'$ we let
$X_{ab}:=\delta(i,j)$. This defines a
boolean matrix $X\in \B^{m,n}$ without null rows or columns.  One
checks that $AX = XB$, in boolean arithmetic: for $a \in D_i$ and
$b \in D_j'$, and for the characteristic vectors $\mathbf{d}_i$
and $\mathbf{d}_j'$ for the partitions,
\begin{align*}
(AX)_{ab} &=
\sum_{k} A_{ak}X_{kb} 
= 
(A \mathbf{d}_j)_a
= \iota_{ij}^A\\
&= \iota_{ij}^B 
= 
((\mathbf{d}_i')^t B)_{b}
= 
\sum_{k} X_{ak} B_{kb} = (XB)_{ab}.
\end{align*}
The argument for $A^c X = X B^c$ is completely analogous.

For the converse, suppose that $A \fracsimeqbool B$, and let
$X$ be a boolean
matrix without null rows or columns such that $AX = XB$ and $A^c X = X B^c$. 
Since $A$ and $B$ are symmetric, also
$X^t A = B X^t$  $X^t A^c  = B^c X^t$, and 
\[
A XX^t = XB X^t = XX^t A \quad \mbox{ and } \quad 
B X^tX = X^tA X = X^t X B,
\]
together with the analogues for the complements,
show that both $A$ and $A^c$ commute with $Z := XX^t$ and both $B$ and $B^c$ 
commute with $Z' := X^tX$. Moreover, the matrices $Z$ and $Z'$ 
have entries $1$ on the diagonal.

From Lemma~\ref{commstableboollem} and the straightforward analogue of
Corollary~\ref{innersymcor},
the partitions 
$[m] = \dot{\bigcup}_{i \in I} D_i$ and 
$[n] = \dot{\bigcup}_{i \in I} D_i'$ 
induced by the symmetric matrices $Z$ and $Z'$ 
are $X$-related and boolean bi-stable for 
$A$ and for $B$, respectively. 
We need to show that these partitions also agree 
w.r.t.\ the characteristic $\iota_{ij}$.
By Lemma~\ref{summaryboollem}, 
the characteristic vectors $\mathbf{d}_i'$
and $\mathbf{d}_i'$
of the partitions are related by
$\mathbf{d}_i = X \mathbf{d}_i'$ and 
$\mathbf{d}_i' = X^t \mathbf{d}_i$ in the sense of boolean arithmetic.

Since $AX=XB$ and as the $\mathbf{d}_j$ are boolean eigenvectors of 
$Z = XX^t$ with eigenvalue $1$ by Observation~\ref{inducedpartboolobs},
\[
\barr{rcl}
\iota_{ij}^B &=& 
(\mathbf{d}_i')^t B \mathbf{d}_j' 
= 
(X^t \mathbf{d}_i)^t B X^t \mathbf{d}_j
= 
\mathbf{d}_i^t X B X^t \mathbf{d}_j 
\\
&=&
\mathbf{d}_i^t A X X^t \mathbf{d}_j
=
\mathbf{d}_i^t A Z \mathbf{d}_j
=
\mathbf{d}_i^t A \mathbf{d}_j
\;=\; \iota^A_{ij}.
\earr
\]

The argument for $\iota_{ij}^{B^c} = \iota_{ij}^{A^c}$ is strictly analogous.
\eprf

\subsection{Good solutions for fractional isomorphism}
\label{goodsolisosec}
We conclude the analysis of fractional isomorphism
with an account that will be useful towards generalisations 
in higher dimensions. 
Fractional isomorphism as well as its boolean analogue
are based on solutions of linear matrix equations
\[
\comp[A,B]: \quad
AX = XB, %
\]
which express a \emph{compatibility} condition.
These equations may be read in the sense of real arithmetic or
in the sense of boolean arithmetic. 
For fractional isomorphism between graphs with adjacency matrices $A$ and $B$
we are interested in doubly stochastic real solutions of this single equation; 
while the boolean analogue involves simultaneous boolean solutions
without null rows or columns of the pair of equations 
$\comp[A,B]$ and $\comp[A^c,B^c]$. 
We isolate the properties of \emph{good solutions} 
of equations of this type as follows;
compare Definition~\ref{goodsymmatrixdef} for \emph{good matrices}.  

\bD
\label{goodsoldef}
For symmetric matrices $A \in \B^{m,m}$ and $B \in \B^{n,n}$,
a solution $X$ to the linear matrix equation
$\comp[A,B]$ in the sense of real arithmetic 
(boolean arithmetic) is a \emph{good solution} if
$X$ is a doubly stochastic real matrix (a boolean matrix without null
rows or columns) and the matrices $Z=XX^t$ and $Z'=X^tX$ induce 
$X$-related partitions $[m] = \dot{\bigcup}_i D_i$ 
and $[n] = \dot{\bigcup}_i D_i'$ 
such that 
\bre
\item
these partitions are equivalent (boolean equivalent) 
and stable (boolean stable) w.r.t.\ $A$ and $B$, respectively;
\item
$X_{D_iD_i'} > 0$ for all $i$;
\item
$X_{D_iD_j'}$ = 0 for $i \not= j$.
\ere
\eD

Recall from Corollary~\ref{goodlinkcor}, and from Lemma~\ref{summaryboollem} 
for the boolean case, 
that good solutions are always obtained from 
given solutions $X$ through passage to $X' := Z^\ell X$ for
sufficiently large $\ell$ and 
$Z := XX^t$, which is a symmetric matrix with strictly positive diagonal. 
Clearly this transformation of solutions into good solutions works
for simultaneous solutions to several equations
of type %
$\comp[\cdot,\cdot]$. For simultaneous (boolean) solutions w.r.t.\ 
$A/B$ and $A^c/B^c$ we obtain simultaneous good solutions 
w.r.t.\ $A/B$ and $A^c/B^c$, which are therefore (boolean) 
bi-stable w.r.t.\ $A$ and $B$.
We thus find the following summary account for the results 
presented in Sections~\ref{fracisorealsec} and~\ref{boolfracisosec} above.

\bL
\label{goodsolfraclem}
For graphs $\str{A}$ and $\str{B}$ with 
adjacency matrices $A \in \B^{m,m}, B \in \B^{n,n}$, 
let $X$ be a good real solution to
$\comp[A,B]$ (i.e., a real solution to $\ISO$)
with induced partitions $[m] = \dot{\bigcup}_i D_i$
and $[n] = \dot{\bigcup}_i D_i'$. 
Then player \PII\ has a strategy 
in the corresponding bijective $2$-pebble game
to maintain pebble configurations 
$p \colon a_1a_2 \mapsto b_1 b_2$ for which 
\bre
\item
$p$ is a local isomorphism:
$a_1=a_2$ iff $b_1=b_2$ and $A_{a_1a_2} =1$ iff 
$B_{b_1b_2} =1$; 
\item
$p$ respects the induced partitions: 
corresponding pebbles are from matching partition sets, i.e., 
$X_{a_1b_1}, X_{a_2b_2} >0$.
\ere
Similarly, if $X$ is a simultaneous good boolean solution to
$\comp[A,B]$ and $\comp[A^c,B^c]$,
then the above condition can be maintained in the plain $2$-pebble
game (without counting).
\eL

\prf
We discuss the real version in relation to the bijective $2$-pebble
game. Consider a position $p \colon a_1a_2 \mapsto b_1b_2$ as
described by conditions~(i) and~(ii), and assume that 
player \PI\ starts a round involving relocation of the second pebble
pair, so that just $ab := a_1 b_1$ of the current position needs to be
respected. By the rules of the game, player 
\PII\ needs to propose a bijection $\rho \colon [n]
\rightarrow [n]$ between the vertex sets of $\str{A}$ and
$\str{B}$ (recall that $n=m$ as $X$ is doubly stochastic)
with $\rho(a) = b$ and 
such that $A_{aa'} = 1$ iff $B_{b\rho(a')} = 1$ for all $a'$;  
in order to maintain conditions~(i) and~(ii) we additionally
want $X_{a'\rho(a')} > 0$ for all $a'$. The existence of such a 
bijection follows from the $X$-equivalence of the partitions 
$[n] = \dot{\bigcup}_i D_i$ and $[n] = \dot{\bigcup}_i D_i'$. By 
pre-condition~(ii), $a =a_1 \in D_i$ and $b=b_1 \in D_i'$ 
for the same index $i$, whence the numbers of $A/B$-adjacent
vertices in $D_j$ in $\str{A}$/in $D_j'$ in $\str{B}$ 
agree for each index $j$, and the desired bijection can be pieced
together from bijections between $D_j$ and $D_j'$ that respect
adjacency with $a$ and $b$.
\eprf

\bC
\label{goodsolutionfracisocor}
\bae
\item
The following are equivalent for graphs
$\str{A}$ and $\str{B}$ with adjacency matrices $A \in \B^{m,m}$,
$B \in \B^{n,n}$:
\bre
\item $\str{A}$ and $\str{B}$ are fractionally isomorphic, i.e.\ the
  system $\ISO$ has a real solution;
\item the matrix equation %
$\comp[A,B]$ admits a good solution $X$;
\item $\PII$ has a winning strategy in the bijective 
$2$-pebble on $\str{A}$ and $\str{B}$. 
\ere
\item
Similarly the following are equivalent:
\bre
\item $\str{A}$ and $\str{B}$ are boolean fractionally isomorphic;
i.e.\ the system $\BISO$ has a solution in the sense of boolean arithmetic;
\item
the matrix equations %
$\comp[A,B]$ and $\comp[A^c,B^c]$ 
possess a simultaneous good boolean solution $X$;
\item $\PII$ has a winning strategy in the plain
$2$-pebble on $\str{A}$ and $\str{B}$. 
\ere
\eae
\eC

\section{Relaxations in the style of Sherali--Adams}  
In this section we refine the connection between the Sherali--Adams
hierarchy of LP relaxations of the integer linear program $\ISO$
to equivalence in the finite variable counting logics or the higher-dimensional Lehman--Weisfeiler equivalence. 

\medskip
NB: our parameter $k \geq 2$ is the number of pebbles, or the  
variables available in the $k$-variable logics $\LC^k$ or $\LL^k$.

\medskip
As before, $\str A$ and $\str B$ are graphs with vertex sets
$[m]$ and $[n]$, respectively, and $A$ and $B$ are their adjacency
matrices.
We denote typical elements and tuples of elements 
from $\str{A}$ and $\str{B}$
as $\abar = (a_1,\ldots, a_r)$ or 
$\bbar = (b_1,\ldots, b_r)$, for $0 \leq r \leq k$; 
correspondingly, we typically denote entries of the adjacency 
matrices as, e.g., $A_{aa'}$. This device will help in an
intuitive consistency check also in matrix compositions like
$AX$ with entries $(AX)_{ab}$ if $A$ is an $m\!\times\!m$ matrix 
over $[m]$ and 
$X$, as an $m\!\times\!n$ matrix, 
relates $[m]$ and $[n]$
through entries $X_{ab}$: $(AX)_{ab} = \sum_{a'} A_{aa'}X_{a'b}$ 
(which rightly suggests paths of length two in a suitable 
composition of graphs
$\str{A}$ and $G(X)$).

\paragraph*{Types.}
Let $\etp(\abar)$ denote the equality type of tuple $\abar$ 
in $\str{A}$,
$\atp(\abar)$ its quan\-ti\-fier-free type, and 
$\tp(\abar)$ its complete type in the logic $\LC^k$, that is, the set
of all $\LC^k$-formulae $\phi(\vec x)$ such that $\str A$ satisfies
$\phi(\vec a)$. 
Note that $\abar \mapsto \bbar$ constitutes a local bijection
if, and only if, $\etp(\abar) = \etp(\bbar)$, and a 
local isomorphism if, and only
if, $\atp(\abar) = \atp(\bbar)$. 

It will sometimes be useful to view some of the elements whose type we
consider as ``parameters'' and only the remaining as
``variables''. Formally, we define the \emph{type of
  $a$ w.r.t.\ to the
parameters $\abar$} simply to be the type of $\abar a$, that is,
$\tp_\abar(a):=\tp(\abar a)$. The distinction between plain types and types with
parameters is one of semantic intention rather than
syntactic. It is suggestive when it comes to
counting realisations. For example, we let 
\[
\#_\xbar^\str{A} (\tp(\xbar) \!=\! \tp(\abar)) 
\]
denote the 
number of tuples $\abar'$ in $\str{A}$ that realise the type of
$\abar$, i.e., those $\abar'$ for which $\str{A},\abar' \equiv_\LC^k
\str{A},\abar$. If the structure in which realisations are counted is
obvious or does not matter because of $\LC^k$-equivalence, we drop the 
superscript and write e.g.\ just $\#_\xbar$ instead of  $\#_\xbar^\str{A}$
or $\#_\xbar^\str{B}$.

The number of realisations of the $1$-type of $a$ 
over parameters $\abar$ (in $\str{A}$) is denoted by
\[
\#_{x}(\tp_\abar(x)  \!=\!\tp_\abar(a))
\quad\bigl(= \#_{x}(\tp(\abar x)  \!=\! \tp(\abar a))\bigr). 
\]

Regarding the counting of realisations we note that generally  
\begin{equation}
\label{quotienteqn}
\#_{\xbar x} (\tp(\xbar x)  \!=\! \tp(\abar a))
\quad=\quad 
\#_\xbar (\tp(\xbar)  \!=\! \tp(\abar))
\cdot  \#_x (\tp_\abar(x)  \!=\! \tp_\abar(a)). 
\end{equation}

\paragraph*{The equations.}

In the following we use variables $X_p$ indexed by subsets 
$p \subset[m]\times[n]$ of size up to $k-1$;
we may think of such $p$ as being specified by two tuples 
$\abar$ and $\bbar$ of length $|p|$ that enumerate the 
first and second components of the pairs in $p$ in any coherent 
order. In this sense we write $p = \abar\bbar$. 
As remarked before, $p$ is a local bijection between $\str A$
and 
$\str B$ iff
$\etp(\abar) = \etp(\bbar)$, and a local isomorphism 
iff $\atp(\abar) = \atp(\bbar)$ (and neither of these conditions
depends on the chosen enumeration of tuples in $p$, which gives rise to
the order of components in both $\abar$ and $\bbar$).

The augmentation 
of $p \subset [m]\times[n]$  
by some pair $ab \in [m]\times[n]$  
is simply denoted $p \,\widehat{\ }\,ab$.
It is crucial that the notation $p \,\widehat{\ }\,ab$ does \emph{not}
refer to a \emph{tuple} of pairs but to a \emph{set} of pairs, in
which the pair $(a,b)$ is not distinguished. In particular, 
if the pair $ab$ is in $p$, then $p\,\widehat{\ }\,ab = p$. Correspondingly, 
$p \setminus ab$ stands for the set of pairs $a'b' \in p$ that are
distinct from $ab$.

\medskip
For further reference we isolate and name equation types as follows.
For given $n,m \geq 1$ and matrices $A \in \B^{n,n}$ and $B \in \B^{m,m}$: 
\[
\barr{l}
\,X_\emptyset = 1
\hfill \CONT0
\\
\\
\left.%
\barr{@{}l@{}}
X_p = \sum_{b'} X_{p\,\widehat{\ }\, ab'} = \sum_{a'} X_{p\,\widehat{\ }\, a'b}
\\
\hnt
\mbox{for } |p| = \ell-1, a \in [m], b \in [n]
\earr
\qquad\;
\right\} 
\hfill \CONT{\ell}
\\
\\
\left.%
\barr{@{}l@{}}
\sum_{a'} A_{aa'}  X_{p\,\widehat{\ }\, a'b} = 
\sum_{b'} X_{p\,\widehat{\ }\, ab'} B_{b'b} 
\\
\hnt
\mbox{for } |p| = \ell-1, a \in [m], b \in [n]
\earr
\quad\;\,
\right\} 
\qquad \COMP{\ell}
\earr
\]

\bigskip
Here \emph{level $\ell$} refers to $\ell$ as the size of the pairings 
$\abar \bbar$ in the typical variables $X_{\abar \bbar}$ involved; 
note that the size of $p$ mentioned in $X_{p\,\widehat{\ }\,ab}$ 
therefore remains one below this $\ell$. In the generic formats 
for  $\CONT{\ell}$ and $\COMP{\ell}$
above, we assume $\ell \geq 1$. Note that the combination of
$\CONT{0}$, $\CONT{1}$ and $\COMP{1}$ 
precisely corresponds to the equations for fractional 
isomorphism; in particular the level~$1$ equation $\COMP{1}$ is the same 
equation that was denoted 
$\comp[A,B]$ with specific reference to the constituents $A/B$ 
in Section~\ref{goodsolisosec}.

If we think of the matrix entries 
$(X_{\abar\bbar\,\widehat{\ }\,ab})_{a \in [n], b \in [m]}$ as specifying 
extensions of $\abar \mapsto \bbar$ in the form of a distribution
on possible pairings $a \mapsto b$, then equations~$\CONT{\ell}$
may be seen as \emph{continuity conditions}, while 
equations~$\COMP{\ell}$ specify \emph{compatibility conditions}
with the edge relations encoded in $A$ and $B$.
Variants of the compatibility conditions can be expressed for matrices other
than the adjacency matrices $A$ and $B$ that we primarily think
of. We saw one such variation in the discussion on boolean 
isomorphisms above, where $\COMP1$
was postulated for both $A,B$ and $A^c,B^c$. Further 
variants will play a role in Section~\ref{soltogamesec}.

\subsection{Sherali--Adams of level $k-1$}
\label{sec:sa}

For $k \geq 2$, 
the \emph{level $k-1$ Sherali--Adams relaxation} of the integer linear
program $\ISO$ consists of 
the collection of the equations 
$\CONT{\ell}$ and $\COMP{\ell}$ 
for $\ell < k$. 
Note that $\ISO[1]$ (i.e., level~$1$ or $\ISO[k-1]$ for $k=2$) 
is fractional isomorphism.

\begin{center}
\framebox{\begin{minipage}{\textwidth-3em-4mm}
$\ISO[k-1]$
\[
\barr{l}
\left.
\barr{@{}l@{}} 
X_\emptyset = 1
\quad\mbox{and}
\\
\hnt
X_p  = 
\sum_{b'} X_{p\,\widehat{\ }\, ab'} %
= \sum_{a'} X_{p\,\widehat{\ }\, a'b}
\\
\hnt
\mbox{for } |p| < k-1, a \in [m], b \in [n]
\earr
\qquad
\right\} \quad
\CONT{\ell} \mbox{ for } \ell < k 
\\
\\
\left.
\barr{@{}l@{}} 
\sum_{a'} A_{aa'}  X_{p\,\widehat{\ }\, a'b} = 
\sum_{b'} X_{p\,\widehat{\ }\, ab'} B_{b'b}
\\
\hnt
\mbox{for } |p| < k-1,  a \in [m], b \in [n]
\earr
\quad\;
\right\} \quad
\COMP{\ell} \mbox{ for } \ell < k
\\
\\
X_p\ge 0\mbox{ for }|p|\le k-1 
\earr
\]
\end{minipage}}
\end{center}

\subsubsection{From $\LC^k$-equivalence to solutions}

Assume that $\str{A} \equiv_\LC^k \str{B}$. This implies that 
$\str{A}$ and $\str{B}$ realise exactly the same
types of $r$-tuples for $r \leq k$, and with the 
same number of realisations: 
\begin{equation}
\#_\xbar^\str{A} (\tp(\xbar)  \!=\! \tp(\abar))
= 
\#_\xbar^\str{B} (\tp(\xbar)  \!=\! \tp(\abar))
\end{equation}
and similarly for all types $\tp(\bbar)$ of $r$-tuples in $\str{B}$
for $r \leq k$. In particular $m=|\str{A}| = |\str{B}|=n$
so that both structures have domain $[n]$. 

If $\tp(\abar) = \tp(\bbar)$, where 
$\abar$ and $\bbar$ are $r$-tuples for $r \leq k-1$, then, for 
any $a \in [n]$, there is a $\hat{b} \in [n]$ such that  
$\tp(\bbar \hat{b}) = \tp(\abar a)$; and 
for any such choice of $\hat{b}$ we find 
(cf.~equation~(\ref{quotienteqn})):
\begin{equation}
\label{quotienteqntwo}
\barr{r@{\;=\;}l}
\#_x^\str{A}(\tp_\abar(x)  \!=\! \tp_\abar(a))  & 
\#_{\xbar x}^\str{A}(\tp (\xbar x)  \!=\! \tp(\abar a))
\bigm/
\#_\xbar^\str{A} (\tp(\xbar)  \!=\! \tp(\abar)) 
\\
\hnt
&
\#_{\xbar x}^\str{B}(\tp(\xbar x)  \!=\! \tp(\abar a))
\bigm/
\#_\xbar^\str{B} (\tp(\xbar)  \!=\! \tp(\abar)) 
\\
\hnt
&
\#_{\xbar x}^\str{B}(\tp(\xbar x)  \!=\! \tp(\bbar \hat{b}))
\bigm/
\#_\xbar^\str{B} (\tp(\xbar)  \!=\! \tp(\bbar)) 
\\
\hnt
& 
\#_x^\str{B} (\tp_\bbar(x)  \!=\! \tp_\bbar(\hat{b})).
\earr
\end{equation}

Similarly, for $r$-tuples $\abar$ and $\bbar$ 
such that $\tp(\abar) = \tp(\bbar)$, where $r \leq k-2$ (!), 
and for any 
$a$ and $b$, 
there are $\hat{a}$ and $\hat{b}$ 
such that 
$\tp(\bbar \hat{b} b) = \tp(\abar a \hat{a})$
and 
\begin{equation}
\label{edgecount}
\#_{xy}^\str{A} (\tp_\abar(xy)  \!=\! \tp_\abar(a\hat{a})) =
\#_{xy}^\str{B} (\tp_\bbar(xy)  \!=\! \tp_\bbar(\hat{b}b)).
\end{equation}

For the desired solution put
\medskip
\begin{equation}
\label{soldef}
\barr{l}
X_\emptyset := 1, 
\\
X_p := \delta(\tp(\abar),\tp(\bbar)) \bigm/ \#_\xbar 
(\tp(\xbar)  \!=\! \tp(\abar)), 
\\
\nt\hfill
\mbox{ for $p = \abar\bbar$, $0 < |p| <k$.}
\earr
\end{equation}

For the denominator note that
$\#_\xbar (\tp(\xbar)  \!=\! \tp(\abar)) 
= 
\#_\xbar (\tp(\xbar)  \!=\! \tp(\bbar))$ whenever 
$\tp(\abar) = \tp(\bbar)$.
Clearly $X_p \geq 0$. 
Note that $X_p \not= 0$ implies 
$\tp(\abar) =\tp(\bbar)$, which implies that 
$\atp(\abar) =\atp(\bbar)$ whereby $\abar \mapsto \bbar$ is a 
local isomorphism.

We check that the given assignment to the variables $X_p$ 
satisfies all instances of the
equations~$\CONT{\ell}$ and~$\COMP{\ell}$ 
in Sherali--Adams of level $k-1$. In fact, 
we shall prove that
the $X_p$ as specified by~(\ref{soldef}),
satisfy all instances of the  continuity 
equations~$\CONT{\ell}$ of levels $\ell \leq k$ (!) 
and all instances of the compatibility 
equations~$\COMP{\ell}$
of levels~$\ell < k$, while level $k-1$ Sherali--Adams,
i.e.~$\ISO(k-1)$,
just requires both equation types for levels $\ell < k$.
This combination of equations, which seem to be of `mixed' or
intermediate levels in relation to the established Sherali--Adams
hierarchy, will be important in our analysis.
We denote it as $\ISO(k-1/2)$:

\begin{center}
\framebox{\begin{minipage}{\textwidth-3em-4mm}
$\ISO[k-1/2]$
\[
\barr{l}
\left.
\barr{@{}l@{}} 
X_\emptyset = 1
\quad\mbox{and}
\\
\hnt
X_p  = 
\sum_{b'} X_{p\,\widehat{\ }\, ab'} %
= \sum_{a'} X_{p\,\widehat{\ }\, a'b}
\\
\hnt
\mbox{for } |p| < k, a \in [n], b \in [m]
\earr
\qquad
\right\} \quad
\CONT{\ell} \mbox{ for } \ell \leq k 
\\
\\
\left.
\barr{@{}l@{}} 
\sum_{a'} A_{aa'}  X_{p\,\widehat{\ }\, a'b} = 
\sum_{b'} X_{p\,\widehat{\ }\, ab'} B_{b'b}
\\
\hnt
\mbox{for } |p| < k-1,  a \in [n], b \in [m]
\earr
\quad\,
\right\} \quad
\COMP{\ell} \mbox{ for } \ell < k 
\\
\\
X_p\ge 0\mbox{ for }|p|\le k
\earr
\]
\end{minipage}}
\end{center}

\bL
\label{mixedlevellem}
If $\str{A} \equiv_\LC^k \str{B}$, then $\ISO(k-1/2)$
admits a solution.
\eL

\prf 
We show that the solution $(X_p)$ proposed as~\ref{soldef}
satisfies all relevant instances of $\CONT{\ell}$ and $\COMP{\ell}$.
 
Consider an instance of $\CONT{\ell}$ 
for the solution proposed in 
of level $\ell \leq k$, i.e., for $|p| < k$, 
with $p = \abar \bbar$, $a \in [n]$. 
If $\tp(\abar) \not= \tp(\bbar)$, then both sides of the equation 
are zero. In case $\tp(\abar) = \tp(\bbar)$, let $\hat{b} \in [n]$ be such
that $\tp(\bbar \hat{b}) = \tp (\abar a)$ (such $\hat{b}$ exist as 
$\tp(\abar) = \tp(\bbar)$ and since $|p| < k$).  
Then
\[
\barr{r@{\;=\;}l}
\sum_{b'} X_{p\,\widehat{\ }\, ab'} & 
\sum_{b'} \; \delta(\tp(\abar a),\tp(\bbar b'))
\bigm/
\#_{\xbar x}(\tp(\xbar x)  \!=\! \tp(\abar a))
\\
\vhnt
&
\#_{x} (\tp_\bbar(x)  \!=\! \tp_\abar (a)) 
\bigm/
\#_{\xbar x}(\tp(\xbar x)  \!=\! \tp(\abar a))
\\
\vhnt
&
\#_{x} (\tp_\bbar(x)  \!=\! \tp_\bbar (\hat{b})) 
\bigm/
\#_{\xbar x}(\tp(\xbar x)  \!=\! \tp(\bbar \hat{b}))
\\
\vhnt
& 
\#_\xbar (\tp(\xbar)  \!=\! \tp(\bbar))^{-1} =  X_p,
\earr
\]
where the crucial equality leading to the last line is from
equation~(\ref{quotienteqn}). 

Consider now an instance of equation~$\COMP{\ell}$
of level $\ell < k$, i.e., with $|p| < k - 1$, 
with $p = \abar \bbar$, $a \in [n]$, $b \in [n]$. Again, the case 
of $\tp(\abar) \not= \tp(\bbar)$ is trivial.
So we are left with the case of $p = \abar \bbar$
with $\tp(\abar) = \tp(\bbar)$ and $|p| \leq k-2$.
These imply that there are $\hat{a}$ and 
$\hat{b}$ such that $\tp(\abar a \hat{a}) = \tp (\bbar \hat{b} b)$.
Then
\begin{equation}
\label{evallhs}
\barr{rl}
&
\sum_{a'} A_{aa'}  X_{p\,\widehat{\ }\, a'b} 
\\
=&
\vhnt
\sum_{a'} A_{aa'} \, \delta(\tp(\abar a'),\tp(\bbar b)) 
\bigm/ \#_{\xbar x}(\tp(\xbar x)  \!=\! \tp(\bbar b))
\\
=&
\vvhnt
\frac{\displaystyle
\#_{y}
\bigl( \mathrm{edge}(ay) \wedge 
\tp_\abar(y) \!=\!   \tp_\abar(\hat{a}) \bigr)}%
{\displaystyle 
\#_{\xbar x}(\tp(\xbar x)  \!=\! \tp(\bbar b))}
\\
=&
\vvhnt
\frac{\displaystyle
\#_{xy}
\bigl( \mathrm{edge}(xy) \wedge \tp_\abar(x)  \!=\! \tp_\abar(a) \wedge 
\tp_\abar(y) = \tp_\abar(\hat{a}) \bigr)}%
{\displaystyle 
\#_{\xbar x}(\tp(\xbar x)  \!=\! \tp(\bbar b))
\cdot \#_x(\tp_\abar(x)  \!=\! \tp_\abar(a))}
\\
=&
\vvhnt
\frac{\displaystyle
\#_{xy}
\bigl( \mathrm{edge}(xy)  \wedge \tp_\abar(x)  \!=\! \tp_\abar(a) \wedge 
\tp_\abar(y)  \!=\! \tp_\abar(\hat{a}) \bigr)}%
{\displaystyle
\#_{\xbar}(\tp(\xbar)  \!=\! \tp(\bbar))
\cdot 
\#_{x}(\tp_\bbar(x)  \!=\! \tp_\bbar(b))
\cdot 
\#_x(\tp_\abar(x)  \!=\! \tp_\abar(a))}\;,
\earr
\end{equation}
where we use instances of equation~(\ref{quotienteqn}), and, 
in the passage from the third to the fourth line, 
artificially count over all realisations 
of $\tp_\abar(a)$ instead of just the fixed parameter $a$,  
and compensate for that in the denominator.

The counting term in the enumerator of this expression,
\[
\#_{xy}
\bigl( \mathrm{edge}(xy) \wedge \tp_\abar(x)  \!=\! \tp_\abar(a)\wedge 
\tp_\abar(y)  \!=\! \tp_\abar(\hat{a})\bigr),
\]
is the sum of the number of realisations of 
all those types $(\tp_\abar(a'',a'))$ 
that simultaneously extend 
$\tp_\abar(a)$, $\tp_\abar(\hat{a})$ and contain 
the formula $\mathrm{edge}(xy)$. 
Each one of these types has exactly the same number
of realisations in $\str{A}$ as the corresponding type 
that simultaneously extends  
$\tp_\bbar(\hat{b})$, $\tp_\bbar(b)$ and contains 
the formula $\mathrm{edge}(xy)$. By symmetry of the graphs under 
consideration, $\mathrm{edge}(xy) $ is equivalent with
$\mathrm{edge}(yx)$ 
and what we obtained in~(\ref{evallhs}) coincides 
with the corresponding evaluation 
of the right-hand side of this instance 
of equation~$\COMP{\ell}$ as desired:

\begin{equation}
\label{evalrhs}
\barr{rl}
&\sum_{b'} B_{b'b}  X_{p\,\widehat{\ }\, ab'} 
\\
\vhnt
=& 
\sum_{b'} B_{b'b} \, \delta(\tp(\abar a),\tp(\bbar b')) 
\bigm/ \#_{\xbar x}(\tp(\xbar x)  \!=\! \tp(\abar a))
\\
\vhnt
=&
\frac{\displaystyle
\#_{xy}
\bigl( \mathrm{edge}(yx)  \wedge \tp_\bbar(x)  \!=\! \tp_\bbar(b) \wedge 
\tp_\abar(y)  \!=\! \tp_\abar(a) \bigr)}%
{\displaystyle
\#_{\xbar x}(\tp(\xbar x)  \!=\! \tp(\abar a))
\cdot \#_x(\tp_\bbar(x)  \!=\! \tp_\bbar(b))}
\\
\vhnt
=&
\frac{\displaystyle
\#_{xy}
\bigl( \mathrm{edge}(yx)  \wedge \tp_\bbar(x)  \!=\! \tp_\bbar(b) \wedge 
\tp_\bbar(y)  \!=\! \tp_\bbar(\hat{b}) \bigr)}%
{\displaystyle
\#_{\xbar}(\tp(\xbar)  \!=\! \tp(\abar))
\cdot 
\#_{x}(\tp_\abar(x)  \!=\! \tp_\abar(a))
\cdot 
\#_x(\tp_\bbar(x)  \!=\! \tp_\bbar(b))}\;.
\earr
\end{equation}
\eprf

We shall see in Theorem~\ref{SAkminus1/2prop} 
that $\equiv_\LC^k$ precisely corresponds to 
$\ISO[k-1/2]$.

\subsubsection{From solutions to pebble game equivalence}
\label{soltogamesec}
In the following we discuss what it means that some admissible real or rational 
non-negative assignment to the variables $X_p$ for all $|p| < k$ satisfies the 
equations~$\CONT{\ell}$ and~$\COMP{\ell}$ for $\ell < k$, i.e., $\ISO[k-1]$.

\bD
\label{supportdef}
A matrix $X = (X_{p})_{|p| \leq k}$, with non-negative entries 
indexed by \emph{sets} $p \subset [m] \times [n]$ of size up to $k$
is \emph{supported by local bijections} if $X_p > 0$ only for 
$p$ that are the graphs of local bijections between $[n]$ and $[m]$.%
We say that $X$ is \emph{supported by local isomorphisms} (w.r.t.\
to edge relations $A$ and $B$ of graphs $\mathcal{A}$ and 
$\mathcal{B}$ on $[n]$ and $[m]$, respectively) 
if $X_p >0$ moreover implies that 
the partial bijection $p$ is a local isomorphism. 
\eD

\bL
\label{locisolem}
\bae
\item
If $(X_p)_{|p| < k}$ is a solution to $\CONT{\ell}$ for $\ell < k$,
then it is supported by local bijections.
\item
For $k \geq 3$, if $(X_p)_{|p| < k}$ is a solution to $\CONT{\ell}$ for 
$\ell < k$ and of $\COMP{\ell}$ for $\ell = 2$, then 
it is supported by local isomorphisms.
\eae
It follows that any solution to $\ISO[k-1]$
is supported by local isomorphisms.
\eL

\prf
Suppose that $p = \abar\bbar$ is not a local bijection, w.l.o.g.\ 
(by symmetry) assume that there are $a$ and $b_1 \not= b_2$ such that 
$(a,b_1), (a,b_2) \in p$. For $p_0 := p \setminus (a,b_2)$ we clearly
have $\ell := |p_0| < k-1$, and 
looking at the instance of~$\CONT{\ell}$ 
for this $p_0$ and $a$, we find that
the two summands for $b' = b_i$, $i = 1,2$, both contribute 
to the left-hand side. So the equation and non-negativity of all
$X$-assignments imply that $X_{p_0} + X_p \leq X_{p_0}$, whence $X_p =0$. 

\medskip
For~(b) we use (a) and instances of the equations~$\COMP{2}$.
Note that for $p = \emptyset$ as well as for $|p|=1$, $p = ab$, cannot fail 
to be a local isomorphism (in undirected, loop-free graphs). Note that 
$p$ is a local isomorphism if all restrictions $p' \subset p$ of 
$p$ of size $|p'| = 2$ are local isomorphisms, 
and that $X_{p} > 0$ implies $X_{p'} >0$ for all
$p' \subset p$ by instances of $\CONT{\ell}$. So it remains 
to argue that~$\COMP{2}$ enforces that $X_{a_1a_2b_1b_2} > 0$ only for 
local isomorphisms $a_1a_2 \mapsto b_1b_2$. As $a_1a_2 \mapsto b_1b_2$ must be
a local bijection by~(a), it remains to check that
\bre
\item
$A_{a_1a_2} = 1 \;\Rightarrow\; B_{b_1b_2} \not=0$, and 
\item
$B_{b_1b_2} = 1 \;\Rightarrow\; A_{a_1a_2} \not=0$.
\ere

For~(i) we use the instance 
$
\sum_{a'} A_{a_1a'} X_{a_1b_1\,\widehat{\ }\,a'b_2} = 
\sum_{b'}  X_{a_1b_1\,\widehat{\ }\,a_1 b'} B_{b'b_2}
$
of equation~$\COMP{2}$, whose right-hand side reduces to the single term 
$X_{a_1b_1} B_{b_1b_2}$ because $X$ is supported by local bijections.
As the left-hand side is positive if $A_{a_1 a_2} = 1$ and 
$X_{a_1b_2 \,\widehat{\ }\, a_2 b_2} > 0$, (i) follows.
The case for~(ii) is strictly analogous.
\eprf

\subsubsection*{From solutions to good solutions} 
Recall Definition~\ref{goodsoldef} for good solutions to individual
equations \[
\comp[A,B] \colon\; AX = XB, 
\]
which is motivated by the 
useful properties these solutions have for the analysis of fractional
isomorphism in relation to $2$-pebble equivalence. 

We now want to analyse the solution spaces of $\ISO[k-1]$ and 
$\ISO[k-1/2]$, and very specifically of the relevant 
continuity equations with a view to strengthening the analogy with 
fractional isomorphism. 
Solutions to the continuity equations up to
arity level~$\ell$ can in fact be understood as fractional isomorphisms 
between graphs of $\ell$-tuples that govern 
the combinatorics of the plain $\ell$-pebble game
and, by extension, of the $k$-pebble game with counting.
This will in particular also allow us to resort to \emph{good solutions} with
similar benefits as in the analysis of plain fractional isomorphism
(at level $1$). 

To this end we switch to a modified view of 
a solution $(X_p)_{|p|\leq \ell}$ that 
labels entries not by local isomorphisms $p$ as \emph{sets of pairs}
$p \subset [m]\times [n]$ of size $|p|$, 
but instead, for $|p|> 0$, by pairs of \emph{full $\ell$-tuples}  
$(\abar,\bbar) \in [m]^\ell \times [n]^\ell$ such that $p(\abar) = \bbar$,
where $\abar$ is any enumeration of $\mathrm{dom}(p)$ and $\bbar$
the matching tuple enumerating $\mathrm{image}(p)$. 
With a solution $X = (X_p)_{|p|\leq \ell}$ we want to 
associate the single 
matrix 
\[
\check{X} = (\check{X}_{\abar,\bbar})_{\abar \in [m]^\ell, \bbar \in [n]^\ell}\colon
\qquad
\check{X}_{\abar,\bbar} := X_p \mbox{ for }
p = \{ (a_i,b_i) \colon i \in [\ell] \}. 
\]

It is easy to see 
that the continuity equations $\CONT{\ell'}$ for all
$\ell' \leq \ell$ for $(X_p)$ imply 
that the associated matrix $\check{X}$ is doubly stochastic.
But the continuity equations for $(X_p)$ also manifest themselves 
in the fact that $\check{X}$ is 
a fractional isomorphism  w.r.t.\ a graph representation 
of the combinatorial layout of the plain $k$-pebble game over 
$[m]^\ell,[n]^\ell$, as follows. 
For $i \in [\ell]$, the matrix $\mathbb{I}^{\ssc (i)} = 
\mathbb{I}_{m,\ell}^{\ssc (i)}$ with entries
\[
\mathbb{I}^{\ssc (i)}_{\abar,\abar'} := 
\prod_{j \not= i} \delta(a_j,a_j') \; \;
\mbox{ for $\abar,\abar' \in [m]^\ell$} 
\] 
may be regarded as the adjacency
matrix of the reflexive graph  
\[
\str{I}^{\ssc (i)}(m,\ell) 
 := ([m]^\ell, \mathbb{I}^{\ssc (i)})
\]
associated with legal moves of the 
$i$-th pebble in the plain $\ell$-pebble game over universe $[m]$. 
Its reflexive and symmetric edge relation 
links two tuples in $[m]^\ell$ if they disagree at most 
in the $i$-th component.
If $X$ solves $\CONT{\ell'}$ for $\ell' \leq \ell$, then 
$\check{X}$ describes a fractional isomorphism between
$\str{I}^{\ssc (i)}(m,\ell)$ and $\str{I}^{\ssc (i)}(n,\ell)$, 
for each $i \in [\ell]$: $\check{X}$ is doubly stochastic and a
solution to the matrix equations 
\[
\comp[\mathbb{I}_{m,\ell}^{\ssc (i)},\mathbb{I}_{n,\ell}^{\ssc (i)}]: \quad
\mathbb{I}_{m,\ell}^{\ssc (i)} \check{X} =
\check{X} \mathbb{I}_{n,\ell}^{\ssc (i)}
\qquad \mbox{ for each $i \in [\ell]$,}
\]
as is shown in Lemma~\ref{translationcontlem} below. 
We may modify this solution 
$\check{X}$ to obtain a \emph{good
 solution} in the sense of Definition~\ref{goodsoldef},
as in Corollary~\ref{goodlinkcor};
and this good solution $(\check{X}'_{\abar,\bbar})$ 
then translates back into a solution $(X'_p)$ of the continuity
equations up to $\CONT{\ell}$ with similar homogeneity benefits.

\bD
\bae
\item
Let $X = (X_p)_{|p| \leq \ell}$ be non-negative, labelled by
\emph{sets} $p \subset [m] \times [n]$ of sizes up to $\ell$. 
Define its \emph{lifting} to tuple co-ordinates to be the 
array $\check{X} := 
(\check{X}_{\abar,\bbar})_{\abar \in [m]^\ell,\bbar \in[n]^\ell}$ 
where $\check{X}_{\abar,\bbar} := X_p$ for 
$p := \{ (a_i,b_i) \colon i \in [\ell] \}$. 
\item
A matrix $\check{X} = (\check{X}_{\abar,\bbar})$ 
labelled by \emph{pairs of tuples}
$(\abar,\bbar) \in [m]^\ell \times [n]^\ell$ is said to be 
\emph{consistent}
if $\check{X}_{\abar,\bbar} = \check{X}_{\abar',\bbar'}$ 
whenever $\{ (a_i,b_i) \colon i \in [\ell] \} 
= \{ (a_i',b_i') \colon i \in [\ell] \}$. 
\item
For any consistent matrix
$\check{X} = (\check{X}_{\abar,\bbar})$ labelled by 
$(\abar,\bbar) \in [m]^\ell \times [n]^\ell$, define 
its \emph{projection} to set co-ordinates to be the array
$X := (X_p)_{|p|\leq \ell}$ with $X_\emptyset := 1$ and 
$X_p = \check{X}_{\abar,\bbar}$ for any 
$\abar,\bbar$ such that $p = \{ (a_i,b_i) \colon i \in [\ell] \}$. 
\eae
\eD

Note that the natural notions of 
being supported by local bijections, or by local
isomorphisms, in the sense of Definition~\ref{supportdef},
are preserved in the passage from $X$ to
$\check{X}$ according to~(a), and in the passage from
consistent $\check{X}$ to $X$ according to~(b).
Note also that consistency of a matrix
$\check{X} = (\check{X}_{\abar,\bbar})$ %
implies that its co-ordinatisation
is \emph{permutation invariant} in the sense that 
$\check{X}_{\abar,\bbar} = \check{X}_{\pi(\abar),\pi(\bbar)}$ for all
permutations $\pi \in S_\ell$. 
Permutation invariance guarantees in particular that 
the fractional isomorphism conditions
$ \mathbb{I}^{\ssc (i)} \check{X} = \check{X} \mathbb{I}^{\ssc (i)}$
are equivalent for any two choices of $i \in [\ell]$.

\bO
\label{conssupppresobs}
The conditions of being doubly stochastic and of permutation-invariance,
as well as of being supported by local bijections or by local isomorphisms,
are compatible with transposition and matrix products:
e.g., if $\check{X}, \check{Y}$ are consistent then 
so are $\check{X}^t$ and $\check{X}\check{Y}$.
For matrices that are supported by local bijections, consistency 
is also preserved in transposition and matrix products.
\eO

\prf
The proofs are straightforward and we just sketch the argument for
consistency. Symmetry of the condition immediately implies
compatibility with transposition. For products consider 
matrices $\check{X} = (\check{X}_{\abar,\bbar})$ and 
$\check{X} = (\check{X}_{\bbar,\cbar})$ that are both consistent
and supported by local bijections. Let $X = (X_p)$ and 
$Y = (Y_p)$ be their projections to set co-ordinates, which are 
therefore also supported by local bijections. 
Let $\check{Z}:= \check{X}\check{Y}$
and consider index pairs $(\abar,\cbar)$ and $(\abar',\cbar')$ such that 
$\{ (a_i,c_i) \colon i \in [\ell]\} = \{ (a_i',c_i') \colon i \in
[\ell]\}$.
Let $s$ be the size of these sets. 
Then 
\[
\check{Z}_{\abar,\cbar} =
\sum_{\bbar} 
\check{X}_{\abar,\bbar} \check{Y}_{\bbar,\cbar} 
=
1/s!  \!\!\! \sum_{q \circ p \colon \abar \mapsto \cbar} \!\!\!
X_p Y_q 
=
1/s!  \!\!\! \sum_{q \circ p \colon \abar' \mapsto \cbar'}  \!\!\!
X_p Y_q 
=
\check{Z}_{\abar',\cbar'}.
\]
Note that the $p,q$-sums are over all
pairs of local bijections whose compositions 
are represented by the graph 
$\{ (a_i,c_i) \colon i \in [\ell]\} = \{ (a_i',c_i') \colon i \in [\ell]\}$.
That just local bijections need to be considered, is due to the support
of $\check{X}$ and $\check{Y}$ by local bijections: all other terms
vanish; the uniform translation of these terms into set co-ordinates 
relies on consistency of $\check{X}$ and $\check{Y}$.
\eprf

\bL 
\label{translationcontlem}
\bae
\item
Let $X= (X_p)_{|p|\leq \ell}$ be 
labelled by sets $p \subset [m] \times [n]$ of sizes up to $\ell$
and let $\check{X}$ be its consistent lifting. 
If $X$ solves the continuity equations up to level $\ell$, then 
$\check{X}$
is doubly stochastic (which implies $n=m$) 
and commutes with the $\mathbb{I}^{\ssc (i)} = \mathbb{I}^{\ssc (i)}_{n,\ell}$:
\[
\mathbb{I}^{\ssc (i)} \check{X}  = \check{X} \mathbb{I}^{\ssc (i)} 
\; \mbox{ for } i \in [\ell]. 
\]
In other words, 
$\check{X}$ is a fractional automorphism
of the (reflexive) graph $\str{I}^{\ssc (i)}(n,k)$, for each $i \in [\ell]$.
\item
Conversely, let $\check{X} = 
(\check{X}_{\abar,\bbar})$, labelled 
by $(\abar,\bbar) \in [m]^\ell \times [n]^\ell$, 
be consistent 
and let $X = (X_p)_{|p| \leq \ell}$ be its projection to
set co-ordinates. 
If $\check{X}$ is a fractional isomorphism between 
$\str{I}^{\ssc (i)}(m,\ell)$ and $\str{I}^{\ssc (i)}(n,\ell)$
for some $i$ (and hence for each $i$), then $n=m$ and 
$X$ satisfies the continuity equations up to level $\ell$.
\eae
\eL

\prf%
For part~(a), assume first that $X$ solves the continuity equations
up to level $k$. By induction on the number $|[\abar]|$ 
of distinct components in $\abar \in [n]^\ell$, we show that 
the row sum $\sum_{\bbar} \check{X}_{\abar,\bbar} = 1$. For
$|[\abar]| = 1$, $\abar = a^\ell$ and  
$\sum_{\bbar} \check{X}_{\abar,\bbar} = 
\sum_{b} \check{X}_{a^\ell,b^\ell} = \sum_b X_{\emptyset\,\widehat{\ }\,{ab}}
= X_\emptyset = 1$ by the continuity equations for $X$. 
The induction step is treated analogously: if $|[\abar]|$ 
is such that, e.g., $a_\ell \not\in \{ a_1,\ldots,a_{\ell-1}\}$, we 
associate with any $\bbar \in [n]^\ell$ such that
$\check{X}_{\abar,\bbar} > 0$
the local bijection $p = \{ (a_i,b_i) \colon i < \ell \}$ and 
use the identity $\check{X}_{\abar,\bbar} = X_{p\,\widehat{\ }\,a_\ell
  b_\ell}$
to first rewrite 
\[
\sum_{b_\ell} \check{X}_{\abar,\bbar} = \sum_{b_\ell} X_{p\,\widehat{\
  }\,a_\ell b_\ell}
= X_p = 
\check{X}_{\abar\frac{a_1}{\ell},\bbar\frac{b_1}{\ell}}
\]
using continuity equations for $X$. Therefore, 
\[
\sum_{\bbar} \check{X}_{\abar,\bbar} = \sum_{b_1,\ldots,b_{\ell-1}}
\sum_{b_\ell} X_{\abar\bbar}
=
\sum_{b_1,\ldots,b_{\ell-1}}
\check{X}_{\abar\frac{a_1}{\ell},\bbar\frac{b_1}{\ell}}
=
\sum_{\bbar}
\check{X}_{\abar\frac{a_1}{\ell},\bbar} = 1
\]
by inductive hypothesis. For the commutation condition
$\mathbb{I}^{\ssc (i)} \check{X} = \check{X} \mathbb{I}^{\ssc (i)}$
we check that, for any $\abar,\bbar \in [n]^\ell$, and for
$p := \{ (a_j,b_j) \colon j \not= i \}$: 
\[
(\mathbb{I}^{\ssc (i)} \check{X} )_{\abar,\bbar} = 
\sum_{\abar'}
\mathbb{I}^{\ssc (i)}_{\abar,\abar'} \check{X}_{\abar',\bbar} 
= 
\sum_{a'} 
\check{X}_{\abar\frac{a'}{i},\bbar}
= 
\sum_{a'} X_{p\,\widehat{\ }\,a'b_i} 
= 
X_p
\]
and similarly 
\[
(\check{X} \mathbb{I}^{\ssc (i)})_{\abar,\bbar} = 
\sum_{\bbar'}
\check{X}_{\abar,\bbar'} 
\mathbb{I}^{\ssc (i)}_{\bbar',\bbar} 
= 
\sum_{b'} 
\check{X}_{\abar,\bbar\frac{b'}{i}}
= 
\sum_{b'} X_{p\,\widehat{\ }\,a_ib'} 
= 
X_p. 
\]

Towards the converse, (b), the last two equations also show 
that the fractional isomorphism conditions for $\check{X}$ w.r.t.\ 
$\mathbf{I}^{\ssc (i)}$ imply that the sums
\[
\sum_{a'} X_{p\,\widehat{\ }\,a'b_i}  
= 
\sum_{b'} X_{p\,\widehat{\ }\,a_ib'}
\] 
must be independent 
of the choice of $a_i$ and $b_i$. 
Their value, therefore, must be equal to 
$\sum_{b'} X_{p\,\widehat{\ }\,a_1 b'} =  X_{p\,\widehat{\ }\,a_1 b_1}
= X_p$, whence the continuity equations 
at all levels $2 \leq \ell' \leq \ell$ are established. For 
level $1$, $1 = X_\emptyset = \sum_a X_{ab} = \sum_b X_{ab}$
follows from the doubly stochastic nature
of $\check{X}$, which implies, e.g., that $1 = 
\sum_{\bbar} \check{X}_{a^\ell \bbar} =
\sum_b \check{X}_{a^\ell b^\ell} 
= \sum_b X_{ab}$. 
\eprf

\bD
\label{goodtranslationdef}
A solution $X = (X_p)_{|p| \leq \ell}$ 
to $\CONT{\ell'}$ for $\ell' \leq \ell$ 
is \emph{good} if its lifting 
$\check{X} = (\check{X}_{\abar,\bbar})$
to tuple co-ordinates is a good solution (in the sense of 
Definition~\ref{goodsoldef}) to 
$\comp[\mathbb{I}^{\ssc (i)}(m,\ell),
\mathbb{I}^{\ssc (i)}(n,\ell)]$ (simultaneously for
every $i \in [\ell]$).
\eD

Recall from Definition~\ref{goodsoldef} that a good solution,
$\check{X}$ in tuple co-ordinates, induces $\check{X}$-related 
partitions of the vertex 
sets $[m]^\ell = \dot{\bigcup}_s D_s$ and $[n]^\ell =
\dot{\bigcup}_s D_s'$ of the 
$\mathbb{I}^{\ssc (i)}$ such that
\bre
\item
these partitions are equivalent and stable w.r.t.\ the edge relations of the  
$\mathbb{I}^{\ssc (i)}$, for every $i \in [\ell]$;
\item
$\check{X}_{D_sD_s'} > 0$ for all $s$; 
\item
$\check{X}_{D_sD_t'} = 0$ for $s \not= t$.
\ere

Note that, if 
$(X_p)$ is a good solution that is 
supported by local isomorphisms, then 
the $\check{X}$-related partitions $(D_s)$ and $D_s')$ 
are such that 
$\abar \mapsto \bbar$ is a local isomorphism between $\str{A}$ and $\str{B}$
whenever $\check{X}_{\abar,\bbar} >0$, i.e., whenever 
$\abar$ and $\bbar$ are from matching partition sets $D_s$ and $D_s'$.

\medskip
Suppose that $X = (X_p)_{|p| \leq \ell}$ is a solution to 
$\CONT{\ell'}$ for $\ell' \leq \ell$ that is supported by local
isomorphisms. Note that by Lemma~\ref{locisolem} 
this is the case for any solution to $\ISO(k-1)$ and for $\ell = k-1$. 
Then by Lemma~\ref{translationcontlem}~(a) 
the lifting $\check{X}$ of $X$ to tuple co-ordinates 
is a fractional isomorphism between 
$\mathbb{I}^{\ssc (i)}(m,\ell)$ and $\mathbb{I}^{\ssc (i)}(n,\ell)$,
which is also supported by local isomorphisms 
between $\str{A}$ and $\str{B}$ in the sense that 
$\check{X}_{\abar,\bbar} > 0$ implies that
$\abar \mapsto \bbar$ is a local isomorphism
between $\str{A}$ and $\str{B}$. 
The associated matrices $\check{Z} := \check{X}^t \check{X}$ and 
$\check{Z}' := \check{X} \check{X}^t$ are 
doubly stochastic, symmetric with strictly positive diagonal, and 
also supported by local isomorphisms between $\str{A}$ and $\str{B}$ 
in the above sense. As in Corollary~3.9., we thus 
obtain a good solution $\check{X}'$ for 
the fractional isomorphism between 
$\mathbb{I}^{\ssc (i)}(m,\ell)$ and $\mathbb{I}^{\ssc (i)}(n,\ell)$. 
This good solution is 
of the form $\check{X}' = \check{Z}^n \check{X}$.
Note that, by Lemma~\ref{translationcontlem}~(b), 
the projection of $\check{X}'$ to set co-ordinates, 
$X' = (X'_p)_{|p| \leq \ell}$, 
is again a solution to $\CONT{\ell'}$ for $\ell' \leq \ell$,
and it is supported by local isomorphisms. Regarding the relationship
between $(X_p)$ and $(X_p')$, also note that 
$X_p >0$ implies $X_p' >0$  as $\check{Z}$ and its powers have
strictly positive diagonal.

We summarise these observations regarding solutions to
$\CONT{\ell}$ that are supported by local isomorphisms 
in a corollary before turning to the status of the 
induced solutions w.r.t.\ levels of $\COMP{\ell}$.

\bC
\label{translatetogoodcor} 
If the equations 
$\CONT{\ell'}$ for $\ell' \leq \ell$ for $\str{A}$ 
on $[m]$ and $\str{B}$ on $[n]$
admit any solution $X = (X_p)_{|p| \leq \ell}$ 
that is supported by local isomorphisms between 
$\str{A}$ and $\str{B}$, then they also admit a solution 
$X' = (X'_p)_{|p| \leq \ell}$ that is \emph{good} in the sense of
Definition~\ref{goodtranslationdef} and supported by local
isomorphisms. Moreover, the natural good solution
$X'$ associated with a given solution $X$ is 
strictly positive where $X$ is.
\eC

We are now ready to prove the converse of Lemma~\ref{mixedlevellem},
thus matching $\LC^k$-equivalence to $\ISO(k-1/2)$. Recall
$\ISO(k-1/2)$ (cf.~the table given for Lemma~\ref{mixedlevellem} above):
these intermediate levels Sherali--Adams levels combine 
the equations~$\COMP{\ell}$, concerning compatibility with the edge relations,  
of level $\ell < k$ with the continuity equations~$\CONT{\ell}$ of 
levels $\ell \leq k$.

Let $\str A$, $\str B$ be graphs with vertex sets
$[m],[n]$, respectively, $A,B$ be their symmetric adjacency
matrices. 
We know from %
Lemma~\ref{mixedlevellem} that $\LC^k$-equivalence of 
$\str{A}$ and $\str{B}$ implies 
the existence of a solution for exactly this combination of equations. 
The following theorem says that the converse is also true. 

\bT
\label{SAkminus1/2prop}
$\ISO[k-1/2]$ has a solution if, and only if, $\str A\equiv_\LC^{k}\str B$.
\eT

\prf
It remains to argue for the implication from solvability of 
$\ISO[k-1/2]$ to $\LC^k$-equivalence. In fact, if 
$(X_p)_{|p| \leq k}$ is any solution to $\ISO[k-1/2]$, then 
\[
X_{\abar\bbar} > 0 \quad\Longrightarrow\quad \str{A},\abar \equiv_\LC^k
\str{B},\bbar.
\]

We know from Lemma~\ref{locisolem} that any solution $X = (X_p)$
to $\ISO[k-1/2]$ is supported by local isomorphisms. By 
Corollary~\ref{translatetogoodcor} we may therefore assume without 
loss of generality that the given solution 
is good in the sense of Definition~\ref{goodtranslationdef}.
So its lifting $\check{X} = (\check{X}_{\abar,\bbar})$ to tuple
co-ordinates induces $\check{X}$-related partitions 
$[n]^k = \dot{\bigcup}_{s} D_s$ and $[n]^k = \dot{\bigcup}_{s} D_s'$
that are equivalent stable partitions 
w.r.t.\ $\mathbb{I}^{\ssc (j)}(n,k)$ for each $j \in
[k]$, and $\check{X}_{\abar,\bbar} > 0$ precisely if the tuples 
$\abar$ and $\bbar$ are from matching partition sets. It suffices to 
provide strategy a for $\PII$ to maintain this condition. For a round
played in component $j$, 
\[
\textstyle 
\#_a ( \abar\frac{a}{j} \in D_t ) = 
\#_b ( \bbar\frac{b}{j} \in D_t') 
\]
follows from equivalence and stability of the partitions 
w.r.t.\ the edge relation of $\mathbb{I}^{\ssc (j)}$.
Player \PII\ can offer a bijection $a \mapsto b$ 
that matches partition indices for $\abar\frac{a}{j}$ 
and $\bbar\frac{b}{j}$, and thus maintains strict positivity 
of $\check{X}$ and $X$.
\eprf

Note that, of the compatibility equations in $\ISO(k-1/2)$,
we only had to use the level~$2$ equation $\COMP{2}$, which is 
sufficient to establish support by local isomorphisms
(cf.\ Lemma~\ref{locisolem}~(b)). The rest of the compatibility 
levels are actually redundant, and this only changes when 
the levels of compatibility and continuity equations 
are properly matched, as they are in the regular Sherali--Adams 
relaxation levels of the isomorphism problem. 
To understand the nature of $\ISO(k-1)$ 
we need to take the higher levels of the compatibility equations
into account. To this end we turn to the interpretation of the compatibility
equations $\COMP{\ell}$ in terms of liftings to tuple co-ordinates.
 
By Lemma~\ref{locisolem}, even the level~$2$ equations
$\COMP{2}$ guarantee that solutions are supported by local isomorphisms.
But if $X = (X_p)_{|p| \leq \ell}$ satisfies 
$\COMP{\ell'}$ for all levels $\ell' \leq \ell$, 
then its lifting $\check{X} = (\check{X}_{\abar,\bbar})$ 
to tuple co-ordinates also satisfies the following equations 
for all $\abar \in [m]^\ell, \bbar \in
[n]^\ell, a \in [m], b \in [n]$ and each $i \in [\ell]$:
\[
\sum_{a'} A_{aa'} \check{X}_{\abar\frac{a'}{i},\bbar\frac{b}{i}}
=
\sum_{b'}  \check{X}_{\abar\frac{a}{i},\bbar\frac{b'}{i}} B_{b'b}.
\]

For this it suffices to note that the set projection of 
$(\abar\frac{a}{i},\bbar\frac{b}{i})$ is $p\,\widehat{\ }\,ab$
for $p = \{ a_jb_j \colon j \not= i \}$. In order to view these
equations as commutativity conditions in the style of 
fractional isomorphism, we artificially lift the adjacency matrices 
$A$ and $B$ to tuple co-ordinates with entries $A_{\abar,\abar'}$ and 
$B_{\bbar,\bbar'}$ for pairs of tuples $\abar,\abar' \in [m]^\ell$ and 
$\bbar,\bbar' \in [n]^\ell$ by putting, for instance in the case of
$A$, for $i \in [\ell]$:
\[
A_{\abar,\abar'}^{\ssc (i)}:= \prod_{j \not= i} \delta(a_j,a_j') A_{a_ia_i'}.
\]

Note that these matrices are symmetric if $A$ is.
We obtain an equivalence between $\COMP{\ell}$ for $X$ and 
$\comp[A^{\ssc (i)},B^{\ssc (i)}]$ for $\check{X}$. 

\bL
\label{translationcompeqnslem}
For any matrix $X = (X_p)_{|p| \leq \ell}$ labelled by sets
$p \subset [m] \times [n]$ of size up to $\ell$ and its lifting to 
tuple co-ordinates $\check{X} = (\check{X}_{\abar,\bbar})$ labelled by
tuples $\abar \in [m]^\ell, \bbar \in [n]^\ell$, the following are
equivalent:
\bre
\item 
$X= (X_p)$ satisfies the equations $\COMP{\ell'}$  for $\ell' \leq
\ell$;
\item
$\check{X} = (\check{X}_{\abar,\bbar})$  satisfies 
$\comp[A^{\ssc (i)},B^{\ssc (i)}]$ for some (and hence any) $i \in
[\ell]$.
\ere 
\eL

\prf
The equation $\comp[A^{\ssc (i)},B^{\ssc (i)}]$ for $\check{X}$ is
\[
\sum_{\abar' \in [m]^\ell} 
A_{\abar,\abar'}^{\ssc (i)} \check{X}_{\abar',\bbar}
=
\sum_{\bbar' \in [n]^\ell}  
\check{X}_{\abar,\bbar'} B^{\ssc (i)}_{\bbar',\bbar},
\]
and its equivalence with the equations $\COMP{\ell'}$ for 
levels $\ell'$ up to $\ell$ is immediate from the definition of 
$A^{\ssc (i)}/B^{\ssc (i)}$ in terms of $A/B$  and of 
$\check{X}$ in terms of $X$. 
\eprf

The following summarise the relevant translations of continuity 
and compatibility equations, which are expressed in terms of set
co-ordinates $X_p$, to their liftings 
to tuple co-ordinates $\check{X}_{\abar,\bbar}$; compare  
Lemmas~\ref{translationcontlem} and~\ref{translationcompeqnslem}.

\[
\mbox{$\displaystyle%
\barr{l}
\left.%
\barr{@{}l@{}}
\sum_{\abar'} 
\check{X}_{\abar'\bbar} =
\sum_{\bbar'} 
\check{X}_{\abar\bbar'} = 1
\\
\\
\sum_{\abar'} \mathbb{I}^{\ssc
  (i)}_{\abar\abar'} \check{X}_{\abar',\bbar} 
=
\sum_{\bbar'} \check{X}_{\abar,\bbar'}\mathbb{I}^{\ssc
  (i)}_{\bbar'\bbar} 
\\
\hnt
\mbox{for all } 
\abar \in [m]^\ell, \bbar \in [n]^\ell, \mbox{ and all }i \in [\ell]
\earr
\qquad\;\,%
\right\} 
\quad \mbox{lifting of }\CONT{\ell'}, \ell' \leq \ell 
\\
\\
\left.%
\barr{@{}l@{}}
\sum_{\abar'}%
A_{\abar,\abar'}^{\ssc (i)} \check{X}_{\abar',\bbar}
=
\sum_{\bbar'}%
\check{X}_{\abar,\bbar'} B^{\ssc (i)}_{\bbar',\bbar}
\\
\hnt
\mbox{for all } 
\abar \in [m]^\ell, \bbar \in [n]^\ell, \mbox{ and all }i \in [\ell]
\earr
\quad\;\, \quad
\right\} 
\quad \mbox{lifting of }\COMP{\ell'},\ell' \leq \ell 
\earr$}%
\]

As in the analysis of solutions for the fractional isomorphism problem,
we see that passage to a good solution 
$\check{X}'$ of the form $\check{X}' = \check{Z}^n \check{X}$
for $\check{Z} = \check{X} \check{X}^t$ is compatible with this type
of commutativity condition.

Starting from a solution $X$ to $\ISO(k-1)$, whose 
lifting to tuple co-ordinates $\check{X}$ satisfies the above 
combination of equations for $\ell = k-1$, we obtain a good solution
$\check{X}'$, which induces $\check{X}'$-related partitions of 
$[m]^{k-1}$ and $[n]^{k-1}$ as in Definition~\ref{goodtranslationdef}.
These partitions are now simultaneously stable w.r.t.\ 
the $\mathbb{I}^{\ssc (i)}$ and w.r.t.\ the liftings of the edge relations 
$A^{\ssc (i)}$ and $B^{\ssc (i)}$, for each $i \in [\ell]$. In other
words, we obtain a simultaneous good solution $\check{X}'$
in the sense of Definition~\ref{goodsoldef} 
for 
\[
\comp[\mathbb{I}^{\ssc (i)}_{m,k-1},\mathbb{I}^{\ssc (i)}_{n,k-1}]\,,\;
\comp[A^{\ssc (i)},B^{\ssc (i)}] \quad\mbox{ for all } i \in [k-1].
\]

We thus obtain the following, by reverse
translation of a good solution $\check{X}'$ into 
set co-ordinates. 

\bC
\label{translatetogoodcompcor} 
Let $k \geq 3$. Any solution $X = (X_p)_{|p| < k}$ 
to $\ISO(k-1)$ induces a \emph{good} solution  
$X' = (X_p')_{|p| < k}$ to $\ISO(k-1)$ 
with $\check{X}'$-related induced partitions of 
$[m]^{k-1} = \dot{\bigcup}_s D_s$ and 
$[n]^{k-1} =  \dot{\bigcup}_s D_s'$ such that 
\bre
\item 
these partitions are equivalent stable partitions 
w.r.t.\ the edge relations of the $\mathbb{I}^{\ssc (i)}$ 
for each $i \in [\ell]$;
\item
these partitions 
are equivalent stable partitions w.r.t.\ the liftings of the edge relations 
$A^{\ssc (i)}$ on $[m]^{k-1}$ and $B^{\ssc (i)}$ on $[n]^{k-1}$, 
for each $i \in [\ell]$;
\item
$\check{X}'_{D_sD_s'} > 0$ for all $s$;
\item
$\check{X}_{D_sD_t}' = 0$ for all $s\not= t$.
\ere
In particular, $\abar \mapsto \bbar$ is a local isomorphism between 
$\str{A}$ and $\str{B}$ for $\abar$ and $\bbar$ from matching
partition sets.
Moreover, $X'$ is strictly positive where the given $X$ is.
\eC

\subsubsection*{\boldmath Level~$k-1$ solutions and 
$\LC^{<k}$-equivalence}
As we shall see in Section~\ref{gapsec}, a solution $X = (X_p)_{|p| < k}$ to
$\ISO(k-1)$ is in fact not strong enough to guarantee 
$\LC^k$-equivalence between $\str{A}$ and $\str{B}$.  
Instead it matches to a slightly lesser level of equivalence,
\[
\str{A}\equiv_\LC^{<k} \str{B},
\]
which we characterise in terms of a modified game, the 
\emph{weak bijective $k$-pebble 
game} over $\str{A},\str{B}$. 
The game is played by two players. If
$m\neq n$, player \PII\ loses immediately. Otherwise, a play of the
game proceeds in a sequence of rounds.
Positions of the game are sets $p\subseteq [m]\times [n]$ of size
$|p|\le k-1$. Normally, the initial position is $\emptyset$, but we
will also consider plays of the game starting from other initial positions. A single round of
the game, starting in position $p$, is played as follows.
\begin{enumerate}%
\item If $|p|=k-1$, player \PI\ selects a pair $ab\in p$. 
\\
If $|p|<k-1$, this step is omitted.
\item Player \PII\ selects a bijection between $[m]$ and
  $[n]$ (recall that $m=n$).
\item Player \PI\ chooses a pair $a'b'$ from this bijection.
\item If $p^+:=p\,\widehat{\ }\,a'b'$ is a local isomorphism then the new
  position is 
  \[
  p':=
  \begin{cases}
   \;\; (p\!\setminus\! ab)\,\widehat{\ }\,a'b'&\text{ if }|p|=k-1,\\
\hnt
    \;\;\, p \,\widehat{\ }\,a'b'&\text{ if }|p|<k-1.
  \end{cases}
  \]
  Otherwise, the
  play ends and player \PII\ loses.
\end{enumerate}
Player \PII\ wins a play if it lasts forever, i.e., if $m=n$ 
and she never loses in step 4 of a round.

By comparison, a round in the ordinary bijective
$(k-1)$-pebble game, which characterises $\equiv^{k-1}_\LC$
according to Theorem~\ref{theo:hel},  
can be described as follows.

\label{page:bij-game2}
\begin{enumerate}%
\item If $|p|=k-1$, player \PI\ selects a pair $ab\in p$. 
\\
If $|p|<k-1$, this step is omitted.
\item Player \PII\ selects a bijection between $[m]$ and
  $[n]$.
\item Player \PI\ chooses a pair $a'b'$ from this bijection.
\item The new position is
  \[
  p':=
  \begin{cases}
   \;\; (p\!\setminus\! ab)\,\widehat{\ }\,a'b'&\text{ if }|p|=k-1,\\
\hnt
    \;\;\, p \,\widehat{\ }\,a'b'&\text{ if }|p|<k-1,
  \end{cases}
  \]
  provided it is a local isomorphism.
  Otherwise, the
  play ends and player \PII\ loses.
\end{enumerate}

Note that the weak bijective $k$-pebble game requires more of the second player
than the bijective $(k-1)$-pebble game, because $p^+$ rather than just 
$p'$ is required to be a local isomorphism. On the other hand, it 
requires less than the  bijective $k$-pebble game: 
the bijective $k$-pebble game precisely 
requires the second player to choose the bijection without 
prior knowledge of the pair $ab$ that will be removed from the
position (cf.\ the alternative presentation of the bijective
$k$-pebble game on page~\pageref{page:alt-game}).
A strategy for player \PII\ in the weak version is good for the usual
version if it is fully symmetric or uniform w.r.t.\ the pebble pair 
that is going to be removed.

However, this is only relevant if $k\ge 3$. The
weak bijective $2$-pebble game and the bijective $2$-pebble game are
essentially the same.

\bD
$\str{A}$ and $\str{B}$ are $\LC^{<k}$-equivalent, 
$\str{A}\equiv_\LC^{<k} \str{B}$, if 
the second player has a winning strategy in the weak bijective
$k$-pebble game on $\str A$, $\str B$.

Furthermore, for tuples $\vec a$ and $\vec b$ of the same length
$\ell<k$ we let $\str{A},\vec a\equiv_\LC^{<k} \str{B},\vec b$ if 
the second player has a winning strategy in the weak bijective
$k$-pebble game on $\str A$, $\str B$ with initial position $\abar
\vec b$.
\eD 

\bO
\label{sandwichobs}
$
\str A\equiv_\LC^2\str B\;\Leftrightarrow\;\str A\equiv_\LC^{<2}\str B,
$
and for all $k \geq 3$:
\[
\str{A}\equiv_\LC^k \str{B}
\quad \Rightarrow \quad 
\str{A}\equiv_\LC^{<k} \str{B}
\quad \Rightarrow \quad 
\str{A}\equiv_\LC^{k-1} \str{B}.
\]
\eO

\bR
\label{edgemoverem}
The weak bijective $k$-pebble game is equivalent to a
bisimulation-like game with $k-1$ pebbles where in each round the
first player may slide a pebble along an edge of one of the graphs
and player \PII\ has to answer by sliding the corresponding pebble
along an edge of the other graph. In this version, the game coresponds
to the $(k-1)$-pebble sliding game introduced by Atserias and
Maneva~\cite{AtseriasManeva}. They prove that equivalence of two graphs with
respect to the $(k-1)$-pebble sliding game implies that $\ISO[k-1]$
has a solution. In view of the equivalence of the sliding game with
our weak bijective $k$-pebble game, this implies the backward
direction of Theorem~\ref{SAkminus1prop} below.
\eR

Let $\wp_{r}$ 
be the set of positions of size $k-1$ of the weak bijective $k$-pebble game over $\str{A},\str{B}$ in which 
the second player has a strategy to survive through ${r}$ rounds. 
Let $\sim^{r}$ stand for the 
equivalence relation induced by $\wp_{r}$, i.e.,  
the symmetric transitive 
closure of the relation that puts $\abar \sim^{r} \bbar$ if 
$p = \abar \bbar \in \wp_{r}$.
Note that 
$\sim^r$ is compatible with permutations
in the sense that, for instance, 
$\abar \sim^r \bbar$ iff 
$\pi(\abar) \sim^r \pi(\bbar)$ for any $\pi \in S_{k-1}$. 
We write $\pi(\abar)$ for the 
application of the permutation $\pi \in S_{k-1}$ to the components of 
$\abar = (a_1,\ldots, a_{k-1})$, which results in 
$\pi(\abar) = (a_{\pi(1)},\ldots, a_{\pi(k-1)})$.

For ${r} = 0$, the set $\wp_0$ consists of all local isomorphisms of size $k-1$.  
We characterise $\wp_{{r}+1}$ and $\sim^{{r}+1}$ in terms of $\wp_{r}$ by means of
back\&forth conditions for a single round:  
$\abar \sim^{{r}+1} \bbar$ ($p = \abar\bbar \in \wp_{{r}+1}$) iff 
position $\abar\bbar$ is good in the following sense:
for $1 \leq j \leq{k-1}$, the second player has a response 
that guarantees a target position in $\wp_{r}$ if the first player 
chooses index $j$. 

I.e., for each $1 \leq j\leq{k-1}$, the second player 
needs to have a bijection
$\rho_j$
between $[m]$ and $[n]$ such that for every $ab \in \rho_j$ 
\[
\atp(\abar a) = \atp(\bbar b)
\quad\mbox{ and }\quad 
\textstyle \abar\frac{a}{j} \bbar\frac{b}{j}  \in \wp_{r}.
\]

The first condition says that $p\,\widehat{\ }\,ab$ is a local
isomorphism, the second that the new position is good for ${r}$ further rounds.

Note that, since $\str A$ is a graph, 
the quantifier-free type $\atp(\abar a)$ is 
fully determined by $\atp(\abar)$ and 
the $\atp(a_i a)$ for $1 \leq i \leq {k-1}$. 
The condition that $\atp(\abar) = \atp(\bbar)$ is a pre-condition 
for the round to be played; 
the condition that $\atp(a_i a) = \atp(b_i b)$ 
for all $i \not= j$, on the other hand,
is part of the post-condition that 
$p\!\setminus\! a_jb_j \,\widehat{\ }\,ab$ 
is a local isomorphism.%

Let $(\alpha_i)_{i\in I}$ be an enumeration of 
the $\sim^{r}$-classes over $\str{A}$ and $\str{B}$.
Then the above conditions on membership
of $p = \abar\bbar$ in $\wp_{{r}+1}$ 
are equivalent to the following: 

\[
\barr{l}
\mbox{%
\btfll for each $1 \leq j \leq {k-1}$,
\\
for every $\sim^{r}$-class $\alpha$, and
\\
for every quantifier-free type $\eta(x,y)$:  
\etfll}
\\
\#_a^\str{A} 
\bigl(
\abar {\textstyle \frac{a}{j}} \in \alpha 
\wedge 
\atp(a_ja)  \!=\! \eta 
\bigr)
\;\;=\;\;
\#_b^\str{B} 
\bigl(
\bbar {\textstyle \frac{b}{j}} \in \alpha 
\wedge
\atp(b_jb)  \!=\! \eta 
\bigr).
\earr
\]
Note, towards the claimed equivalence, that  
these numerical equalities allow the second 
player to piece together a bijection that respects the partition 
of $[m]$ and $[n]$ 
according to different combinations of $\alpha$ and
$\eta$, which in turn guarantees that any pair $ab$ drawn from the
bijection respects this partition and hence 
leads to a position $\abar\frac{a}{j}\bbar\frac{b}{j} \in \wp_{r}$
as required.

Conversely, if one of these equalities were violated, then 
any bijection will have to have at least one pair that does not
respect the partition of $[m]$ and $[n]$ 
w.r.t.\ the $\alpha$ and $\eta$.
If the first player picks such a bad pair $ab$, then the second player 
loses during this round because $\atp(\abar a) \not= \atp(\bbar b)$, or 
because the resulting new position 
$\abar\frac{a}{j}\bbar\frac{b}{j}$ is not in $\wp_{r}$.

For later use we state the condition on full $\LC^{<k}$-equivalence,
corresponding to the stable limit of the above refinement step.
For $\abar \in [m]^{k-1}$ and $\bbar \in [n]^{k-1}$, 
$\str{A},\abar \equiv_\LC^{<k} \str{B},\bbar$ iff
\begin{equation}
\label{equivlimitcond}
\barr{l}
\mbox{%
\btfll for each $1 \leq j \leq k-1$,
\\
for all $\LC^{<k}$-equivalence classes $\alpha$, and
\\
for every quantifier-free type $\eta(x,y)$:  
\etfll}
\\
\#_a^\str{A} 
\bigl(
\abar {\textstyle \frac{a}{j}} \in \alpha 
\wedge 
\atp(a_ja)  \!=\! \eta 
\bigr)
\;\;=\;\;
\#_b^\str{B} 
\bigl(
\bbar {\textstyle \frac{b}{j}} \in \alpha 
\wedge
\atp(b_jb)  \!=\! \eta 
\bigr).
\earr
\end{equation}

\medskip

\bL
\label{solutiontoskequivlem}
For $k \geq 3$, 
if $(X_p)_{|p| < k}$ is a solution to $\ISO[k-1]$ then for all 
$\abar \in [n]^{k-1}$ and $\bbar \in [n]^{k-1}$:
\[
X_{\abar\bbar} > 0 \;
\quad \Longrightarrow \quad 
\str{A},\abar \equiv_\LC^{<k} \str{B},\bbar.
\]
\eL

\prf
By Corollary~\ref{translatetogoodcompcor} we may assume that 
the given solution $X = (X_p)_{|p| < k}$ itself is good in the sense of
Definition~\ref{goodtranslationdef}. It follows form $\CONT{1}$ that 
$\str{A}$ and $\str{B}$ have the same size, and we let $[n]$ be their
vertex set. By $\COMP{2}$, $X$ is supported by local isomorphisms,
cf.\ Lemma~\ref{locisolem}. 
That the solution is good means that its 
lifting to tuple co-ordinates $\check{X}$,
where $\check{X}_{\abar,\bbar} = X_p$ for 
$p = \{ (a_i,b_i) \colon i < k \}$, induces $\check{X}$-related
partitions of the vertex set of $\mathbb{I}^{\ssc (i)}(n,k-1)$ of the
form 
\[
[n]^{k-1} = \dot{\bigcup}_s D_s \; \mbox{ and } \;
[n]^{k-1} = \dot{\bigcup}_s D_s'
\]
such that
\bre
\item
each partition is stable w.r.t.\ 
$\mathbb{I}^{\ssc (i)}$ for $i < k$;
\item
$[n]^{k-1} = \dot{\bigcup}_s D_s$ is stable w.r.t.\
$A^{\ssc (i)}$ and  
$[n]^{k-1} = \dot{\bigcup}_s D_s'$ is stable w.r.t.\
$B^{\ssc (i)}$, for each $i < k$;
\item
$\check{X}_{D_sD_s'} > 0$ so that 
$\abar \mapsto \bbar$ is a local isomorphism 
whenever $\abar$ and $\bbar$ are from matching 
partition sets; 
\item
$\check{X}_{D_sD_t'} = 0$ for all $t \not= s$.
\ere

It suffices to
exhibit a strategy for
\PII\ that maintains the condition $X_p > 0$, 
or equivalently $\check{X}_{\abar,\bbar} >0$.
The argument is completely analogous to that in 
Lemma~\ref{goodsolfraclem}, for good solutions in the context of basic  
$1$-dimensional fractional isomorphism. 
Consider tuples $\abar$ and $\bbar$ of length $k-1$ such that 
$\check{X}_{\abar,\bbar} > 0$. It suffices to show that, for each $j < k$,
\PII\ can choose a bijection $\rho_j$ of $[n]$ (for a round played 
in component~$j$) such that $a_jb_j \in \rho_j$ 
and for all pairs $ab \in \rho_j$
\[
A_{a_ja} = 1 \;\Leftrightarrow\; B_{b_jb} = 1 \quad \mbox{ and } \quad
\check{X}_{\abar\frac{a}{j},\bbar\frac{b}{j}} > 0.
\]

The second condition precisely requires $\abar\frac{a}{j}$ and 
$\bbar\frac{b}{j}$ to be from matching partition sets while the 
first condition is equivalent to
\[
A^{\ssc (j)}_{\abar,\abar\frac{a}{j}} =1  \;\Leftrightarrow\;
B^{\ssc (j)}_{\bbar,\bbar\frac{b}{j}} =1. 
\]

The existence of the desired bijection therefore 
follows directly from the properties of $X$ as a good solution,
which implies that the $\check{X}$-related partitions are equivalent
stable partitions w.r.t.\ $A^{\ssc (j)}$ and $B^{\ssc (j)}$. Thus,
for every partition index $t$,
\[
\barr{r@{}l}
\#_{a}( \abar\frac{a}{j} \in D_t \wedge A_{a_ja}\!=\!1 ) 
& \; =
\#_{\abar'}( \abar' \in D_t \wedge A^{\ssc
  (j)}_{\abar,\abar'}\!=\!1) 
\\
& \; = 
\#_{\bbar'}( \bbar' \in D_t \wedge B^{\ssc
  (j)}_{\bbar,\bbar'}\!=\!1) =
\#_{b}( \bbar\frac{b}{j} \in D_t \wedge B_{b_jb}\!=\!1 )
\earr
\]
so that the desired bijection can be pieced together 
from corresponding bijections between $D_t$ and $D_t'$.
\eprf

The following should be contrasted with Theorem~\ref{SAkminus1/2prop},
which characterises $\LC^k$-equivalence in terms of $\ISO[k-1/2]$.
That the half-step discrepancies constitute a proper gap 
in shown in Section~\ref{gapsec} below.

\bT
\label{SAkminus1prop}
$\ISO[k-1]$ has a solution if, and only if, $\str A\equiv_\LC^{<k}\str B$.
\eT

\prf
The last lemma settles one implication. For the converse implication, 
it remains to argue that $\LC^{<k}$-equivalence suffices in place of
$\LC^k$-equivalence to provide a solution to the Sherali--Adams
relaxation of level $k-1$. 
We now let $\sktp(\abar)$ stand for the
$\LC^{<k}$-type, or the $\LC^{<k}$-equivalence class of 
the tuple $\abar$. 
We may look at just tuples of length
$k-1$, by trivial padding through repetition of the last component say.
Put
\begin{equation}
\barr{l}
X_\emptyset := 1
\\
X_p := \delta(\sktp(\abar),\sktp(\bbar)) 
\bigm/ \#_\xbar (\sktp(\xbar)  \!=\! 
\sktp(\abar))
\\
\nt\hfill
\mbox{ for $p = \abar\bbar$, $0 < |p| \leq k-1$.}
\earr
\end{equation}

We know that $\LC^{<k}$-equivalence refines $\LC^{k-1}$-equivalence, and
that an assignment to $X_p$ according to $\LC^{k-1}$-types of $(k-1)$-tuples  
was shown above to satisfy the continuity equations~$\CONT{\ell}$
of levels $\ell < k$, cf.\ 
Lemma~\ref{mixedlevellem}.
The same argument applies here to show that the refinement used here satisfies these equations.

For satisfaction of equations~$\COMP{\ell}$ of level 
$\ell < k$, however, we need to appeal to something less than the extension
property that boosts $\abar$ and $\bbar$ to $k$-tuples 
$\abar a \hat{a}$ and $\bbar \hat{b} b$ of the same $\LC^k$-type, as we
used in connection with~(\ref{evallhs}) above. 

Here as there, however, we only need to look at 
$p = \abar \bbar$ of size (up to) $k-2$ for which 
$\str{A},\abar \equiv_\LC^{<k} \str{B},\bbar$, because
all other instances of the equation are trivially true with 
$0$ on both sides. We fix such $p$.  

Now, for any combination of 
$\LC^{<k}$-types $\alpha$ and $\beta$ of $(k-1)$-tuples
and quantifier-free type $\eta$ of a pair, 
\begin{equation}
\label{matchnoeqn}
\barr{rl}
&\#_{aa'}^\str{A} 
\bigl( \sktp (\abar a)  \!=\! \alpha \wedge 
\sktp(\abar a')  \!=\! \beta \wedge  \atp(aa')  \!=\! \eta \bigr)
\\
\hnt
=& 
\#_{bb'}^\str{B} 
\bigl( \sktp (\bbar b)  \!=\! \alpha \wedge 
\sktp(\bbar b')  \!=\! \beta \wedge \atp(bb')  \!=\! \eta \bigr).
\earr
\end{equation}

This follows from an analysis of the $\LC^{<k}$-game from position
$p = \abar\bbar$ through two rounds, in which 
the first player first 
gets the last pebble placed on any one of the possible choices for
$a$, with responses $b$
as provided by the second player's bijection (in exactly the same
number); then the first player plays on that last component again, 
and replaces it with any one of the choices he may have for $a'$ 
and its match $b'$
according to the second player's bijection (again, the same number of
positive choices).

For given $a$ and $b$, 
let now $\alpha := \sktp(\abar a)$
and 
$\beta := \sktp(\bbar b)$.
Then 
\[
\barr{rl}
&
\sum_{a'} A_{aa'}  X_{p\,\widehat{\ }\, a'b} 
\\
=&
\vvhnt
\sum_{a'} A_{aa'} \, \delta(\sktp(\abar a'),\sktp(\bbar b)) 
\bigm/ \#_{\xbar x}(\sktp(\xbar x)  \!=\! \sktp(\bbar b))
\\
=&
\vvhnt
\frac{\displaystyle
\#_{a'}^\str{A} 
\bigl( \sktp(\abar a')  \!=\! \beta \wedge  \mathrm{edge}(aa') \bigr)}%
{\displaystyle 
\#_{\xbar a'}(\sktp(\xbar a')  \!=\! \beta)}
\\
=&
\vvhnt
\frac{\displaystyle
\#_{aa'}^\str{A} 
\bigl( \sktp (\abar a)  \!=\! \alpha \wedge 
\sktp(\abar a')  \!=\! \beta \wedge  \mathrm{edge}(aa') \bigr)}%
{\displaystyle 
\#_{\xbar a'}(\sktp(\xbar a')  \!=\! \beta) \cdot
\#_{a}(\sktp(\abar a)  \!=\! \alpha)}
\\
=&
\vvhnt
\frac{\displaystyle
\#_{aa'}^\str{A} 
\bigl( \sktp (\abar a)  \!=\! \alpha \wedge 
\sktp(\abar a')  \!=\! \beta \wedge  \mathrm{edge}(aa')  \bigr)}%
{\displaystyle 
\#_{a'}(\sktp(\abar a')  \!=\! \beta) 
\cdot
\#_{\xbar}(\sktp(\xbar)  \!=\! \sktp(\abar)) 
\cdot
\#_{a}(\sktp(\abar a)  \!=\! \alpha)}.
\earr
\]
We transform this term further, 
using~(\ref{matchnoeqn}), a renaming of dummy variables in counting
terms and the symmetry of the unique 
quantifier-free type $\eta$ determined by
$\mathrm{edge}(xy)$ in simple
undirected graphs. The goal is to show equality with the 
corresponding term obtained for 
$\sum_{b'}  X_{p\,\widehat{\ }\, ab'} B_{b'b}$.
Equality~(\ref{matchnoeqn}) is used in the first step 
of these transformations, starting from the term just obtained:
\[
\barr{rl}
&
\frac{\displaystyle
\#_{aa'}^\str{A} 
\bigl( \sktp (\abar a)  \!=\! \alpha \wedge 
\sktp(\abar a')  \!=\! \beta \wedge  \mathrm{edge}(aa')  \bigr)}%
{\displaystyle 
\#_{a'}(\sktp(\abar a')  \!=\! \beta) 
\cdot
\#_{\xbar}(\sktp(\xbar)  \!=\! \sktp(\abar)) 
\cdot
\#_{a}(\sktp(\abar a)  \!=\! \alpha)}
\\
\vvhnt
=&
\frac{\displaystyle
\#_{bb'}^\str{B} 
\bigl( \sktp (\bbar b)  \!=\! \alpha \wedge 
\sktp(\bbar b')  \!=\! \beta \wedge  \mathrm{edge}(bb')  \bigr)}%
{\displaystyle 
\#_{b'}(\sktp(\bbar b')  \!=\! \beta) 
\cdot
\#_{\xbar}(\sktp(\xbar)  \!=\! \sktp(\bbar)) 
\cdot
\#_{b}(\sktp(\bbar b)  \!=\! \alpha)}
\\
\vvhnt
=&
\frac{\displaystyle
\#_{bb'}^\str{B} 
\bigl( \sktp (\bbar b')  \!=\! \alpha \wedge 
\sktp(\bbar b)  \!=\! \beta \wedge  \mathrm{edge}(bb')  \bigr)}%
{\displaystyle 
\#_{b}(\sktp(\bbar b)  \!=\! \beta) 
\cdot
\#_{\xbar}(\sktp(\xbar)  \!=\! \sktp(\bbar)) 
\cdot
\#_{b'}(\sktp(\bbar b')  \!=\! \alpha)}
\\
\vvhnt
=&
\frac{\displaystyle
\#_{b'}^\str{B} 
\bigl( \sktp (\bbar b')  \!=\! \alpha  
\wedge  \mathrm{edge}(bb')  \bigr)}%
{\displaystyle 
\#_{\xbar}(\sktp(\xbar)  \!=\! \sktp(\bbar)) 
\cdot
\#_{b'}(\sktp(\bbar b')  \!=\! \alpha)}
\\
\vvhnt
=&
\sum_{b'} 
B_{b'b} \;
\delta(\tp^{<k}(\bbar b'),\tp^{<k}(\abar a)) \bigm/
\#_{\xbar x}(\sktp(\xbar x)  \!=\! \tp^{<k}(\abar a))
\\
\vvhnt
=&
\sum_{b'}
X_{p\,\widehat{\ }\,ab'} B_{b'b}.
\earr
\]
\eprf

\subsection{The gap}
\label{gapsec}

The following theorem shows that for every $k\ge 3$ the level of equivalence captured by the Sherali--Adams relaxation 
of fractional graph isomorphism of level $k-1$, i.e., $\equiv_\LC^{<k}$,
is strictly between $\LC^{k-1}$-equivalence and $\LC^k$-equivalence.

\bT\label{theo:gap}
Let $k\ge 3$.
\begin{enumerate}
\item There are graphs $\str{A}$ and $\str{B}$ such that $\str{A} \equiv_\LC^{k-1} \str{B}$
but $\str{A} \not\equiv_\LC^{<k} \str{B}$.
\item There are graphs $\str{A}$ and $\str{B}$ such that $\str{A} \equiv_\LC^{<k} \str{B}$
but $\str{A} \not\equiv_\LC^{k} \str{B}$.
\end{enumerate}
\eT

We will use the bijective pebble game and weak bijective pebble game to prove the
assertions of the lemma. To be able to deal with the two games more
uniformly, we slightly change the presentation of the 
bijective $k$-pebble game in the following way. 
We now regard as positions sets
of pairs of elements from $\CA,\CB$ of size
$|p|\le k-1$ (instead of size $k$ in the original version). A single round of
the game, starting in position $p$, is played as follows.\label{page:alt-game}
\begin{enumerate}%
\item Player \PII\ selects a bijection $f$ between the $\CA$ and $\CB$.
\item If $|p|=k-1$, player \PI\ selects a pair $ab$ from the current
  position $p$ to be removed.
\item Player \PI\ chooses a pair $a'b'$ from the bijection $f$ to be added.
\item If $p^+:=p\conc a'b'$ is a local isomorphism then the new
  position is 
  \[
  p':=
  \begin{cases}
   \;\; (p\!\setminus\! ab)\,\widehat{\ }\,a'b'&\text{ if }|p|=k-1,\\
\hnt
    \;\;\, p \,\widehat{\ }\,a'b'&\text{ if }|p|<k-1.
  \end{cases}
  \]
  Otherwise, the
  play ends and player \PII\ loses.
\end{enumerate}
Player \PII\ wins a play if it lasts forever. 

It is easy to see that
this new version of the bijective $k$-pebble game is equivalent to the
original version introduced on page~\pageref{bijective-game} (also see
page~\pageref{page:bij-game2}), in the
sense that for all positions $p$ of size at most $k-1$, player \PI\ (and
also player \PII) has a
winning strategy for the new game starting in position $p$ if and only
he has a winning strategy for the original game starting in position
$p$. Essentially, we have just shifted the game by ``half a round'':
instead of starting in a position $p$ of size $k$, removing a pair
from $p$ to obtain an intermediate position $p^-$ of size $k-1$,
then choosing a bijection, then adding a pair to return to a position
of size $k$ and check if this position is a local isomorphism, in the new version we start in a position $p$ of size
$k-1$, choose a bijection, add a pair to $p$ to obtain an intermediate
position $p^+$ of size $k$, check if this is a local isomorphism, and
then remove a pair to return to a position
of size $k-1$. 
When we say ``bijective $k$-pebble game'' in the following, we always
refer to the new version of the game.

This new way of looking at the $k$-pebble game highlights the difference
between the game and its weak version: in the weak bijective
$k$-pebble game, the first two steps are swapped, that is, player \PI\
first picks a pair $ab$ and then player \PII\ selects a
bijection. This is the only difference between the two games.

\medskip
The proof of Theorem~\ref{theo:gap} is based on a well-known 
construction due to Cai, F\"urer, and
Immerman~\cite{caifurimm92}. It will be convenient to discribe the
construction for \emph{multigraphs}, that is, graph that may have
several ``parallel'' edges between the same pair of vertices. We
denote the vertex set of a multigraph $\CG$ by $V(\CG)$ and the edge
set by $E(\CG)$. When we write $e=vw$, this merely indicates
that $e$ is an edge incident with $v$ and $w$;
there may be other edges 
linking this pair of vertices.
For every vertex $v\in V(\CG)$, by
$E(v)$ we denote the set of all edges incident with $v$.
For every multigraph $\CG$ with vertex set
$V(\CG)=\{v_1,\ldots,v_n\}$, we construct two structures $\CX(\CG)$
and $\hCX(\CG)$, which we call the \emph{CFI-companions} of $\CG$, as
follows.\footnote{The graph $\hCX(\CG)$ will not only depend on $\CG$,
  but also on the enumeration of its vertices, or rather, just on the
  choice of a \emph{first vertex.} We choose not to highlight this
  dependence notationally.  } 
Both $\CX(\CG)$ and $\widehat{\CX}(\CG)$
are coloured graphs whose vertices are coloured with $n+|E(G)|$ distinct
colours $C_v$ for $v\in V(\CG)$ and $C_e$ for $e\in E(\CG)$.
It will be convenient to call the vertices of the graphs $\CX(\CG)$
and $\hCX(\CG)$ \emph{nodes}, to distinguish them from the
\emph{vertices} of $\CG$. 

\medskip
For every $v\in V(\CG)$, the graph
  $\CX(\CG)$ has nodes $v^S$, where $S$ is a subset of $E(v)$ of
  even cardinality. For every edge $e$ of $\CG$, the graph
  $\CX(\CG)$ has nodes $e^0$, $e^1$. 
The node set of $\hCX(\CG)$ is the same, except that for the
  vertex $v_1$ we take nodes $v_i^S$ for the subsets of $E(v_1)$ of
  odd cardinality. 
  
Nodes of the form $v^S$ are called
\emph{vertex nodes} and nodes $e^i$ \emph{edge nodes}.

\medskip\noindent
--- The set of nodes $\CX(\CG)$ is
  \begin{align*}
  & \{v_i^S\mid i\in[n], S\subseteq E(v_i)\text{ such that
  }|S|\equiv0\text{ mod }2\}\\
  \cup\; &
  \{e^0,e^1\mid e\in E(\CG)\}
  \end{align*}

\medskip\noindent
--- The node set of $\hCX(\CG)$ is
  \begin{align*}
    &\{v_1^S\mid S\subseteq E(v_1)\text{ such that
    }|S|\equiv1\text{ mod }2\}\\
    \cup\;&\{v_i^S\mid i\in[n]\setminus\{1\}, S\subseteq E(v_i)\text{ such that
    }|S|\equiv0\text{ mod }2\}\\
    \cup\;&
    \{e^0,e^1\mid e\in E(\CG)\}
  \end{align*}

\medskip\noindent
--- The edges of both $\CX(\CG)$ and $\hCX(\CG)$ link vertex nodes
$v^S$ to edge nodes $e^i$ for edges $e \in E(v)$ according to 
\[
\{v^S,e^i\} \mbox{ is an edge if }
\left\{ 
\barr{l}
i = 1 \mbox{ and } e \in S 
\\
i = 0 \mbox{ and } e \not\in S 
\earr
\right.
\]
--- In both $\CX(\CG)$ and $\hCX(\CG)$, the nodes of the form
$v^S$ are coloured $C_v$, and the nodes of the form $e^i$ are coloured $C_e$.

\medskip
Local isomorphisms between the coloured graphs $\CX(\CG)$ and
$\hCX(\CG)$ are required to preserve the colours. Thus in the
bijective $k$-pebble game or the weak bijective $k$-pebble game on
$\CX(\CG)$ and $\hCX(\CG)$, player \PII\ has to preserve colours and
is thus forced to make sure that a node $v^S$ is always mapped to a node
$v^{S'}$ for some $S'$ and that a node $e^i$ is always mapped to a
node $e^{i'}$.\footnote{Colours can be eliminated by attaching gadgets
encoding the colours (such as paths of different lengths) to the
nodes.} 

Let $g$ be a local bijection or bijection from $\CX(\CG)$ to
$\hCX(\CG)$. We say that $g$ is \emph{colour-preserving} if for all
vertices $v$ of $\CG$ and all $S$ we have $g(v^S)=v^T$ for some $T$,
and for all edges $e$ of $\CG$ and all $i$ we have $g(e^i)=e^j$ for
some $j$. In the following, suppose that $g$ is colour-preserving.
Slightly abusing terminology, we say that a vertex $v$ of $\CG$ is in
the domain of $g$ if $v^S$ is in the domain of $g$ for some
$S$. Similarly, we say that an edge $e$ of $\CG$ is in the domain of
$G$ if $e^i$ is in the domain of $g$ for some $i$.  

We say that $g$ is
\emph{vertex-consistent} if for all vertices $v$ of $\CG$ and all
$S,S',T,T'$ such that both $v^S$ and $v^{S'}$ are in the domain of $g$
and $g(v^S)=v^T$, $g(v^{S'})=v^{T'}$ we have $S\triangle T=S'\triangle
T'$. If $g$ is vertex-consistent, then for all $v$ in the domain of
$g$ we let $g_v:=S\triangle T$ for some (and hence all) $S,T$ such
that $v^S$ is in the domain of $g$ and $g(v^S)=v^T$. Note that
$g_v$ determines $g(v^S)$ for all $S$, and that a position 
which fails to be vertex consistent would allow player
\PI\ to win in one round played with the help of any pebble
pair to spare.

Note that $g$,
being a colour-preserving (local) bijection, is automatically
\emph{edge-consistent} in the sense that for all edges $e$ of $\CG$,
if $g(e^i)=e^j$ and $g(e^{i'})=e^{j'}$ then $i+j\equiv i'+j'\text{ mod
}2$. In other words, $i=j$ iff $i'=j'$. This is clearly necessary for
a colour-preserving bijection, which must map the colour $C_e = \{ e^0,e^1\}$ 
in $\CX(\CG)$ to colour $C_e = \{ e^0,e^1\}$ in ${\hCX(\CG)}$.
We let $g_e\in\{0,1\}$ such that $g_e\equiv i+j\text{ mod }2$ for
some (and hence for all) $i,j$ with $g(e^i)=e^j$. If $g_e=1$, then we
say that $g$ \emph{flips} edge $e$. 

We say that a (local) bijection
$g$ is \emph{weakly consistent} if it is colour-preserving (and thus
edge consistent),
vertex-consistent, and for all vertices $v$ and edges $e$ in the
domain of $g$ we have $e\in g_v\IFF g_e=1$. Note that if $g$ is 
weakly consistent then it is a (local) isomorphism. 

We say that $g$ is
\emph{strongly consistent} if it is weakly consistent and, in
addition, for all edges $e=vw$ of $\CG$, if both $v$ and $w$ are in
the domain of $g$ then $e\in g_v\IFF e\in g_w$. 

We say that a bijection $h$ \emph{consistently extends} $g$, or is a
\emph{consistent extension} of $g$, if it satisfies the following
conditions \ref{gap:A}--\ref{gap:D}.
\begin{ealph}
\item\label{gap:A} $h$ is colour-preserving and vertex-consistent.
\item\label{gap:B} $h$ is an extension of $g$.
\end{ealph}
Note that if $h$ satisfies \ref{gap:A} and \ref{gap:B}, then $g$ must
be colour preserving and node consistent. Conversely, if $g$ is colour preserving and node consistent, then
condition \ref{gap:B} is equivalent to the condition that for all vertices $v$
and edges $e$ of $\CG$ in the domain of $g$ we have $g_v=h_v$ and
$g_e=h_e$.
\begin{ealph}[resume]
\item\label{gap:C} For all vertices $v$ of $\CG$, if some edge $e$ incident with
  $v$ is in the domain of $g$, then $e\in h_v\IFF
  g_e=1$.
\item\label{gap:D} For all edges $e$ of $\CG$, if some vertex $v$ incident with
  $e$ is in the domain of $g$, then $h_e=1\IFF
  e\in g_v$.  
\end{ealph}
Note that for $g$ to have a consistent extension, it must be strongly
consistent. However, even if $g$ is strongly consistent it does not
necessarily have a consistent extension, because $g$ is only a local
bijection, whereas $h$ is a total bijection.

Note that player \PI\ can directly win the game from any position that fails to
be strongly consistent, or whenever \PII\ proposes a bijection 
that fails to be a consistent extension of the current position.
We may therefore, without loss of generality, restrict \PII\ to 
strongly consistent positions and consistent extensions.

\begin{proof}[Proof of Theorem~\ref{theo:gap}~(1)]
Let $\CK$ be the complete graph on $k$ vertices. We fix some
enumeration $v_1,\ldots,v_{k}$ of the vertex set of $\CK$. 
For all $i\neq j$, we let $e_{ij}$ be the edge between $v_i$ and $v_j$. 

We let $\CA=\CX(\CK)$ and $\CB=\hCX(\CK)$ and show that 
$\CX(\CK)  \equiv_\LC^{k-1} \hCX(\CK)$, but 
$\CX(\CK) \not\equiv_\LC^{<k} \hCX(\CK)$.

To prove that $\CX(\CK) \not\equiv_\LC^{<k} \hCX(\CK)$,
we give a winning strategy
for player $\PI$ in the weak bijective $k$-pebble game
on $\CX(\CK),\hCX(\CK)$. In the first $k-1$ rounds of the game, player 
$\PI$ can reach a position $p$ with domain 
$v_2^\emptyset,\ldots,v_{k}^\emptyset$
and $p(v_i^\emptyset)=v_i^{S_i}$ for some sets $S_i$. That is,
\[
p=\{v_2^\emptyset v_2^{S_2},\ldots,v_k^\emptyset v_k^{S_k}\}.
\] 
Note that $p_{v_i} = \emptyset \triangle S_i = S_i$ for $2\le i\le
k$. For all $i$ and all edges $e$ of $\CK$ we let $\epsilon(i,e)=1$ if $e\in S_i$
and $\epsilon(i,e)=0$ otherwise. In particular, $\epsilon(i,e)=0$ if
$v_i$ is not incident with 
the edge $e$. 
Without loss of generality, 
$p$ is strongly consistent, and thus for $2\le i<j\le k$ we have
$\epsilon(i,e_{ij})=\epsilon(j,e_{ij})$. Moreover, by the construction of
$\hCX(\CK_{k+1})$, all the sets $S_i$ have even
cardinality. Thus
\[
0\equiv \sum_{i=2}^{k}|S_i|\equiv \sum_{i=2}^{k}
\sum_{e}
\epsilon(i,e)\equiv
\sum_{i=2}^{k}\epsilon(i,e_{1i})\text{ mod }2.
\]
Let $T$ be the set of all edges $e_{1i}$ with
$e_{1i}\in S_i$. Then $|T|$ is even. To simplify the notation, we let
$\epsilon_i=\epsilon(i,e_{1i})$ in the following.

In the next round of the game, player \PI\ starts by selecting the
pair $v_2^\emptyset v_2^{S_2}$. Let $f$ be the bijection selected by
player \PII. Without loss of generality, $f$ is a consistent extension of $p$. Thus by
\ref{gap:D}, $f$ flips edge $e_{12}$ if and only if $e_{12}\in p_{v_2}=S_2$.  That is,
$f(e_{12}^0)=e_{12}^{\epsilon_2}$. Player \PI\ selects the pair
$e_{12}^0e_{12}^{\epsilon_2}$, and the new position is
\[
p'=\{e_{12}^0e_{12}^{\epsilon_2},v_3^\emptyset
v_3^{S_3},\ldots,v_{k}^\emptyset v_{k}^{S_{k}}\}.
\]
Note that $p'_{e_{12}} = \epsilon_2$
In the next round of the game, player \PI\ starts by selecting the
pair $e_{12}^0e_{12}^{\epsilon_2}$. Let $f'$ be the bijection selected by
player \PII. Without loss of generality, $f'$ is a consistent extension of $p'$. Let $T'\subseteq E(v_1)$ such that
$f'(v_1^\emptyset)=v_1^{T'}$
and $f'_{v_1} = T'$. 
By \ref{gap:C}, 
\begin{equation}\label{eq:gap1}
e_{12}\in f_{v_1}' =T'\IFF \epsilon_2=1\IFF e_{12}\in T.
\end{equation}
 Player
\PI\ selects the pair $v_1^\emptyset v_1^{T'}$, and the new position
is
\[
p''=\{v_1^\emptyset v_1^{T'} ,v_3^\emptyset
v_3^{S_3},\ldots,v_{k}^\emptyset v_{k}^{S_{k}}\}.
\]

Now \PI\ wins as $p''$ is not strongly consistent, 
which can be shown indirectly as follows.
Suppose for contradiction that $p''$ were strongly consistent. 
Then, for $3\le i\le k$, we would have
$e_{1i}\in T'\IFF e_{1i}\in S_i\IFF e_{1i}\in T$. Combined with
\eqref{eq:gap1}, this implies $T'=T$. However, $|T'|$ is odd by the
construction of $\hCX(\CK)$, whereas $|T|$ is even. 

\medskip
To prove that $\CX(\CK)  \equiv_\LC^{k-1} \hCX(\CK)$,
we give a winning strategy
for player $\PII$ in the variant of the 
bijective $(k-1)$-pebble game on $\CX(\CK), \hCX(\CK)$.
It suffices to show that in every strongly consistent
position of the game,
\PII\ can maintain strong consistency. 
So let $p$ be a strongly consistent position of size
$|p|\le k-2$. To define a consistent
extension $f$ of $p$, it suffices to specify
$f_v$ for all vertices $v$ and $f_e$ for
all edges $e$ of $\CK$.
\begin{enumerate}%
\item We start by letting $f_v=p_v$ for all vertices $v$ in the domain
  of $p$ and $f_e=p_e$ for all edges $e$ in the domain of $p$.
\item For all edges $e$ of $\CK$ that 
are incident with at least one vertex 
$v$ in the domain of $p$, we let $f_e=1$ if $e\in p_v$ and
  $f_e=0$ otherwise. We can do this consistently because $p$ is strongly consistent.
\item For all remaining vertices $v$ of $\CK$, we note that
  there is at least one edge $e=vw$ incident with $v$ such that
  neither $w$ nor $e$ are in the domain of $p$ (and hence $f_e$ has
  not been defined yet). We choose a subset $S\subseteq E(v)$
such that for all edges $e'=vw'\in E(v)$ such that either $e'$ or $w'$
  is in the domain of $p$ we have $e'\in S\IFF f_{e'}=1$. Moreover, we choose such an
  $S$ such that its cardinality is odd if $v=v_1$ and its cardinality
  is even otherwise. We have the freedom to choose the parity in this manner
  because we can add $e$ to $S$ without affecting the property  $e'\in
  S\IFF f_{e'}=1$ for all edges $e'=vw'\in E(v)$ such that either $e'$ or $w'$
  is in the domain of $p$.

  We let $f_v=S$.
\item Finally, for all edges $e$ for which $f_e$ has not yet been
  defined we let $f_e=0$.
\end{enumerate}
In the next round of the game, \PII\ selects
$f$. Suppose that \PI\ selects $ab\in p$ to be removed and $a'b'\in f$
to be added. It is easy to prove
that $p^+=p\conc a'b'$ is a local isomorphism and that the new position $p'=(p\setminus ab)\conc a'b'$
is strongly consistent. The proof is by case distinction along the
cases of the definition of the bijection $f$. It may seem that edges
$e$ for which $f_e$ is defined in (iv) will cause problems, because
for these edges the definition of $f_e$ does not depend on the current
position $p$ at all. However, in the new position $p'$ and even in the
intermediate position $p^+$ such edges will not
be incident with any vertex in the domain, so they will not affect
strong consistency.
\end{proof}

\begin{figure}
  \centering
  \begin{tikzpicture}
    [
    vertex/.style={draw,fill,circle,minimum height=1.5mm,inner
      sep=0mm},
    ]
    \small
    
    \begin{scope}[yshift=4.6cm,x=1.8cm,thick]
    \node[vertex,minimum height=2mm,red] (xv) at (-1,0) {};
    \node[vertex,minimum height=2mm,blue] (xw) at (1,0) {};
    \draw (-2,0) -- (xv) (xw) -- (2,0);
    \draw (xv)  edge[bend left] (xw) edge[bend right] (xw);
    \path (-1.9,0.2) node {$f_{e,v}$} (-1,0.3) node {$x_{e,v}$} (0,0.8) node
    {$f_{e,1}$} (0,-0.8) node  {$f_{e,2}$} (1.9,0.2) node {$f_{e,w}$} (1,0.3) node
    {$x_{e,w}$};

    \path (0,-1.5) node {\normalsize(a)};
    \end{scope}

    \begin{scope}[x=1.8cm,y=1.5cm]
      \node[vertex,red] (xv1) at (-1,1.2) {};
      \node[vertex,red] (xv2) at (-1,0.4) {};
      \node[vertex,red] (xv3) at (-1,-0.4) {};
      \node[vertex,red] (xv4) at (-1,-1.2) {};
      \node[vertex,blue] (xw1) at (1,1.2) {};
      \node[vertex,blue] (xw2) at (1,0.4) {};
      \node[vertex,blue] (xw3) at (1,-0.4) {};
      \node[vertex,blue] (xw4) at (1,-1.2) {};

      \node[vertex] (fv0) at (-2,0.4) {};
      \node[vertex] (fv1) at (-2,-0.4) {};
      \node[vertex] (f10) at (0,1.2) {};
      \node[vertex] (f11) at (0,0.4) {};
      \node[vertex] (f20) at (0,-0.4) {};
      \node[vertex] (f21) at (0,-1.2) {};
      \node[vertex] (fw0) at (2,0.4) {};
      \node[vertex] (fw1) at (2,-0.4) {};

      \draw (xv1) edge (fv0) edge (f10) edge (f20)
                (xv2) edge (fv0) edge (f11) edge (f21)
                (xv3) edge (fv1) edge (f10) edge (f21)
                (xv4) edge (fv1) edge (f11) edge (f20)
                (xw1) edge (fw0) edge (f10) edge (f20)
                (xw2) edge (fw0) edge (f11) edge (f21)
                (xw3) edge (fw1) edge (f10) edge (f21)
                (xw4) edge (fw1) edge (f11) edge (f20)
                ;

       \path (-2.1,0.6) node {$f_{e,v}^0$} (-2.1,-0.6) node {$f_{e,v}^1$}
                  (-1,1.4) node {$x_{e,v}^\emptyset$} (-1.1,0.6) node
                  {$x_{e,v}^{\{f_{e,1},f_{e,2}\}}$} 
                  (-1.2,-0.2) node {$x_{e,v}^{\{f_{e,v},f_{e,2}\}}$}  (-1,-1.4) node
                  {$x_{e,v}^{\{f_{e,v},f_{e,1}\}}$} 
                  (0,1.4) node {$f_{e,1}^0$} (0,0.6) node {$f_{e,1}^1$}
                  (0,-1.45) node {$f_{e,2}^1$} (0,-0.65) node {$f_{e,2}^0$}
                  (2.1,0.6) node {$f_{e,w}^0$} (2.1,-0.65) node {$f_{e,w}^1$}
                  (1,1.4) node {$x_{e,w}^\emptyset$} (1.1,0.6) node
                  {$x_{e,w}^{\{f_{e,1},f_{e,2}\}}$} 
                  (1.2,-0.65) node {$x_{e,w}^{\{f_{e,w},f_{e,2}\}}$}  (1,-1.4) node
                  {$x_{e,w}^{\{f_{e,w},f_{e,1}\}}$} 
       ;
    \path (0,-2) node {\normalsize(b)};

    \end{scope}
  \end{tikzpicture}
  \caption{Threshold gadget $\CT_e$}
  \label{fig:threshold}
\end{figure}

The proof of Theorem~\ref{theo:gap}~(2) requires more
preparation. Essentially, we will also play games on the
CFI-companions of the complete graph $\CK$, but we will insert
certain ``threshold gadgets'' on the edges that require at least two
pebbles to transport the information of whether an edge is flipped or
not from one end of the edge to the other. The gadget is displayed in
Figure~\ref{fig:threshold}(b); the name $\CT_e$ of the gadget and the
names of  the vertices indicate that
the gadget is intended to be inserted for an edge $e=vw$. Observe that
the gadget is a CFI-companion of the multigraph, or rather: fragment of a
multigraph, displayed in Figure~\ref{fig:threshold}(a).  As for all CFI-companions, the nodes $x_{e,v}^S$ are
coloured by a fresh colour, and so are the nodes $x_{e,w}^S$ as well
as the edge-nodes $f_{e,v}^i$, $f_{e,1,}^i$, $f_{e,2}^i$, $f_{e,w}^i$. 
The idea is to
replace each edge $e=vw$ of some graph by the multigraph 
from Figure~\ref{fig:threshold}(a) by connecting
$f_{e,v}$ to $v$ and $f_{e,w}$ to $w$ and then go to the CFI-companion of the
new graph.
The crucial property of the gadget is that player \PI\ needs two
pairs of pebbles to transport information from one end of the gadget to the
other end. To make this precise, we introduce this
terminology: in the (weak) bijective $k$-pebble game on structures $\CA$
and $\CB$, we say that player \PI\ can \emph{reach} position $p'$ from
position $p$ if he has a strategy for the game starting in position
$p$ such that in each play that he plays according to this strategy,
either he wins or a position $p''\supseteq p'$ occurs. 
If \PI\ cannot reach position
$p'$ from position $p$, we say that \PII\ can \emph{avoid} position
$p'$.
As usual, a position $p$ of the game is a \emph{winning position} for player
\PI\ or \PII\ if the respective player has a winning strategy for the
game starting in that position.

\bL\label{lem:threshold}
  \begin{enumerate}
  \item 
For the weak bijective
    $3$-pebble game on $\CT_e,\CT_e$:
\\
any position $\{f_{e,v}^if_{e,v}^j,f_{e,w}^{i'}f_{e,w}^{j'}\}$
such that $i+j\not\equiv i'+j'\text{
      mod }2$  
    is a winning position for player \PI.
  \item For the bijective $2$-pebble game on $\CT_e,\CT_e$: 
\\
let $i,j\in\{0,1\}$ and let $p=\{ab\}$ be a 
vertex-consistent position such that $x_{e,v}$ and 
$f_{e,v}$ are not in the domain of $p$ (that is, $a,b$ are either of the
    form $f_{e,w}^i$ or $x_{e,w}^S$ or $f_{e,j}^i$); then 
player \PII\ can avoid position $\{f_{e,v}^{i}f_{e,v}^{j}\}$
from position $p$.
  \end{enumerate}
\eL

Note that assertion~(1) implies that \PI\ can
reach position $\{f_{e,w}^{i}f_{e,w}^{j}\}$ from position
$\{f_{e,v}^if_{e,v}^j\}$ in the weak bijective $3$-pebble game
    on $\CT_e,\CT_e$. 
This is because, if \PI\ selects $f_{e,w}^{i}$ in the first round of
the game starting in position $\{f_{e,v}^if_{e,v}^j\}$, then \PII\ has to
answer with $f_{e,w}^{j}$; otherwise the position is
$\{f_{e,v}^if_{e,v}^j,f_{e,w}^{i}f_{e,w}^{j'}\}$ such that
$i+j\not\equiv i+j'\text{ mod }2$, and \PII\ loses by~(1).

\begin{proof}[Proof of Lemma~\ref{lem:threshold}]
  To prove~(1), we give a winning strategy for \PI\ for the game
  starting in position $p = \{f_{e,v}^if_{e,v}^j,f_{e,w}^{i'}f_{e,w}^{j'}\}$. In the first round,
  \PI\ first 
  selects the pair $f_{e,v}^if_{e,v}^j$. Suppose that
  \PII\ answers by selecting the bijection $g$. Without loss of
  generality we may assume that $g$ is a consistent extension of $p$. 
  Let $T\subseteq
  E(x_{e,v})=\{f_{e,v},f_{e,1},f_{e,2}\}$ such that $g(x_{e,v}^\emptyset)=x_{e,v}^T$. Then
  $f_{e,v}\in T\IFF i+j\equiv1\text{ mod }2$, because $g$ is a consistent
  extension of $p$. Player \PI\ selects the pair $x_{e,v}^\emptyset
  x_{e,v}^T$, and the new position is
  \[
  p'=\{x_{e,v}^\emptyset
  x_{e,v}^T, f_{e,w}^{i'}f_{e,w}^{j'}\}.
  \]
  In the next round, \PI\ selects the pair $f_{e,w}^{i'}f_{e,w}^{j'}$ in the first step. Suppose that
  \PII\ answers by selecting the bijection $g'$. Let $T'\subseteq
  E(x_{e,w})$ be such that $g(x_{e,w}^\emptyset)=x_{e,w}^{T'}$. Then
  $f_{e,w}\in T'\IFF i'+j'\equiv1\text{ mod }2$, because $g'$ is a consistent
  extension of $p$. Player \PI\ selects the pair $x_{e,w}^\emptyset
  x_{e,w}^{T'}$, and the new position is
  \[
  p'=\{x_{e,v}^\emptyset
  x_{e,v}^T, x_{e,w}^\emptyset
  x_{e,w}^{T'}\}.
  \]
Recall that $i+j\not\equiv i'+j'\text{
      mod }2$. By symmetry, we may
    assume that $i+j\equiv0\text{ mod }2$ and $i'+j'\equiv1\text{ mod
    }2$. Then $f_{e,v}\not\in T$, and as $T$ is even, it follows that
    either $T=\emptyset$ or $T=\{f_{e,1},f_{e,2}\}$. Similarly, either
    $T'=\{f_{e,w},f_{e,1}\}$ or $T'=\{f_{e,w},f_{e,2}\}$. Thus there is a
    $j\in\{1,2\}$ such that $f_{e,j}\in T\triangle T'$. Hence the position
    $p'$ is not strongly consistent, and therefore Player \PI\ wins
    the game.

    To prove (2), we give a strategy for \PII\ that avoids position $\{f_{e,v}^{i}f_{e,v}^{j}\}$
    from position $p$ in the bijective $2$-pebble game on
    $\CT_e,\CT_e$. A position $p'$ is \emph{good} if it is vertex-consistent
    and satisfies the following two conditions.
    \begin{ealph}[resume]
      \item\label{gap:E}
        If $x_{e,v}$ is in the domain of $p'$, then $f_{e,v}\in p_{x_{e,v}}\IFF
        i+j\equiv0\text{ mod }2$.
      \item\label{gap:F}
        If $f_{e,v}$ is in the domain of $p'$, then $p_{f_{e,v}}\not\equiv
        i+j\text{ mod }2$.
    \end{ealph}
    Recall that positions in the $2$-pebble game have size $1$ and
    thus are strongly consistent if and only if they are
    vertex-consistent. 
    Note that the initial position $p$ is good.
    It is easy to prove (by an extensive case analysis) that in any
    good position, \PII\ can play the next round of the game in such a
    way that the position after the round is good again.
\end{proof}

\begin{proof}[Proof of Theorem~\ref{theo:gap}~(2)]
  Let $\CH$ be the multigraph obtained from the complete $k$-vertex graph $\CK$ by
  replacing every edge $e=vw$ by the multigraph displayed in
  Figure~\ref{fig:threshold}(a), where edge $f_{e,v}$ is connected to $v$
  and edge $f_{e,w}$ is connected to $w$. Note that
  \begin{align*}
  V(\CH)&=V(\CK)\cup\{x_{e,v},x_{e,w}\mid e=vw\in E(\CK)\},\\
  E(\CH)&=\{f_{e,v},f_{e,w},f_{e,1},f_{e,2}\mid e = vw \in E(\CK)\}.
  \end{align*}
To define $\hCX(\CH)$, we fix some enumeration of $V(\CH)$ where the
vertex $v_1\in V(\CK)$ comes first. (Again we assume
that $V(\CK)=\{v_1,\ldots,v_k\}$.)

We let $\CA=\CX(\CH)$ and $\CB=\hCX(\CH)$. Note that $\CA$ 
is obtained from $\CX(\CK)$ by replacing, 
for every edge $e\in E(\CK)$, the
vertices $e^0,e^1$ by a threshold gadget $\CT_e$, and $\CB$ is
similarly obtained from $\hX(\CK)$.

We first prove that player \PI\ has a winning strategy for the
bijective $k$-pebble game on $\CA,\CB$. 

Let us call an edge $e=vw\in E(\CK)$ \emph{inconsistent} in a position
$p$ of the game if both $v$ and $w$ are in the domain of $p$
and $f_{e,v}\in p_v\not\Leftrightarrow f_{e,w}\in p_w$. Note that if
some edge $e$ is
inconsistent in a position $p$ then \PI\ can reach position
$\{f_{e,v}^0f_{e,v}^1,f_{e,w}^0f_{e,w}^0\}$ (if $f_{e,v}\in p_v$ and $f_{e,w}\not\in
p_w$) or position $\{f_{e,v}^0f_{e,v}^0,f_{e,w}^0f_{e,w}^1\}$ (if $f_{e,v}\not\in p_v$ and $f_{e,w}\in
p_w$). Then by Lemma~\ref{lem:threshold}~(1), \PI\ wins the game. Thus
player \PII\ needs to avoid inconsistent edges.

The winning strategy for player \PI\ in the
bijective $k$-pebble game on $\CA,\CB$  is as follows.
In the first $k-1$ rounds of the game he
reaches a position $p$ with domain $v_2^\emptyset,\ldots,v_{k}^\emptyset$
and $p(v_i^\emptyset)=v_i^{S_i}$ for some sets $S_i$. That is,
\[
p=\{v_2^\emptyset v_2^{S_2},\ldots,v_k^\emptyset v_k^{S_k}\}.
\] 
For all $i\ge 2$ and all edges $e=v_iw\in E(\CK)$ 
we let $\epsilon(i,e)=1$ if $f_{e,v_i}\in S_i$
and $\epsilon(i,e)=0$ otherwise. For edges $e$ that 
are not incident with $v_i$ 
we let $\epsilon(i,e)=0$. If there is some edge
$e_{ij}$ of $\CK$ such that
$\epsilon(i,e_{ij})\neq\epsilon(j,e_{ij})$, then $e_{ij}$ is
inconsistent, and player \PI\ wins the game.
Moreover, all the sets $S_i$ have even
cardinality. Thus
\[
0 \equiv \sum_{i=2}^{k}|S_i| \equiv \sum_{i=2}^{k}
\sum_{e}
\epsilon(i,e) \equiv \sum_{i=2}^{k}\epsilon(i,e_{1i})
\mod2.
\]
Thus the set of all edges $e_{1i}$ with
$f_{e_{1i},v_i}\in S_i$ is even. 

Let $g$ be the bijection selected by player \PII\ in the next round of
the game, and let $T\subseteq E(v_1)$ such that
$g(v_1^\emptyset)=v_1^T$. 
Then $T$ is odd, and thus there is some
$i\in\{2,\ldots,k\}$ such that for the edge $e:=e_{1i}=v_1v_i$ we have $f_{e,v_1}\in T\Leftrightarrow
f_{e,v_i}\not\in S_i$. Player \PI\ selects some pair $v_j^\emptyset
v_j^{S_j}$ for $j\neq i$ to remove and the pair $v_1^\emptyset
v_1^{T}$ to add. The new position $p'$ contains the pairs $v_1^\emptyset
v_1^{T}$ and $v_i^\emptyset
v_i^{S_i}$. Thus the edge $e_{1i}$ is inconsistent in this position,
and player \PI\ wins.

\bigskip
Let us now prove that player \PII\ has a winning strategy for the weak
bijective $k$-pebble game on $\CA,\CB$. For a vertex-consistent
position $p$ of the game and an edge $e=vw\in E(\CK)$, we let
$p\restriction\CT_e$ be the set $\{ab\in p\mid a,b\in V(\CT_e)\}$
together with the following pairs:
\begin{itemize}
\item $f_{e,v}^0f_{e,v}^0$ if $v$ is in the domain of $p$ and $f_v\not\in
  p_v$,
\item $f_{e,v}^0f_{e,v}^1$ if $v$ is in the domain of $p$ and $f_v\in
  p_v$,
\item $f_{e,w}^0f_{e,w}^0$ if $w$ is in the domain of $p$ and $f_w\not\in
  p_w$,
\item $f_{e,w}^0f_{e,w}^1$ if $w$ is in the domain of $p$ and $f_w\in
  p_w$.
\end{itemize}
We view $p\restriction\CT_e$ as a position of the weak bijective
$k$-pebble game on $\CT_e,\CT_e$. 

We say that edge $e=vw$ is \emph{flipped} in position $p$ if
$p\restriction\CT_e$ is nonempty and \PI\ can reach $\{f_{e,v}^0f_{e,v}^1\}$
from $p\restriction\CT_e$ in the weak bijective $k$-pebble game on
$\CT_e,\CT_e$. The edge $e$ is \emph{straight} in position $p$ if
$p\restriction\CT_e$ is nonempty and \PI\ can reach $\{f_{e,v}^0f_{e,v}^0\}$
from $p\restriction\CT_e$ in the weak bijective $k$-pebble game on
$\CT_e,\CT_e$. Observe that, as discussed in connection with 
Lemma~\ref{lem:threshold}~(1), 
if $e$ is flipped, then \PI\ can reach
$\{f_{e,w}^0f_{e,w}^1\}$ as well, and similarly, if $e$ is straight, then \PI\ can reach
$\{f_{e,w}^0f_{e,w}^0\}$. Moreover, if $p\restriction\CT_e$ is nonempty then
$e$ is either straight or flipped.

We say that a vertex $v\in V(\CK)$ is \emph{trapped} in a
vertex-consistent position $p$ if it satisfies the following
conditions:
\begin{ealph}[start=7]
\item\label{gap:G} for all edges $e=vw\in E(\CK)$
  incident with $v$, the position $p\restriction\CT_e$ is nonempty.
\item\label{gap:H} the set $S$ of all $f_{e,v}$ such that $e$ is flipped in
  position $p$ has the wrong parity, that is, $|S|$ is even if $v=v_1$
  and $|S|$ is odd otherwise.
\item\label{gap:I} there is an edge $e=vw\in E(\CK)$ such that either $f_{e,v}$ or
  $x_{e,v}$ is in the domain of $p$.
\end{ealph}

Observe that if some vertex is trapped in a
position $p$ of the weak bijective $k$-pebble game, then this position
is a winning position for player \PI. (We will not use this
observation and hence we omit a proof.)
Let us call position $p$ \emph{good} if it satisfies the
following conditions:
\begin{ealph}[resume]
  \item\label{a:good1} $p$ is strongly consistent;
  \item\label{a:good2} for all $e\in E(\CK)$, player \PII\ has a winning strategy for
    the weak bijective $k$-pebble game on $\CT_e,\CT_e$ starting in
    position $p\restriction\CT_e$;
  \item\label{a:good2} no vertex $v\in\CK$ is trapped in position $p$.
\end{ealph}
Note that, by Lemma~\ref{lem:threshold}~(1), if $p$ is good, then for
every edge $e=vw\in E(\CK)$, if both $v$ and $w$ are in the domain of
$p$ then $f_{e,v}\in p_v\IFF f_{e,w}\in p_w$.

We claim that in any good position, \PII\ can play the next round of
the game in such a way that the position after the round is good
again.
We have to define a bijection $g$ for player
\PII, which is done in \ref{gap:M}--\ref{gap:R} below. Since the node
sets of $\CA$ and $\CB$ are
the disjoint unions of the sets of vertex nodes $v^S$ 
for $v \in V(\CK)$ and the node sets of the $T_{e}$ for $e \in
E(\CK)$, it
suffices to define $g_v$ for every $v\in V(\CK)$, which determines
$g(v^S)$ for each $S$,  
and the restriction
of $g$ to $\CT_e$ for every $e\in E(\CK)$.

Let $p$ be the given good position. If $|p|<k-1$, let
$p^-=p$, and if $|p|=k-1$, let $ab\in p$ be the pair selected by
player \PI\ in the first step of the next round, and let
$p^-=p\setminus ab$.

\begin{ealph}[resume]
\item\label{gap:M} For every $v\in V(\CK)$ such that $v$ is in the domain of
  $p$ we let $g_v=p_v$.
\end{ealph}
An edge $e=vw\in E(\CK)$ \emph{requires attention at $v$} if $v$ is
not in the domain of $p$ and for all edges 
$e \not= e'$ that are incident with $v$, the restriction
$p^-\restriction\CT_{e'}$ is nonempty. 
Note that there is at most one pair $(e,v)$ such that $e$ requires attention 
at $v$
and that
$|p\restriction\CT_e|\le 1$ if $e$ requires attention. This follows
from the fact that $|p^-|\le k-2$ and the degree of all vertices of
$\CK$ is $k-1$.

Suppose that $e=vw$ requires attention at $v$.  Let $S'$ be the set of
all $f_{e',v}$ such that $e'$ is flipped in $p^-$. If $v\neq v_1$ and
$|S'|$ is odd or $v=v_1$ and $|S'|$ is even, we let
$S=S'\cup\{f_{e,v}\}$; otherwise, we let $S=S'$. The set $S$
determines whether $e$ must be flipped or not. Without loss of
generality, let us assume that $f_{e,v}\in S$, that is, $e$ must be
flipped.  If neither $f_{e,v}$ nor $x_{e,v}$ are in the domain of $p$,
then by Lemma~\ref{lem:threshold}~(2) player \PII\ can avoid position $\{f_{e,v}^0f_{e,v}^0\}$ in the
bijective $2$-pebble game on $\CT_e,\CT_e$ starting in position
$p\restriction\CT_e$. If $f_{e,v}$ is in the domain of $p$ then
$p_{f_v}=1\IFF f_{e,v}\in S$ and if $x_{e,v}$ is in the domain of $p$
then $f_{e,v}\in p_{x_{e,v}}\IFF f_{e,v}\in S$;
otherwise $v$ would be
trapped in position $p$, which contradicts $p$ being a good
position. In particular, this implies that in both cases player \PII\
can avoid position $\{f_{e,v}^0f_{e,v}^0\}$ in the bijective
$2$-pebble game on $\CT_e,\CT_e$ starting in position
$p\restriction\CT_e$. This enables us to define the restriction of the
bijection $g$ to $V(\CT_e)$.
\begin{ealph}[resume]
\item\label{gap:N} If $e=vw\in E(\CK)$ requires attention at $v$, then we
  determine the set $S$ as above. If $f_{e,v}\in S$, we choose the
  restriction of $g$ to $V(\CT_e)$ according to a strategy for player
  \PII\ in the bijective
$2$-pebble game on $\CT_e,\CT_e$ starting in position
$p\restriction\CT_e$ that avoids $\{f_{e,v}^0f_{e,v}^0\}$. If $f_{e,v}\not\in S$, we choose the
  restriction of $g$ to $V(\CT_e)$ according to a strategy for player
  \PII\ in the bijective
$2$-pebble game on $\CT_e,\CT_e$ starting in position
$p\restriction\CT_e$ that avoids $\{f_{e,v}^0f_{e,v}^1\}$. 
\item\label{gap:O} For every $e=vw\in E(\CK)$ such that $e$ does not require
  attention and $p\restriction \CT_e$
  is nonempty, we choose the restriction of $g$ to $V(\CT_e)$
  according to a winning strategy for \PII\ in the weak bijective
  $k$-pebble game $\CT_e,\CT_e$ starting in position
  $p\restriction\CT_e$. If $|p\restriction\CT_e|=k-1$, we assume
  that in the first step \PI\ selects the unique pair in
  $(p\restriction\CT_e)\setminus (p^-\restriction\CT_e)$.
\item\label{gap:P} For every $e=vw\in E(\CK)$ such that $e$ does not require
  attention and $p\restriction \CT_e$
  is empty, we let the restriction of $g$ to $V(\CT_e)$ be the
  identity mapping.
\end{ealph}
It remains to define $g_v\in V(\CK)$ for $v\in V(\CK)$ that are not in
the domain of $p$. 
\begin{ealph}[resume]
\item\label{gap:Q} If $v\in V(\CK)$ is not in the domain of $p$ and there is
  some edge $e=vw$ that requires attention at $v$, then we
  define the set $S$ as above and
  let $g_v=S$.

\item\label{gap:R} If $v\in V(\CK)$ is not in the domain of $p$ and there is
  no edge that requires attention at $v$, there is an edge $e=vw\in
  E(\CK)$ such that $p\restriction\CT_e=\emptyset$. We choose a set
  $S\subseteq E(v)$ such that for all $e'=vw'$ with nonempty
  $p\restriction\CT_{e'}$ we have $f_{e',v}\in S\IFF p$ flips $e'$. By
  adding $f_{e,v}$ if necessary, we can choose $S$ such that $|S|$ has
  the right parity (even if $v\neq v_1$ and odd if $v=v_1$). We let
  $g_v=S$.
\end{ealph}
It is easy to see
that if player \PII\ selects this bijection $g$,
then regardless of which pair $a'b'$ player \PI\ selects, $p\conc
a'b'$ will be a local isomorphism and the new position $p^-\conc a'b'$
will again be good. Again proof consists of a case distinction along the
cases of the definition of the bijection $g$ in
\ref{gap:M}--\ref{gap:R}.
\end{proof}

\subsection{Boolean arithmetic and $\LL^k$-equivalence} 
\label{boolLksec}
We saw in Section~\ref{boolfracisosec} that equations, which are
direct consequences of the basic continuity and 
compatibility equations w.r.t.\ the
adjacency matrices $A$ and $B$, may carry independent weight 
in their boolean interpretation. This is no surprise, because
the boolean reading is much weaker, especially due to the absorptive 
nature of $\vee$, which unlike $+$ does not allow for inversion. 
$AX = XB$ for doubly stochastic 
$X$ and $A,B \in \B^{n,n}$ implies 
$A^c X = X B^c$.
Similarly, we found in part~(a) of Lemma~\ref{locisolem}
that the continuity equations guarantee that solutions 
are supported by local bijections,
under real arithmetic; this also fails for 
boolean arithmetic. 

We now augment the boolean requirements by corresponding
boolean equations that express 
\bae
\item
compatibility also w.r.t.\ $A^c$ and $B^c$, as in boolean 
fractional isomorphism, %
\item
the new constraint $X_p = 0$ 
whenever $p$ is not a local bijection.
\eae

In the presence of the continuity equations, 
which force monotonicity, it suffices for~(b) 
to force $X_{aa'bb'} = 0$ for 
all $a,a'\in [m]$, $b,b' \in [n]$
such that %
$a=a' \nLeftrightarrow b=b'$.
This is captured by the constraint $\MATCH2$ below.
Together with the continuity and compatibility equations,
$\MATCH2$
then implies that $X_p = 0$ unless $p$ is a local isomorphism, 
just as in the proof of part~(b) of   Lemma~\ref{locisolem}, 
also in terms of boolean arithmetic. 

\medskip
So we now use the following boolean version of the Sherali--Adams
hierarchy $\ISO[k-1]$ and its variant $\ISO[k-1/2]$ for $k\ge 2$.

\begin{center}
\framebox{\begin{minipage}{\textwidth-3em-4mm}
$\BISO[k-1]$
\[
\barr{l}
\left.
\barr{@{}l@{}} 
X_\emptyset = 1
\quad\mbox{and}
\\
\hnt
X_p  = 
\sum_{b'} X_{p\,\widehat{\ }\, ab'} %
= \sum_{a'} X_{p\,\widehat{\ }\, a'b}
\\
\hnt
\mbox{for } |p| < k, a \in [m], b \in [n]
\earr
\qquad
\right\} \quad
\CONT{\ell} \mbox{ for } \ell < k 
\\
\\
\left.
\barr{@{}l@{}}
X_{ab \,\widehat{\ }\,ab'} = 0 = X_{ab \,\widehat{\ }\,a'b}  
\\
\hnt
\mbox{for }  a \not= a' \in [m], b \not= b' \in [n] 
\earr \qquad\;\;\;\;\; \right\}
\quad\, \MATCH2
\\
\\
\left.
\barr{@{}l@{}} 
\sum_{a'} A_{aa'}  X_{p\,\widehat{\ }\, a'b} = 
\sum_{b'} X_{p\,\widehat{\ }\, ab'} B_{b'b}
\\
\hnt
\mbox{for } |p| < k-1,  a \in [m], b \in [n]
\earr
\quad\,
\right\} \quad
\COMP{\ell} \mbox{ for } \ell < k 
\\
\\
\left.
\barr{@{}l@{}} 
\sum_{a'} A^c_{aa'}  X_{p\,\widehat{\ }\, a'b} = 
\sum_{b'} X_{p\,\widehat{\ }\, ab'} B^c_{b'b}
\\
\hnt
\mbox{for } |p| < k-1,  a \in [m], b \in [n]
\earr
\quad\,
\right\} \quad
\COMP{\ell}^c \mbox{ for } \ell < k 
\earr
\]
\end{minipage}}
\end{center}

For $\BISO[k-1/2]$ we require 
$\CONT{\ell}$ for all $\ell \leq k$, i.e., also
for $\ell = k$.

\bR
The systems $\BISO[k-1]$ and $\BISO[k-1/2]$ consist of 
linear boolean equations and can be solved in polynomial time.
\eR

For this observe that the systems $\BISO[k-1]$ and $\BISO[k-1/2]$
consist of equations of the following forms:
\begin{align}
  \label{eq:be1}
  \sum_{i\in I}X_i=\sum_{j\in J}X_j,\\
  \label{eq:be2}
  \sum_{i\in I}X_i=0,\\
 \label{eq:be3}
  \sum_{i\in I}X_i=1.
\end{align}
Those of type~\eqref{eq:be2} are actually subsumed by those of 
type~\eqref{eq:be1} with $J=\emptyset$. It is an easy exercise to prove
that such systems of linear boolean equations can be solved in
polynomial time.

\medskip
The \emph{weak} $k$-pebble game is the straightforward 
adaptation of the weak bijective $k$-pebble game to the setting
without counting. A single round of the game is played as follows.
\begin{enumerate}%
\item If $|p|=k-1$, player \PI\ selects a pair $ab\in p$; 
\\
if $|p|<k-1$, this step is omitted.
\item Player \PI\ chooses an element $a'$ of $\str A$ or $b'$ of $\str B$.
\item 
Player \PII\ must respond with an element $b'$ of $\str B$ 
or $a'$ of $\str A$, respectively, such that  
$aa'$ is an edge of $\str{A}$ if, and only if, 
$bb'$ is an edge of $\str{B}$.
\item %
If $|p|<k-1$, then the new position is $p' := p \,\widehat{\ }\,a'b'$;
\\ 
if $|p|=k-1$, then the new position is $p' := 
(p\!\setminus\! ab)\,\widehat{\ }\,a'b'$. 
\end{enumerate}
$\PII$ loses if she cannot respond in step~(3) or if
the resulting position $p'$ fails to
be a local isomorphism.

\medskip
We denote weak $k$-pebble equivalence as in $\str{A}\equiv_\LL^{<k}
\str{B}$ and extend this to $\str{A},\abar \equiv_\LL^{<k} \str{B},\bbar$
for tuples $\vec a,\vec b$ of the same length $<k$.  We sketch a proof
of the following, which is a boolean analogue of the correspondences
between half-step levels of Sherali--Adams and $\LC^k$- and
$\LC^{<k}$-equivalence  established in Section~\ref{sec:sa}.

\bT
\label{boolk-1/2thm}
$\BISO[k-1/2]$ has a solution (w.r.t. boolean arithmetic) if, and only
if, $\str A\equiv_\LL^{k}\str B$.
\eT

\bT
\label{boolk-1thm} 
$\BISO[k-1]$ has a solution (w.r.t. boolean arithmetic) if, and only
if, $\str A\equiv_\LL^{<k}\str B$.
\eT

Towards the proofs of the critical directions,
viz., from solutions to equivalences mediated by pebble games,
we want to pass from given solutions to induced \emph{good}
solutions, from which strategies can be directly extracted. 
These are characterised in the boolean case by conditions 
that are analogous to those of Definition~\ref{goodtranslationdef}
for the real case; good solutions are induced 
by arbitrary solutions in a manner that is analogous to our findings
in Corollaries~\ref{translatetogoodcor} 
and~\ref{translatetogoodcompcor}, essentially through 
reductions via liftings to tuple co-ordinates.

The relevant liftings of equations to tuple co-ordinates 
are also analogous to those in the real case, now including
compatibility equations for the complements of the edge 
relations in $\str{A}$ and $\str{B}$. We leave out $\MATCH2$, 
whose lifting says that $\check{X}$ is 
supported by local bijections.
\[
\mbox{$\displaystyle%
\barr{l}
\left.%
\barr{@{}l@{}}
\sum_{\abar'} 
\check{X}_{\abar'\bbar} =
\sum_{\bbar'} 
\check{X}_{\abar\bbar'} = 1
\\
\\
\sum_{\abar'} \mathbb{I}^{\ssc
  (i)}_{\abar\abar'} \check{X}_{\abar',\bbar} 
=
\sum_{\bbar'} \check{X}_{\abar,\bbar'}\mathbb{I}^{\ssc
  (i)}_{\bbar'\bbar} 
\\
\hnt
\mbox{for all } 
\abar \in [m]^\ell, \bbar \in [n]^\ell, \mbox{ and all }i \in [\ell]
\earr
\qquad\;
\right\} 
\quad \mbox{lifting of }\CONT{\ell'}, \ell' \leq \ell 
\\
\\
\left.%
\barr{@{}l@{}}
\sum_{\abar'}%
A_{\abar,\abar'}^{\ssc (i)} \check{X}_{\abar',\bbar}
=
\sum_{\bbar'}%
\check{X}_{\abar,\bbar'} B^{\ssc (i)}_{\bbar',\bbar}
\\
\hnt
\mbox{for all } 
\abar \in [m]^\ell, \bbar \in [n]^\ell, \mbox{ and all }i \in [\ell]
\earr
\quad\;\, \quad
\right\} 
\quad \mbox{lifting of }\COMP{\ell'}, \ell'  \leq \ell
\\
\\
\left.%
\barr{@{}l@{}}
\sum_{\abar'}%
(A^c)_{\abar,\abar'}^{\ssc (i)} \check{X}_{\abar',\bbar}
=
\sum_{\bbar'}%
\check{X}_{\abar,\bbar'} (B^c)^{\ssc (i)}_{\bbar',\bbar}
\\
\hnt
\mbox{for all } 
\abar \in [m]^\ell, \bbar \in [n]^\ell, \mbox{ and all }i \in [\ell]
\earr
\quad\;\, \quad
\right\} 
\quad \mbox{lifting of }\COMP{\ell'}^c, \ell'  \leq \ell 
\earr$}%
\]

Here the matrices $(A^c)^{\ssc (i)}$ 
are the lifting of $A^c$ to tuple co-ordinates, 
just as the $A^{\ssc (i)}$ are the familiar liftings of $A$, as
introduced for Lemma~\ref{translationcompeqnslem}, 
for $i \in [\ell]$:  
\[
\textstyle
(A^c)_{\abar,\abar'}^{\ssc (i)} =
\prod_{j \not= i} \delta(a_j,a_j') \; A^c_{a_ia'_i}. 
\]

As the relationship between the edge relations 
$(A^c)^{\ssc (i)}$ and $A^{\ssc (i)}$ is not one of complementation
over $[m]^\ell$, the commutativity conditions for $\check{X}$ in 
$\COMP{\ell}$ and $\COMP{\ell}^c$ do not give rise to 
\emph{bi-stable} boolean equivalent partitions as in the boolean variant
of fractional isomorphism. Instead, the lifting to tuple co-ordinates
of a good boolean solution 
for $\COMP{\ell}$ and $\COMP{\ell}^c$ induces partitions that are 
just \emph{boolean stable} and \emph{boolean equivalent}, but 
simultaneously so, for $(A^c)^{\ssc (i)}/(B^c)^{\ssc (i)}$ and 
$A^{\ssc (i)}/B^{\ssc (i)}$.
Similarly the lifting of a boolean solution to the continuity
equations that is supported by local bijections induces good 
solutions with partitions that are just \emph{boolean stable} 
and \emph{boolean equivalent} w.r.t.\ the $\mathbb{I}^{\ssc (i)}$. It turns
out that these are precisely the conditions that support strategies for the second
player in the corresponding $k$-pebble games. 

Let us say that partitions of the vertex sets $[m]^\ell = \dot{\bigcup}_s D_s$ 
and $[n]^\ell = \dot{\bigcup}_s D_s'$ are \emph{equivalent}
and \emph{boolean stable} w.r.t.\ 
edge relations $E$ on $[m]^\ell$ and 
$E'$ on $[n]^\ell$ if for all partition
indices $s,t$, and all $\abar \in D_s, \bbar \in D_s'$,
\[
\textstyle 
\{ \abar' \in D_t \colon (\abar,\abar') \in E \} \not=\emptyset
\;\Leftrightarrow\; 
\{ \bbar' \in D_t' \colon (\bbar,\bbar') \in E' \} \not=\emptyset.
\]

We shall apply this notion to the undirected reflexive edge 
relations of the $\mathbb{I}^{\ssc (i)}$, where the condition 
becomes
\[
\textstyle
\{ a \in [m] \colon \abar\frac{a}{i} \in D_t \} \not=\emptyset
\;\Leftrightarrow\; 
\{ b \in [n] \colon \bbar\frac{b}{i} \in D_t' \} \not=\emptyset;
\]
and to the symmetric and irreflexive edge relations 
$A^{\ssc (i)}/B^{\ssc (i)}$ and $(A^c)^{\ssc (i)}/(B^c)^{\ssc (i)}$,
for which the combination of the conditions
\[
\barr{r@{\;\;\Leftrightarrow\;\;}l}
\textstyle 
\{ a \in [m] \colon A_{a_ia} = 1 \wedge
\abar\frac{a}{i} \in D_t \} \not=\emptyset
&
\{ b \in [n] \colon B_{b_ib} = 1 \wedge \bbar\frac{b}{i} \in D_t' \} \not=\emptyset,
\\
\hnt
\{ a \in [m] \colon A_{a_ia} = 0 \wedge
\abar\frac{a}{i} \in D_t \} \not=\emptyset
&
\{ b \in [n] \colon B_{b_ib} = 0 \wedge \bbar\frac{b}{i} \in D_t' \} \not=\emptyset
\earr
\]
becomes the natural lifting of bi-stability and equivalence
conditions for boolean fractional isomorphism.

\bD
\label{goodtranslationbooldef}
A boolean solution $X = (X_p)_{|p| \leq \ell}$ 
to $\CONT{\ell'}$ for $\ell' \leq \ell$ is \emph{good} if its lifting 
$\check{X} = (\check{X}_{\abar,\bbar})$
to tuple co-ordinates is a good boolean solution 
to the equations $\comp[\mathbb{I}^{\ssc (i)}(m,\ell),
\mathbb{I}^{\ssc (i)}(n,\ell)]$ (simultaneously for all $i \in [\ell]$)
in the sense of Definition~\ref{goodsoldef}.
\eD

We obtain the analogue of Corollary~\ref{translatetogoodcor},
which can also be proved in complete analogy, simply by specialisation
to boolean arithmetic, which can then be used to prove the critical 
direction in Theorem~\ref{boolk-1/2thm}.

\bL
\label{translatebooltogoodlem} 
Any boolean solution $X = (X_p)_{|p| \leq \ell}$ to %
$\CONT{\ell'}$ for $\ell' \leq \ell$ 
that is supported by local isomorphisms between 
$\str{A}$ and $\str{B}$, induces a solution 
$X' = (X'_p)_{|p| \leq \ell}$ that is \emph{good} in the sense of
Definition~\ref{goodtranslationbooldef}, and is non-zero where $X$ is.
\eL

Now we are ready to prove the theorem.

\prf[Proof of Theorem~\ref{boolk-1/2thm}]
We start with the implication from right to left. %
If
$\str{A} \equiv_\LL^k \str{B}$ we put, for $p = \abar \bbar$ of size $|p| \leq k$,
\[
X_p := \left\{ \barr{ll}
1 & \mbox{if } \str{A},\abar \equiv_\LL^k \str{B},\bbar,
\\
\hnt
0 &\mbox{else.}
\earr\right.
\]

Clearly this assignment satisfies $\MATCH2$, and one  
easily checks that it also satisfies 
the boolean continuity equations $\CONT{\ell}$ for $\ell
\leq k$. For the boolean compatibility 
equations $\COMP{\ell}$ and $\COMP{\ell}^c$
for $\ell < k$, let us check, for instance, an equation  
$\COMP{k-1}$. The non-trivial case is that of
$p = \abar\bbar$ where $\abar \in [n]^{k-2}$ and 
$\bbar \in [m]^{k-2}$ are such that $\str{A},\abar \equiv_\LL^k
\str{B},\bbar$ so that $X_p =1$. 
Consider the instance of equation $\CONT{k-1}$ for $a\in[m],b\in[n]$:
\[
\sum_{a'} A_{aa'}  X_{p\,\widehat{\ }\, a'b} = 
\sum_{b'} X_{p\,\widehat{\ }\, ab'} B_{b'b}.
\]
Since $\str{A},\abar \equiv_\LL^k \str{B},\bbar$, 
there is a $\hat{b} \in [m]$ such that
$\str{A}, \abar a  \equiv_\LL^k \str{B},\bbar \hat{b}$. 
Suppose the left-hand side of the equation evaluates
to $1$. This means that there is an edge in $\str{A}$ from 
$a$ to some $a'$ for which 
$X_{p\,\widehat{\ }\,a'b} = 1$, i.e., for which 
$\str{A},\abar a' \equiv_\LL^k \str{B},\bbar b$.

In other words, there is an edge in $\str{A}$ from some 
$a'$ for which $\str{A},\abar a' \equiv_\LL^k \str{B},\bbar b$
to some $a$ for which $\str{A},\abar a \equiv_\LL^k  \str{B},\bbar \hat{b}$. 
So every realisation of the $\LL^k$-type of 
$\str{B},\bbar b$ has an edge between $b$ and some $b'$ where 
$\str{B},\bbar b' \equiv_\LL^k \str{A},\abar a$, which implies that the
right-hand side of the equation evaluates to $1$, too. 

For the implication from left to right in part~(b)
we extract a strategy for player \PII\ in the $k$-pebble game 
from a good solution as provided in Lemma~\ref{translatebooltogoodlem}. 
As discussed above, any
good solution is supported by local isomorphisms, whence it suffices
for \PII\ to  maintain the condition that $\check{X}_{\abar,\bbar} =
1$, or, equivalently, that the pebbled tuples are
in matching partition sets. This can be achieved for rounds played in
the $j$-th component, because the partitions induced by the lifting of 
the good solution are boolean equivalent and stable w.r.t.\ 
the edge relation $\mathbb{I}^{\ssc (j)}$.
\eprf

We turn to $\BISO[k-1]$ and the situation of Theorem~\ref{boolk-1thm}, 
where the higher levels of the compatibility equations matter. 
For the analogue of Corollary~\ref{translatetogoodcompcor}
we also need to reason that any solution to $\BISO(k-1)$ is 
supported by local
isomorphisms. Support by local bijections is clear, because 
that is explicitly demanded by $\MATCH2$. For the strengthening to
local isomorphy, we may reason on the basis of $\COMP{2}$ exactly as for
Lemma~\ref{locisolem}~(b).
The rest of the proof of the lemma is again strictly analogous 
to the argument for Corollary~\ref{translatetogoodcompcor}.

\bL
\label{translatebooltogoodcomplem} 
Let $k \geq 3$. Any boolean solution $X = (X_p)_{|p| < k}$ 
to $\BISO(k-1)$ induces a boolean solution  
$X' = (X_p')_{|p| < k}$ to $\BISO(k-1)$ that is supported by
local isomorphisms 
and is \emph{good} in the sense that its lifting $\check{X}'$ 
to tuple co-ordinates is a good simultaneous boolean solution  
to the following equations for all $i < k$:
\[
\comp[\mathbb{I}^{\ssc (i)}(m,k-1),\mathbb{I}^{\ssc (i)}(n,k-1)],
\comp[A^{\ssc (i)},B^{\ssc (i)}],
\comp[(A^c)^{\ssc (i)},(B^c)^{\ssc (i)}], 
\]
and thus
induces $\check{X}'$-related partitions of 
$[m]^{k-1} = \dot{\bigcup}_s D_s$ and 
$[n]^{k-1} =  \dot{\bigcup}_s D_s'$ such that
\bre
\item 
these partitions are boolean equivalent and stable 
w.r.t.\ the edge relations of the $\mathbb{I}^{\ssc (i)}$;
\item
these partitions are boolean equivalent and stable 
w.r.t.\ the liftings of the edge relations 
$A^{\ssc (i)}/B^{\ssc (i)}$  as well as 
$(A^c)^{\ssc (i)}/(B^c)^{\ssc (i)}$; %
\item
$\check{X}'_{D_sD_s'} = 1$ for all $s$;
\item
$\check{X}_{D_sD_t}' = 0$ for all $s\not= t$.
\ere
In particular, $\abar \mapsto \bbar$ is a local isomorphism between 
$\str{A}$ and $\str{B}$ for $\abar$ and $\bbar$ from matching
partition sets. Moreover, $X'$ is non-zero where the given $X$ is.
\eL

We are now ready to prove Theorem~\ref{boolk-1thm}.

\prf[Proof of Theorem~\ref{boolk-1thm}]
For the direction from right to left, suppose that $\str{A} \equiv_\LL^{<k}
\str{B}$ and let
$X_\emptyset := 1$ and, for $\abar \in [n]^{k-1}$ and $\bbar \in [m]^{k-1}$,
$X_{\abar \bbar} := 1$ iff $\str{A},\abar \equiv_\LL^{<k} \str{B},\bbar$.
It is clear that this assignment satisfies $\MATCH2$ and 
$\CONT{\ell}$ for $\ell < k$.
Consider then an instance of $\COMP{\ell}$ for $\ell < k$,
\begin{equation}
  \label{eq:bcompl}
\sum_{a'} A_{aa'}  X_{p\,\widehat{\ }\, a'b} = 
\sum_{b'} X_{p\,\widehat{\ }\, ab'} B_{b'b}, 
\end{equation}
where $a \in [n], b \in [m]$, $|p| <  k-1$,
$p = \abar \bbar$. Let us assume that $|p|=k-2$; this is the most
difficult case. If $X_p=0$ then $X_{p\,\widehat{\ }\,a'b'}=0$ for all
$a'b'$, and thus equation \eqref{eq:bcompl} is trivially satisfied. So assume
$X_p=1$, that is,
$\str{A},\abar \equiv_\LL^{<k} \str{B},\bbar$. 
Suppose for instance that the left-hand side of equation \eqref{eq:bcompl}
evaluates to $1$, i.e., that there is some $a'$ adjacent to $a$ in
$\str A$ for which  
$\str{A}, \abar a' \equiv_\LL^{<k} \str{B},\bbar b$. Consider the weak
$k$-pebble game in position $\vec aa'\vec bb$. Assume player \PI\
selects the pair $a'b$ in the first step of the next round and selects
$a$ in the second step. Let $b'$ be the answer of \PII\ when she
plays according to her winning strategy. Then $\vec aa'a\mapsto\vec
bbb'$ is a local isomorphism and the new position $\vec aa\vec bb'$
is a winning position for player \PII, that is, $\str A,\vec
aa\equiv_\LL^{<k}\str B,\vec bb'$. Since $\vec aa'a\mapsto\vec
bbb'$ is a local isomorphism and $aa'$ is an edge of $\str A$, the
pair $bb'$ is an edge of $\str B$ and thus $B_{b'b}=1$. Since $\str A,\vec
aa\equiv_\LL^{<k}\str B,\vec bb'$, we have $X_{p\widehat{\
  }ab'}=1$. Thus the right-hand side of 
equation \eqref{eq:bcompl} evaluates to $1$ as well.

For the direction from left to right
we extract a strategy for \PII\ in the weak 
$k$-pebble game from a good solution $X$ and its lifting 
$\check{X}$ as provided in Lemma~\ref{translatebooltogoodcomplem}. Again, the 
strategy for \PII\ is to  maintain the condition that $\check{X}_{\abar,\bbar} =
1$. In addition, in a round played on the $j$-th component, 
the old and new positions of the $j$-th pebble 
must be linked by an edge in $\str{A}$ if, and only if, they are
linked by an edge in $\str{B}$. This can be achieved  
because the partitions induced by the lifting of 
the good solution are simultaneously boolean equivalent and stable w.r.t.\ 
$A^{\ssc (j)}/B^{\ssc (j)}$ and $(A^c)^{\ssc (j)}/(B^c)^{\ssc (j)}$.
\eprf

\bR
\label{seprem}
For all $k \geq 3$, 
$\equiv_\LL^{k-1}$, $\equiv_\LL^{<k}$, $\equiv_\LL^{k}$ form a strictly
increasing hierarchy of discriminating power.
\eR

\prf
The examples for the gaps between $\equiv_\LC^{k-1}$, 
$\equiv_\LC^{<k}$, $\equiv_\LC^{k}$ given in Section~\ref{gapsec}, are in
fact good in the setting without counting. The strategy 
analysis given there does not involve counting in any 
non-trivial manner.
\eprf

\paragraph*{Acknowledgements}
We are most grateful to Albert Atserias for his valuable comments 
on an earlier draft of this exposition. 
\\
The second author gratefully acknowledges the academic hospitality 
in the first author's group at HU Berlin during his sabbatical in 2011/12.



\end{document}